\newline\textbf{BEGIN: AUX-PROOF}\dotfill\newline}
\newline\textbf{END: AUX-PROOF}\dotfill\newline}
\newcommand{\rloop}[2][-]{\save \POS!R(.7) \ar@(ru,rd)^#1{#2} \restore}
\newcommand{\lloop}[2][-]{\save \POS!L(.7) \ar@(lu,ld)_#1{#2} \restore}
\newcommand{\uloop}[2][-]{\save \POS!U(.7) \ar@(lu,ru)^(.8){#2} \restore}
\newcommand\defeq{\stackrel{\text{\tiny def}}{=}}
\newcommand{\Nat}{\mathbb{N}}
\newcommand{\states}{\mathcal{S}}
\newcommand{\tr}[1]{\stackrel{#1}{\to}}
\newcommand{\trz}[2]{{\stackrel{#1}{\to}}_{#2}}
\newcommand{\Htrz}[3]{{\stackrel{#1}{\to}}_{#2}^{#3}}
\newcommand{\ttr}[1]{\stackrel{#1}{\to^*}}
\newcommand{\ttp}[1]{\stackrel{#1}{\to^+}}
\newcommand{\myparagraph}[1]{\vspace*{.3em}\noindent\textbf{#1}\;}
\newcommand{\cvec}[4]{\!{\tiny{ \!\begin{bmatrix} #1 \\ #2 \\ #3 \\#4 \end{bmatrix}\!}}\!}
\def\APDR{{\texttt{AdjointPDR}}\xspace}
\def\ADPDR{{\texttt{AdjointPDR}$^\downarrow$}\xspace}
\newcommand{\negation}[1]{\vec{neg(#1)}}
\newif\ifdraft\draftfalse
\newcommand\kori[1]{\textcolor{blue}{#1}}
\newcommand\korit[1]{\todo[color=green!40]{#1 --kori}}
\newcommand\todorb[1]{\todo[color=blue!40]{RB: #1}}
\newcommand{\conf}[1]{}
\newcommand{\todoil}[1]{\todo[inline,caption={}]{#1}}
\newcommand\kori[1]{#1}
\newcommand\korit[1]{}
\newcommand\todorb[1]{}
\newcommand{\conf}[1]{}
\newcommand{\todoil}[1]{}
\definecolor{dkblue}{rgb}{0,0.1,0.6}
\definecolor{dkgreen}{rgb}{0,0.35,0}
\definecolor{dkviolet}{rgb}{0.3,0,0.5}
\definecolor{dkred}{rgb}{0.5,0,0}
\lstdefinelanguage{NT}{
mathescape=true,
%
texcl=false,
morekeywords=[1]{while, do, if, then, else, for, all},
morekeywords=[2]{return, continue, goto},
%
morecomment=[s]{/*}{*/},
%
showstringspaces=false,
%
morestring=[b]",
morestring=[d],
%
tabsize=3,
%
extendedchars=false,
%
sensitive=true,
%
breaklines=false,
%
basicstyle=\ttfamily,
%
captionpos=b,
%
columns=[l]fixed,
%
identifierstyle={\color{black}},
keywordstyle=[1]{\color{dkviolet}},
keywordstyle=[2]{\color{dkred}},
stringstyle=\ttfamily,
commentstyle={\ttfamily\color{dkblue}},
%
literate=
    {true}{{{\color{dkgreen}$true$}}}3
    {false}{{{\color{dkgreen}$false$}}}4
    {choose}{{{\color{dkred}choose}}}6
    {st}{{{\color{dkred}such that}}}{9}
    {And}{{{\color{dkgreen}$and$}}}4
     {case}{{{\color{dkviolet}case}}}4
     {of}{{{\color{dkviolet}of}}}3
    {endcase}{{{\color{dkviolet}endcase}}}3
}[keywords,comments,strings]
	\def\@citecolor{blue}%
	\def\@urlcolor{blue}%
	\def\@linkcolor{blue}%
	\def\orcidID#1{\smash{\href{http://orcid.org/#1}{\protect\raisebox{-1.25pt}{\protect\includegraphics{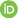}}}}}
\newsavebox\lstbox
\begin{document}
\title{Exploiting Adjoints in Property Directed Reachability Analysis
	\thanks{Research supported
    by MIUR PRIN Project 201784YSZ5 \emph{ASPRA},
    by JST ERATO HASUO Metamathematics for Systems Design Project JPMJER1603,
    by JST CREST Grant JPMJCR2012,
    by JSPS DC KAKENHI Grant 22J21742 and
    by EU Next-GenerationEU (NRRP) SPOKE~10, Mission~4, Component~2, Investment N.~1.4, CUP N.~I53C22000690001.
}}
\titlerunning{Exploiting Adjoints in PDR}
\author{Mayuko Kori\inst{1,2}\orcidID{0000-0002-8495-5925}
\and Flavio Ascari\inst{3}\orcidID{0000-0003-4624-9752}
\and Filippo Bonchi\inst{3}\orcidID{0000-0002-3433-723X}
\and Roberto Bruni\inst{3}\orcidID{0000-0002-7771-4154}
\and Roberta Gori\inst{3}\orcidID{0000-0002-7424-9576}
\and Ichiro Hasuo\inst{1,2}\orcidID{0000-0002-8300-4650}
 }
\authorrunning{M.Kori  et al.}
\institute{National Institute of Informatics, Tokyo, Japan
\and
The Graduate University for Advanced Studies\\ (SOKENDAI), Hayama, Japan\\
\email{\{mkori,hasuo\}@nii.ac.jp} \and%
Dipartimento di Informatica, Universit\`a di Pisa, Pisa, Italy
\email{flavio.ascari@phd.unipi.it} \email{\{filippo.bonchi,roberto.bruni,roberta.gori\}@unipi.it}
}
\maketitle              %

\begin{abstract}
	We formulate, in lattice-theoretic terms, two novel algorithms inspired by Bradley's property directed reachability algorithm.
	For finding safe invariants or counterexamples, the first algorithm exploits over-approximations of both forward and backward transition relations, expressed abstractly by the notion of adjoints.
	In the absence of adjoints, one can use the second algorithm, which exploits lower sets and their principals.
	As a notable example of application, we consider quantitative reachability problems for Markov Decision Processes.
	\keywords{PDR  \and Lattice theory \and Adjoints \and MDPs \and Over-approximation.}
\end{abstract}

\section{Introduction}\label{sec:intro}

\emph{Property directed reachability analysis} (PDR)
refers to a class of verification algorithms for solving safety problems of transition systems~\cite{Bradley11,EenMB11}.
Its essence %
consists of 1) interleaving the construction of an \emph{inductive invariant} (a \emph{positive chain}) with that of a \emph{counterexample} (a \emph{negative sequence}), and 2)  making the two sequences \emph{interact}, with one  narrowing down the search space for the other.

PDR algorithms have shown impressive performance both in hardware and software verification, leading to active research~\cite{SeufertS18,SeufertS19,Gurfinkel2015IC3PA,HoderB12}  going far beyond its original scope.
For instance, an abstract domain~\cite{cousot21} capturing the over-approximation exploited by  PDR has been recently introduced in~\cite{DBLP:journals/pacmpl/FeldmanSSW22}, while PrIC3~\cite{BatzJKKMS20} extended PDR for quantitative verification of probabilistic systems.

To uncover the abstract principles behind PDR and its extensions, Kori et al. proposed LT-PDR~\cite{KoriCAV22}, a generalisation of PDR in terms of lattice/category theory.
LT-PDR can be instantiated using domain-specific \emph{heuristics} to create effective algorithms for different kinds of systems such as Kripke structures, Markov Decision Processes (MDPs), and Markov reward models.
However, the theory in~\cite{KoriCAV22} does not offer guidance on devising concrete heuristics.

\myparagraph{Adjoints in PDR.}
Our approach shares the same vision of LT-PDR, but we identify different principles: \emph{adjunctions} are the core of our toolset.

\begin{wrapfigure}[3]{r}{0pt}
\begin{minipage}{.16\textwidth}

 \vspace*{-2,5em}
   \begin{math}
    \xymatrix{
      A \ar@/_1.5ex/[r]_-{g}^-\bot &C\ar@/_1.5ex/[l]_-{f}
    }
  \end{math}
\end{minipage}
\end{wrapfigure}
An adjunction $f \dashv g$ is one of the central concepts in category theory~\cite{maclane:71}.
It is prevalent  in various fields of computer science, too, such as
abstract interpretation~\cite{cousot21} and functional programming~\cite{Levy2004}. Our use of adjoints in this work comes in the following two flavours.
\begin{itemize}
 \item (forward-backward adjoint) $f$ describes the \emph{forward semantics} of a transition system, while $g$ is the \emph{backward} one, where we typically have  $A=C$. %
 \item (abstraction-concretization adjoint) $C$ is a concrete semantic domain, and $A$ is an abstract one, much like in abstract interpretation. An adjoint enables us to convert a fixed-point problem in $C$ to that in $A$.
\end{itemize}

\myparagraph{Our Algorithms.}
The problem we address is the standard lattice theoretical formulation of safety problems, namely whether the least fixed point of a continuous map $b$ over a complete lattice $(L,\sqsubseteq)$ is below a given element $p\in L$. In symbols $\mu b\sqsubseteq_{?} p$. We present two algorithms. 

\begin{wrapfigure}[3]{r}{0pt}
\begin{minipage}{.16\textwidth}
 \vspace*{-2,5em}
   \begin{math}
    \xymatrix{
      L \ar@/_1.5ex/[r]_-{g}^-\bot &L\ar@/_1.5ex/[l]_-{f}
    }
  \end{math}
\end{minipage}
\end{wrapfigure}
The first one, named
{\APDR}, assumes to have an element $i\in L$ and two adjoints $f \dashv g\colon L \to L$, representing respectively initial states, forward semantics and backward semantics (see right) such that $b(x)=f(x)\sqcup i$ for all $x\in L$.
Under this assumption, we have the following equivalences (they follow from the Knaster-Tarski theorem, see~\S{}\ref{sec:preliminaries}):
$$\mu b\sqsubseteq p
\quad \Leftrightarrow\quad
\mu (f\sqcup i)\sqsubseteq p
\quad \Leftrightarrow\quad
i \sqsubseteq \nu (g \sqcap p),$$ 

\noindent
where $\mu (f\sqcup i)$  and $\nu (g \sqcap p)$ are, by the Kleene theorem, the limits of the \emph{initial} and \emph{final} chains illustrated below.
\[\bot \sqsubseteq i \sqsubseteq f(i)\sqcup i \sqsubseteq  \cdots \qquad\qquad \qquad  \cdots \sqsubseteq g(p)\sqcap p \sqsubseteq p \sqsubseteq \top\]
As positive chain, PDR exploits an over-approximation of the initial chain: it is made greater to accelerate convergence; still it has to be below $p$.

The distinguishing feature of {\APDR} is to take as a negative sequence (that is a sequential construction of potential counterexamples) an over\hyp{}approximation of the final chain. This crucially differs from the negative sequence of LT-PDR, namely an under-approximation of the computed positive chain.

We %
prove that \APDR{} is sound (Theorem \ref{th:soundness}) and does not loop (Proposition \ref{prop:progres}) but since, the problem $\mu b \sqsubseteq_? p$
is not always decidable, we cannot prove termination. Nevertheless, {\APDR} allows for a formal theory of heuristics that are essential when instantiating the algorithm to concrete problems. %
The theory  prescribes the choices to obtain the boundary executions, using initial and final chains (Proposition \ref{prop:negativesequencefinalchain}); it thus
identifies a class of heuristics guaranteeing termination when answers are negative (Theorem \ref{thm:negativetermination}).

\APDR's assumption of a forward-backward adjoint $f \dashv g$, however, does not hold very often, especially in probabilistic settings. Our second algorithm \ADPDR{} circumvents this problem by extending the lattice for the negative sequence, from $L$ to the lattice $L^{\downarrow}$ of \emph{lower sets} in $L$. 

\begin{wrapfigure}[4]{r}{0pt}
\begin{minipage}{.35\textwidth}

 \vspace*{-2em}
   \begin{math}
    \xymatrix{
     L \lloop{b} \ar@/_1.5ex/[r]_-{(-)^\downarrow}^-\bot
      &L^\downarrow \rloop{b^\downarrow
\, \dashv\, b^\downarrow_r
} \ar@/_1.5ex/[l]_-{\bigsqcup}
    }
  \end{math}
\end{minipage}
\end{wrapfigure}
 Specifically,
 by using the second form of adjoints, namely an abstraction-concretization pair,
 the problem $\mu b \sqsubseteq_{?} p$ in $L$ can be translated to an equivalent problem on $b^{\downarrow}$ in $L^\downarrow$, for which an adjoint $b^\downarrow
 \dashv b^\downarrow_r$ is guaranteed. This allows one to run \APDR{} in the lattice $L^\downarrow$. We then notice that the search for a positive chain can be conveniently restricted to principals in $L^\downarrow$, which have representatives in $L$. The resulting algorithm, using $L$ for positive chains and $L^\downarrow$ for negative sequences, is \ADPDR.

The use of lower sets for the negative sequence is a key advantage.
It not only avoids the restrictive assumption on forward-backward adjoints $f\dashv g$, but also enables a more thorough search for counterexamples. {\ADPDR} can simulate step-by-step LT-PDR (Theorem \ref{th:LT-PDR-instance-ADPDR}), while the reverse is not possible due to a single negative sequence in {\ADPDR} potentially representing multiple (Proposition~\ref{prop:multipleLTPDR}) or even all (Proposition~\ref{prop:LTPDRfinal}) negative sequences in LT-PDR.

\myparagraph{Concrete Instances.} Our lattice-theoretic algorithms yield many concrete instances: the original IC3/PDR~\cite{Bradley11,EenMB11} as well as 
Reverse PDR~\cite{SeufertS17} are instances of \APDR{} with $L$
being the powerset of the state space; 
since LT-PDR can be simulated by \ADPDR{}, the latter generalizes all instances in~\cite{KoriCAV22}. %

As a notable instance, we apply \ADPDR to MDPs, specifically to decide if the maximum reachability probability \cite{BaierK} is below a given threshold. Here the lattice $L=[0,1]^S$ is that of fuzzy predicates over the state space $S$.
Our theory provides guidance to devise two heuristics, for which we  prove negative termination (Corollary \ref{cor:ADPDRtermination}).
We present its implementation  in Haskell, and its experimental evaluation, where comparison is made against existing probabilistic PDR algorithms (PrIC3~\cite{BatzJKKMS20}, LT-PDR~\cite{KoriCAV22}) and a non-PDR one (Storm~\cite{DehnertJK017}). The performance of \ADPDR is encouraging---it supports the potential of PDR algorithms in probabilistic model checking. The experiments also indicate the importance of having a variety of heuristics, and thus the value of our adjoint framework that helps coming up with those.

Additionally, we found that abstraction features of Haskell allows us to code lattice-theoretic algorithms almost literally ($\sim$100 lines). Implementing a few heuristics takes another $\sim$240 lines. This way, we found that mathematical abstraction can directly help easing implementation effort.

\begin{auxproof}

\subsection*{draft}
\myparagraph{Property Directed Reachability.}
Property Directed Reachability Analysis, introduced in \cite{Bradley11,EenMB11} and shortly called PDR,
is a model checking algorithm solving safety problems of transition systems, e.g.~whether reachable states are always in safe states.
\todo[noinline]{TODO}
Although a naive algorithm first calculate reachable states and then compare them to safe states,
as the name of ``Property Directed'', PDR integrates these calculation.
In detail, they tries to over-approximate  reachable states as it is under safe states.
PDR is known for its performance both in hardware and software verification.

There are several studies that extends PDR to target many kinds of systems.
Batz et al.~\cite{BatzJKKMS20} suggested PrIC3, which is a PDR algorithm for Markov Decision Processes.
It is a first attempt to extend PDR for quantitative systems as far as we know.
Kori et al.~\cite{KoriCAV22} suggested LT-PDR, and they further extends the area of target systems
to general ones written in lattice theory including MDPs and Markov Reward Models.
Instead of generality in LT-PDR, the algorithm has some parts called heuristics that users need to define.
Though the choice of heuristics heavily affects the performance,
in their paper they didn't clarify how to define better heuristics and left it as a future work.

\todo[noinline]{position}
\begin{wrapfigure}[4]{r}{0pt}
  \vspace{-30pt}
  \begin{math}
    \xymatrix{
      A \ar@/_1.5ex/[r]_-{g}^-\bot &C\ar@/_1.5ex/[l]_-{f}
    }
  \end{math}
  \vspace{30pt}
\end{wrapfigure}
\myparagraph{Adjoint.}
Adjoint $f \dashv g$ is one of the central concepts in category theory.
It's prevalent also in computer science especially when using lattice theory.
For example, in abstract interpretation
a concrete domain $C$ and an abstract domain $A$ are connected by an adjoint as shown on the right.
This adjoint enables us to convert fixed-point problems from $C$ to $A$ so that we can easily calculate fixed-points.
Another is an adjoint between forward semantics $f$ and backward semantics $g$ of transition systems; we used it in LT-PDR.

\myparagraph{Our algorithm: \APDR.}
In this paper we exploit various adjoint structures in LT-PDR,
and propose our algorithms called \APDR and \ADPDR.
Let us first see the base algorithm \APDR
and its variant \ADPDR solving wider range of problems.

\ADPDR solves $\mu b \sqsubseteq p$ in a lattice $L$ with $b(x) = f(x) \sqcup i$ with $f \dashv g$.
It is a generalization of safety problems for transition systems.
Intuitively, $\mu b \sqsubseteq p$ in $L$ means whether all reachable states $\mu b$ are in safe states $p$.
It yields safety problems for several systems by changing the lattice $L$.

\todo[inline]{(Ichiro) My impression is that you speak of LT-PDR too much. Instead of ``we can, LT-PDR can't,'' you can just say ``we can.'' Comparison with LT-PDR should be on a much higher level (heuristics are left open in LT-PDR; adjoints suggest concrete heuristics here).}
It yields two equivalent safety problems $\mu (f \sqcup i) \sqsubseteq p$ and $i \sqsubseteq \nu (g \sqcap p)$.
\APDR successfully utilizes both of them
though LT-PDR do only the former one.
LT-PDR and \APDR look similar and they have two sequences for positive and negative side respectively.
However, thanks to the existence of $i \sqsubseteq \nu (g \sqcap p)$,
in \APDR we can use all potential counterexamples made from $\nu (g \sqcap p)$ as the sequence for negative side
and get heuristics with a termination property for negative side
though it's unclear for LT-PDR.

In the paper~\cite{KoriCAV22},
we pointed out one crucial problem of the adjointness assumption on $b$:
there are many cases including MDPs and MRMs that $b$ does not satisfy the assumption.
In this paper we overcome the problem by using adjoints with lower sets.
We find that
for any lattice $L$ and $b: L \to L$,
lower sets of $L$ consists a complete lattice $L^\downarrow$,
and along the downward closure $(-)^\downarrow \colon L \to L^\downarrow$, which is a right adjoint,
we can transform the problem $\mu b \sqsubseteq p$ in $L$ into an equivalent one in $L^\downarrow$.
This conversion gives us another algorithm \ADPDR, which is a special variant of \APDR using lower sets.
As the above discussion, it is applicable to any problem $\mu b \sqsubseteq p$ in $L$.

\todo[inline]{Speak about the comparison with existing IC3/PDR algorithms.}
\APDR generalizes IC3/PDR~\cite{Bradley11,EenMB11} by letting $L\coloneqq \mathcal{P}S$.
\ADPDR simulates LT-PDR. (One can see that \APDR deals with multiple Kleene sequences at once, see \S{}\ref{ssec:LTPDRvsADPDR})

\myparagraph{Concrete Instances.}
As an  example, letting $L=[0, 1]^S$,
we instantiate \ADPDR to solve max reachability problems with thresholds? for Markov Decision Processes.
Exploiting adjoints to identify the heuristics,
we introduce several heuristics with a termination property.
We implement \ADPDR with these heuristics in Haskell
and assess the performance by comparing it to PrIC3, LT-PDR, and Storm.

\end{auxproof}

\myparagraph{Related Work.}%
Reverse PDR~\cite{SeufertS17} applies PDR from unsafe states using a backward transition relation $\mathbf{T}$ and tries to prove that initial states are unreachable.
Our right adjoint $g$ is also backward, but it differs from $\mathbf{T}$ in the presence of nondeterminism: roughly, $\mathbf{T}(X)$ is the set of states which \emph{can} reach $X$ in one step,
while $g(X)$ are states which \emph{only} reach $X$ in one step.
\textit{fb}PDR~\cite{SeufertS18,SeufertS19} runs PDR and Reverse PDR in parallel with shared information.
Our work uses both forward and backward directions (the pair $f\dashv g$), too, but approximate differently: Reverse PDR over-approximates the set of states that can reach an unsafe state, while we over-approximate the set of states that only reach safe states.

The comparison with  LT-PDR~\cite{KoriCAV22} is extensively discussed in Section~\ref{ssec:LTPDRvsADPDR}.
PrIC3~\cite{BatzJKKMS20} extended PDR to MDPs, which are our main experimental ground: Section~\ref{sec:experiments} compares the performances of PrIC3, LT-PDR and {\ADPDR}.

We remark that PDR has been applied to other settings,  such as software model checking using theories and SMT-solvers~\cite{CimattiG12,LangeNNK20} or automated planning~\cite{Suda14}.
Most of them (e.g., software model checking) fall already in the generality of LT-PDR and thus they can be embedded in our framework.

It is also worth to mention that, in the context of abstract interpretation, the use of adjoints to construct initial and final chains and exploit the interaction between their approximations has been investigated in several works, e.g.,~\cite{DBLP:conf/sara/Cousot00}.

\myparagraph{Structure of the paper.}
After recalling some preliminaries in Section~\ref{sec:preliminaries}, we present {\APDR} in Section~\ref{sec:APDR} and {\ADPDR} in Section~\ref{sec:downset}. In Section~\ref{sec:MDP} we introduce the heuristics for the max reachability problems of MDPs, that are experimentally tested in Section~\ref{sec:experiments}. 

\section{Preliminaries and Notation}\label{sec:preliminaries}

We assume that the reader is familiar with lattice theory, see, e.g.,~\cite{DBLP:books/daglib/0023601}.
We use $(L,\sqsubseteq)$, $(L_1,\sqsubseteq_1)$, $(L_2,\sqsubseteq_2)$ to range over complete lattices and $x,y,z$ to range over their elements. We omit subscripts and order relations whenever clear from the context. As usual, $\bigsqcup$ and $\bigsqcap$ denote  least upper bound and  greatest lower bound, $\sqcup$ and $\sqcap$ denote join and meet, $\top$ and $\bot$ top and bottom.
Hereafter we will tacitly assume that all maps are monotone.
Obviously, the identity map $id\colon L\to L$ and the composition $f\comp g \colon L_1\to L_3$ of two monotone maps $g\colon L_1\to L_2$ and  $f\colon L_2\to L_3$ are monotone. For a map $f\colon L \to L$, we inductively define $f^0=id$ and $f^{n+1}=f\comp f^n$. Given $l \colon L_1 \to L_2$ and $r\colon L_2\to L_1$, we say that $l$ is the \emph{left adjoint} of $r$, or equivalently that $r$ is the \emph{right adjoint} of $l$, written $l\dashv r$, when it holds that $l(x)\sqsubseteq_2 y$ if{}f $x \sqsubseteq_1 r(y)$ for all $x\in L_1$ and $y\in L_2$.
Given a map $f\colon L\to L$, the element $x\in L$ is a \emph{post-fixed point}
if{}f  $x\sqsubseteq f(x)$, a \emph{pre-fixed point} if{}f $f(x)\sqsubseteq x$ and a \emph{fixed point} if{}f $x=f(x)$. Pre, post and fixed points form complete lattices: we write $\mu f$ and $\nu f$ for the least and greatest fixed point.

Several problems relevant to computer science can be reduced to check if $\mu b \sqsubseteq p$
for a monotone map $b\colon L \to L$ on a complete lattice $L$.
The Knaster-Tarski fixed-point theorem characterises $\mu b$ as the least upper bound of all pre-fixed points of $b$ and $\nu b$ as the
greatest lower bound of all its post-fixed points:
\begin{equation*}\label{eq:KNfpthm}
\mu b= \bigsqcap \{ x  \mid b(x) \sqsubseteq x \} \qquad \qquad \nu b= \bigsqcup \{ x  \mid x \sqsubseteq b(x) \}\enspace .
\end{equation*}
This immediately leads to two proof principles, illustrated below:
\begin{equation}\label{eq:coinductionproofprinciple}
\begin{array}{c}
    \exists x, \;  b(x) \sqsubseteq x \sqsubseteq p \\
    \hline
 \mu b\sqsubseteq p
\end{array}
\qquad
\qquad
\begin{array}{c}
    \exists x, \; i \sqsubseteq x\sqsubseteq b(x)\\
    \hline
    i \sqsubseteq \nu b
\end{array} \tag{KT}
\end{equation}
By means of \eqref{eq:coinductionproofprinciple}, one can prove $\mu b \sqsubseteq p$ by finding some pre-fixed point $x$, often called \emph{invariant}, such that $x \sqsubseteq p$.
However, automatically finding invariants might be rather complicated, so most of the algorithms rely on another fixed-point theorem, usually attributed to Kleene. %
It characterises $\mu b$ and $\nu b$ as the least upper bound and the greatest lower bound, of the \emph{initial} and \emph{final chains}:
\begin{align*}%
&\bot \sqsubseteq b(\bot) \sqsubseteq b^2(\bot) \sqsubseteq \cdots \quad \text{and}\quad
\cdots \sqsubseteq b^2(\top) \sqsubseteq b(\top) \sqsubseteq \top. \quad\text{That is,}
\\
& \mu b = \bigsqcup_{n\in \Nat} b^n(\bot), \qquad \qquad \nu b = \bigsqcap_{n\in \Nat} b^n(\top). 
\tag{Kl}
\label{eq:Kleenefpthm}
\end{align*}
The assumptions are stronger than for Knaster-Tarski: for the leftmost statement, it requires the map $b$ to be \emph{$\omega$-continuous} (i.e., it preserves $\bigsqcup$ of $\omega$-chains) and, for the rightmost  \emph{$\omega$-co-continuous} (similar but for  $\bigsqcap$). Observe that every left adjoint is continuous 
and every right adjoint is co-continuous (see e.g.~\cite{maclane:71}).

As explained in~\cite{KoriCAV22}, property directed reachability (PDR) algorithms~\cite{Bradley11} exploits \eqref{eq:coinductionproofprinciple} to try to prove the inequation and \eqref{eq:Kleenefpthm} to refute it.
In the algorithm we introduce in the next section, we further assume that $b$ is of the form $f \sqcup i$ for some element $i \in L$ and map $f\colon L \to L$, namely $b(x)= f(x) \sqcup i$ for all $x \in L$. Moreover we require $f$ to have a right adjoint $g\colon L \to L$. In this case %
\begin{equation}\label{eq:iff}
\mu (f \sqcup i) \sqsubseteq p \qquad \text{ if{}f } \qquad i \sqsubseteq \nu (g \sqcap p)
\end{equation}

\noindent
(which is easily shown using the Knaster-Tarski theorem)
 and $(f \sqcup i)$ and $(g \sqcap p)$ are guaranteed to be (co)continuous. Since $f \dashv g$ and left and right adjoints preserve, resp., arbitrary joins and meets, then for all $n \in \Nat$
\begin{equation}\label{eq:chainsadjoint}
\textstyle
(f\sqcup i)^{n} (\bot) = \bigsqcup_{j< n} f^j(i) \qquad  (g\sqcap p)^{n} (\top) = \bigsqcap_{j< n} g^j(p)
\end{equation}
which by \eqref{eq:Kleenefpthm} provide useful characterisations of least and greatest fixed points.
\begin{equation}\label{eq:kleeneadjoint}
\textstyle
\mu (f \sqcup i) = \bigsqcup_{n\in \Nat} f^n(i) \qquad \qquad \nu (g \sqcap p) = \bigsqcap_{n\in \Nat} g^n(p) \tag{Kl${\dashv}$}
\end{equation}

We conclude this section with an example that we will often revisit. It also provides a justification for the intuitive terminology that we sporadically use.

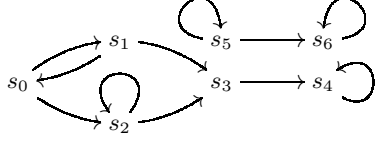
\begin{figure}[t]
\begin{displaymath}
    \xymatrix@R=5pt{
     & s_1 \ar@/^1ex/[rd] \ar@/^1ex/[ld] &  s_5 \ar@(l,u)\ar[r] & s_6 \ar@(r,u)\\
      s_0 \ar@/^1ex/[ur] \ar@/_1ex/[rd] & &s_3 \ar[r]& s_4\ar@(rd,ur) \\
      & s_2 \ar@(ur,ul) \ar@/_1ex/[ru] &
      }
   \end{displaymath}
\caption{The transition system of Example~\ref{eg:simple}, with $S = \{ s_0, \dots s_6 \}$ and $I=\{s_0\}$.}\label{fig:LTS}
\end{figure}

\begin{example}[Safety problem for transition systems] \label{eg:simple}
A \emph{transition system} consists of a triple $(S, I, \delta)$
where $S$ is a set of states, $I \subseteq S$ is a set of initial states, and $\delta\colon S \to \mathcal{P}S$ is a transition relation.
Here $\mathcal{P}S$ denotes the powerset of $S$, which forms a complete lattice ordered by inclusion $\subseteq$.
By defining $F\colon \mathcal{P}S \to \mathcal{P}S$ as $F(X) \defeq \bigcup_{s \in X} \delta(s)$ for all $X\in  \mathcal{P}S$,
one has that  $\mu (F \cup I)$ is the set of all states reachable from $I$.
    Therefore, for any $P \in  \mathcal{P}S$, representing some safety property,
    $\mu (F \cup I) \subseteq P$  holds iff
   all reachable states are safe.
    It is worth to remark that
   $F$ has a right adjoint $G\colon \mathcal{P}S \to \mathcal{P}S$ defined for all $X\in \mathcal{P}S$ as $G(X) \defeq \{s \mid \delta(s) \subseteq X\}$. Thus by~\eqref{eq:iff}, $\mu (F \cup I) \subseteq P$ iff $I \subseteq \nu (G \cap P)$.

Consider the transition system in Fig.~\ref{fig:LTS}.
Hereafter we write $S_{j}$ for the set of states $\{s_0, s_1, \dots, s_j\}$ and we fix the set of safe states to be $P = S_5$. It is immediate to see that $\mu(F \cup I)=S_4 \subseteq P$. Automatically, this can be checked with the initial chains of $(F \cup I)$ or with the final chain of $(G \cap P)$ displayed below on the left and on the right, respectively.

\[\emptyset \subseteq  I \subseteq S_2 \subseteq S_3 \subseteq S_4 \subseteq S_4 \subseteq \cdots \qquad \quad \qquad  
\cdots \subseteq S_4 \subseteq S_4 \subseteq P \subseteq S
\]

\noindent
The $(j+1)$-th element of the initial chain contains all the states that can be reached by $I$ in at most $j$ transitions, while $(j+1)$-th element of the final chain contains all the states that in at most $j$ transitions reach safe states only.
   \end{example}

\section{Adjoint PDR}\label{sec:APDR}

\begin{figure}[t]
{\scriptsize{
\begin{minipage}{.4\linewidth}
\begin{align}
\quad x_0 = \bot \tag{I0}\label{eq:x0bot} \\
1\leq k \leq n \tag{I1} \label{eq:invi} \\
\forall j\in[0, n-2]\text{, }x_j \sqsubseteq x_{j+1} \tag{I2}\label{eq:positivechain}
 \end{align}

\begin{align}
 i \sqsubseteq x_1  \tag{P1} \label{eq:Ix1}\\
 x_{n-2} \sqsubseteq p  \tag{P2}\label{eq:xP}\\
\forall j\in[0, n-2]\text{, }f(x_j) \sqsubseteq x_{j+1} \tag{P3}\label{eq:positiveF} \\
\forall j\in[0, n-2]\text{, }x_j \sqsubseteq g(x_{j+1}) \tag{P3a} \label{eq:positiveG}
\end{align}
\end{minipage}
\begin{minipage}{.6\linewidth}
\begin{align}
\text{If }\vec{y}\neq \varepsilon\text{ then }p \sqsubseteq y_{n-1}  \tag{N1}\label{eq:Pepsilon} \\
\forall j\in[k,n-2]\text{, }g(y_{j+1}) \sqsubseteq y_j  \tag{N2}\label{eq:negativeG}
\end{align}

\begin{align}
\forall j \in [k, n - 1] \text{, } x_j \not\sqsubseteq y_j \tag{PN} \label{eq:positivenegative}\\
\forall j \in [0, n-1] \text{, } (f \sqcup i)^j (\bot) \sqsubseteq x_j \sqsubseteq (g \sqcap p)^{n-1-j} (\top) \tag{A1} \label{eq:positiveinitialfinal}\\
\forall j \in [1, n-1] \text{, } x_{j-1} \sqsubseteq g^{n-1-j}(p) \tag{A2} \label{positivefinal}\\
\forall j\in[k,n-1]\text{, }g^{n-1-j}(p) \sqsubseteq y_j \tag{A3} \label{negativefinal}
\end{align}
\end{minipage}
}}
\caption{Invariants of {\APDR}.}
\label{fig:invariants}
\end{figure}

In this section we present {\APDR}, an algorithm that
takes in input a tuple $(i,f,g,p)$ with $i,p\in L$ and $f\dashv g \colon L\to L$ and, if it terminates, it returns  true whenever $\mu (f \sqcup i) \sqsubseteq p$ and false otherwise.

\smallskip

The algorithm manipulates two sequences of elements of $L$:
$\vec{x} \defeq  x_0, \dots, x_{n-1}$  of length $n$
and $\vec{y}\defeq  y_k, \dots y_{n-1} $ of length $n-k$. These satisfy, through the executions
of {\APDR}, the invariants in Fig.~\ref{fig:invariants}. Observe that, by \eqref{eq:positiveinitialfinal}, $x_j$ over-approximates the $j$-th element of the initial chain, namely $(f\sqcup i)^j(\bot) \sqsubseteq x_j$, while, by \eqref{negativefinal}, the $j$-indexed element $y_j$ of $\vec{y}$ over-approximates $g^{n-j-1}(p)$ that, borrowing the terminology of Example~\ref{eg:simple}, is the set of states which are safe in $n-j-1$ transitions. %
Moreover, by~\eqref{eq:positivenegative},
the element $y_j$ witnesses that $x_j$ is unsafe,
 i.e., that
$x_j \not\sqsubseteq g^{n-1-j}(p)$ or equivalently %
$f^{n-j-1}(x_j) \not\sqsubseteq p$.
Notably, $\vec{x}$ is a positive chain and $\vec{y}$ a negative sequence, according to the definitions below.

  \begin{definition}[positive chain] \label{def:posi_seq}
    A \emph{positive chain} for %
    $\mu (f \sqcup i) \sqsubseteq p$
    is a finite chain $x_0 \sqsubseteq \dots \sqsubseteq x_{n-1}$ in $L$
    of length $n \geq 2$
    which satisfies \eqref{eq:Ix1}, \eqref{eq:xP}, \eqref{eq:positiveF} in Fig.~\ref{fig:invariants}.
It is \emph{conclusive} if $x_{j+1} \sqsubseteq x_j$ for some $j \leq n-2$.
 \end{definition}

In a conclusive positive chain, $x_{j+1}$ provides an invariant for $f\sqcup i$ and thus, by \eqref{eq:coinductionproofprinciple}, $\mu (f \sqcup i) \sqsubseteq p$ holds. So, when $\vec{x}$ is conclusive, {\APDR} returns true.

\begin{definition}[negative sequence] \label{def:neg_seq}
    A \emph{negative sequence} for %
    $\mu (f \sqcup i) \sqsubseteq p$  is a finite sequence $ y_k, \dots, y_{n-1}$ in $L$
    with  $1 \leq k \leq n$
    which satisfies \eqref{eq:Pepsilon} and \eqref{eq:negativeG} in Fig.~\ref{fig:invariants}.
    It is \emph{conclusive} if $k=1$ and $i \not \sqsubseteq y_1$.
  \end{definition}

When $\vec{y}$ is conclusive, {\APDR} returns false as $y_1$ provides a counterexample: \eqref{eq:Pepsilon} and \eqref{eq:negativeG} entail \eqref{negativefinal} and thus $i \not \sqsubseteq y_1\sqsupseteq g^{n-2}(p)$. By~\eqref{eq:kleeneadjoint},  $g^{n-2}(p) \sqsupseteq  \nu (g \sqcap p)$ and thus $i \not \sqsubseteq \nu (g \sqcap p)$. By~\eqref{eq:iff}, $\mu (f \sqcup i) \not \sqsubseteq p$.

\begin{figure}[t]
\begin{lrbox}{\lstbox}\begin{minipage}{\textwidth}
\centering
\underline{{\APDR} $(i,f,g,p)$}
\begin{codeNT}
<INITIALISATION>
  $( \vec{x} \| \vec{y} )_{n,k}$ := $(\bot,\top\|\varepsilon)_{2,2}$
<ITERATION>						         
  case $( \vec{x} \| \vec{y} )_{n,k}$ of
	   $\vec{y}=\varepsilon$ And $x_{n-1} \sqsubseteq p$     :                    
			$( \vec{x} \| \vec{y} )_{n,k}$ := $( \vec{x}, \top \| \varepsilon )_{n+1,n+1}$
	   $\vec{y}=\varepsilon$ And $x_{n-1} \not \sqsubseteq p$    :                     
			choose $z\in L$ st  $x_{n-1} \not \sqsubseteq z$ And  $p \sqsubseteq z$;
			$( \vec{x} \| \vec{y} )_{n,k}$ := $( \vec{x} \| z )_{n,n-1}$
	   $\vec{y} \neq \varepsilon$ And $f(x_{k-1}) \not \sqsubseteq y_k$ :                        
			choose $z \in L$ st $x_{k-1} \not \sqsubseteq z$ And $g(y_k) \sqsubseteq z$;
			$(\vec{x} \| \vec{y} )_{n,k}$ := $(\vec{x} \| z , \vec{y} )_{n,k-1}$
	   $\vec{y} \neq \varepsilon$ And $f(x_{k-1}) \sqsubseteq y_k$ :                        
			choose $z \in L$ st $z \sqsubseteq y_k$ And $(f \sqcup i)(x_{k-1} \sqcap z) \sqsubseteq z$;
			$(\vec{x} \| \vec{y} )_{n,k}$ := $(\vec{x} \sqcap_k z \| \mathsf{tail}(\vec{y}) )_{n,k+1}$
  endcase
<TERMINATION>
	if $\exists j\in [0,n-2]\,.\, x_{j+1} \sqsubseteq x_j$ then return true		 
	if $i \not \sqsubseteq y_1$ then return false							
\end{codeNT}
\end{minipage}\end{lrbox}
\centering
\scalebox{.8}{\usebox\lstbox}
\caption{{\APDR} algorithm checking $\mu (f \sqcup i) \sqsubseteq p$.}\label{fig:naive}
\end{figure}

The pseudocode of the algorithm is displayed in Fig.~\ref{fig:naive}, where we write $( \vec{x} \| \vec{y} )_{n,k}$ to compactly represents the state of the algorithm:
the pair $(n,k)$ is called the \emph{index} of the state, with $\vec{x}$ of length $n$ and $\vec{y}$ of length $n-k$.
When $k=n$, $\vec{y}$ is the empty sequence $\varepsilon$. For any $z\in L$, we write $ \vec{x},z$ for the chain $ x_0, \dots, x_{n-1}, z$ of length $n+1$ and $ z,\vec{y}$ for the sequence $ z,y_k, \dots y_{n-1}$ of length $n-(k-1)$.
Moreover, we write $\vec{x}\sqcap_j z$  for the chain
$ x_0 \sqcap z, \dots, x_j \sqcap z, x_{j+1},\dots , x_{n-1}$. Finally, $\mathsf{tail}(\vec{y})$ stands for the tail of $\vec{y}$, namely $y_{k+1}, \dots y_{n-1}$ of length $n-(k+1)$.

The algorithm starts in the initial state $s_0\defeq( \bot, \top \| \varepsilon )_{2,2}$ and, unless one of $\vec{x}$ and $\vec{y}$ is conclusive, iteratively applies one of the four mutually exclusive rules: (Unfold), (Candidate), (Decide) and (Conflict). %
The rule (Unfold) extends the positive chain by one element when the
negative sequence is empty and the positive chain is under $p$;
since the element introduced by (Unfold) is $\top$, its application
typically triggers rule (Candidate) that starts the negative sequence
with an over-approximation of $p$.
Recall that the role of $y_j$ is to witness that $x_j$ is unsafe.
After (Candidate) either (Decide) or (Conflict) are possible:
if $y_k$ witnesses that, besides $x_k$, also $f(x_{k-1})$ is unsafe, then (Decide) is used to
further
extend the negative sequence to witness that $x_{k-1}$ is
 unsafe; otherwise, the rule (Conflict) improves the precision
of the positive chain in such a way that $y_k$ no longer witnesses
$x_k\sqcap z$ unsafe and, thus, the negative sequence is
shortened.

Note that, in (Candidate), (Decide) and (Conflict), the element $z\in L$ is chosen among a set of possibilities, thus {\APDR} is nondeterministic.

To illustrate the executions of the algorithm, we adopt a labeled transition system notation. Let $\states \defeq \{( \vec{x} \| \vec{y} )_{n,k} \mid n \geq 2 \text{, }k\leq n\text{, }  \vec{x}\in L^n \text{ and } \vec{y}\in L^{n-k}\}$ be the set of all possible states of {\APDR}.
We call $( \vec{x} \| \vec{y} )_{n,k} \in \states$ \emph{conclusive} if $\vec{x}$ or
$\vec{y}$ are such.
When $s \in \states$ is not conclusive, we write $s \trz{D}{}$ to mean that
$s$
satisfies the guards in the rule (Decide), and $s \trz{D}{z} s'$ to mean
that, being (Decide) applicable, {\APDR} moves from
state $s$ to $s'$ by choosing $z$.
Similarly for the other rules: the labels $\mathit{Ca}$, $\mathit{Co}$
and $U$ stands for (Candidate), (Conflict) and (Unfold),
respectively. %
When irrelevant we omit to specify labels and choices and we just write $s \tr{} s'$.
As usual %
$\ttp{}$ stands for the transitive closure of $\tr{}$ while $\ttr{}$ stands for the reflexive and transitive closure of $\tr{}$.

\begin{example}\label{ex:simple-ts}
Consider the safety problem in Example~\ref{eg:simple}. Below we illustrate two possible computations of {\APDR} that differ for the choice of $z$ in (Conflict).
The first run is conveniently represented as the following series of transitions.
\begin{lrbox}{\lstbox}\begin{minipage}{\textwidth}
{\footnotesize \begin{align*}
		&( \emptyset, S \| \varepsilon )_{2,2}
		\tr{\mathit{Ca}}_{P} ( \emptyset, S \| P )_{2,1}
		\tr{\mathit{Co}}_{I} ( \emptyset, I \| \varepsilon )_{2,2}
		\tr{U} ( \emptyset, I, S \| \varepsilon )_{3,3}
		\tr{\mathit{Ca}}_{P} ( \emptyset, I, S \| P )_{3,2} \\[-.4em]
		\tr{\mathit{Co}}_{S_2} & ( \emptyset, I, S_2 \| \varepsilon )_{3,3}
		\tr{U} %
		\tr{\mathit{Ca}}_{P}  ( \emptyset, I, S_2, S \| P )_{4,3}
		\tr{\mathit{Co}}_{S_3} ( \emptyset, I, S_2, S_3 \| \varepsilon )_{4,4}
		\tr{U} %
		\tr{\mathit{Ca}}_{P}  ( \emptyset, I, S_2, S_3, S \| P )_{5,4}\\[-.4em]
		\tr{\mathit{Co}}_{S_4} & ( \emptyset, I, S_2, S_3, S_4 \| \varepsilon )_{5,5}
		\tr{U} %
		\tr{\mathit{Ca}}_{P} ( \emptyset, I, S_2, S_3, S_4, S \| P )_{6,5}
		\tr{\mathit{Co}}_{S_4} ( \emptyset, I, S_2, S_3, S_4, S_4 \| \varepsilon )_{6,6}
	\end{align*}}%
\end{minipage}\end{lrbox}

\[\scalebox{.93}{\usebox\lstbox}\]

\noindent
The last state returns true since $x_4 = x_5=S_4$.  Observe that the elements of $\vec{x}$, with the exception of the last element $x_{n-1}$, are those of the initial chain of $(F \cup I)$, namely, $x_j$ is the set of states reachable in at most $j-1$ steps. In the second computation, the elements of $\vec{x}$ are roughly  those of the final chain of $(G \cap P)$. More precisely, after (Unfold) or (Candidate), $x_{n-j}$ for $j<n-1$ is the set of states which only reach safe states within $j$ steps.
\begin{lrbox}{\lstbox}\begin{minipage}{\textwidth}
{\footnotesize \begin{align*}
		&( \emptyset, S \| \varepsilon )_{2,2}
		\tr{\mathit{Ca}}_{P} ( \emptyset, S \| P )_{2,1}
		\tr{\mathit{Co}}_{P} ( \emptyset, P \| \varepsilon )_{2,2} \\[-.4em]
		\tr{U} %
		\tr{\mathit{Ca}}_{P} & ( \emptyset, P, S \| P )_{3,2}
		\tr{D}_{S_4} ( \emptyset, P, S \| S_4, P )_{3,1}
		\tr{\mathit{Co}}_{S_4} ( \emptyset, S_4, S \| P )_{3,2}
		\tr{\mathit{Co}}_{P} ( \emptyset, S_4, P \| \varepsilon )_{3,3} \\[-.4em]
		\tr{U} %
		\tr{\mathit{Ca}}_{P} & ( \emptyset, S_4, P, S \| P )_{4,3}
		\tr{D}_{S_4} ( \emptyset, S_4, P, S \| S_4, P )_{4,2}
		\tr{\mathit{Co}}_{S_4} ( \emptyset, S_4, S_4, S \| P )_{4,3}
	\end{align*}}%
\end{minipage}\end{lrbox}

\[\scalebox{.93}{\usebox\lstbox}\]

\noindent
Observe that, by invariant \eqref{eq:positiveinitialfinal}, the values of $\vec{x}$ in the two runs are, respectively, the least and the greatest values for all possible computations of {\APDR}.
\end{example}

Theorem~\ref{th:soundness}.1 follows by invariants \eqref{eq:positivechain}, \eqref{eq:Ix1}, \eqref{eq:positiveF} and \eqref{eq:coinductionproofprinciple}; %
Theorem~\ref{th:soundness}.2 by \eqref{eq:Pepsilon}, \eqref{eq:negativeG} and \eqref{eq:kleeneadjoint}. Note that both results hold for any choice of $z$.

\begin{theorem}[Soundness]\label{th:soundness}
\emph {\APDR} is sound. Namely,
\begin{enumerate}%
\item If \emph{\APDR} returns true then $\mu (f \sqcup i) \sqsubseteq p$.
\item If \emph{\APDR} returns false then $\mu (f \sqcup i) \not \sqsubseteq p$.
\end{enumerate}
\end{theorem}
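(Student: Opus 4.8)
The plan is the standard ``first establish the invariants, then read the statement off them'' strategy, with essentially all of the work in the first step. I would begin by proving that the properties collected in Fig.~\ref{fig:invariants} --- for soundness it is enough to single out \eqref{eq:positivechain}, \eqref{eq:Ix1}, \eqref{eq:xP}, \eqref{eq:positiveF} on the positive side and \eqref{eq:Pepsilon}, \eqref{eq:negativeG} on the negative side --- are genuine invariants: they hold in the initial state $(\bot,\top\|\varepsilon)_{2,2}$ and are preserved by every transition of {\APDR}. The base case is immediate ($n=2$, so $x_0=\bot$, $x_{n-2}=x_0=\bot\sqsubseteq p$, the chain is a single trivial link, and $\vec y=\varepsilon$ makes the negative conditions vacuous). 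For the inductive step I would check the four rules separately: (Unfold) appends $\top$, so the only non-vacuous obligation, \eqref{eq:xP}, holds precisely because (Unfold) fires under the guard $x_{n-1}\sqsubseteq p$; (Candidate) only sets $\vec y=z$ with $p\sqsubseteq z$, which is \eqref{eq:Pepsilon}; (Decide) prepends $z$ to $\vec y$ with $g(y_k)\sqsubseteq z$, which is \eqref{eq:negativeG} at the new index; and (Conflict) replaces $x_0,\dots,x_k$ by $x_j\sqcap z$ and drops $\vec y$ to $\mathsf{tail}(\vec y)$, where the constraints $z\sqsubseteq y_k$ and $(f\sqcup i)(x_{k-1}\sqcap z)\sqsubseteq z$ defining $z$, together with monotonicity and the old chain, are exactly what recovers \eqref{eq:positivechain}, \eqref{eq:Ix1} (note $i\sqsubseteq(f\sqcup i)(x_{k-1}\sqcap z)\sqsubseteq z$) and \eqref{eq:positiveF}, the negative side staying consistent trivially.

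Granting the invariants, part~1 is short. If {\APDR} returns true there is $j\in[0,n-2]$ with $x_{j+1}\sqsubseteq x_j$; together with \eqref{eq:positivechain} this forces $x_j=x_{j+1}$, so by \eqref{eq:positiveF} and monotonicity $f(x_{j+1})\sqsubseteq f(x_j)\sqsubseteq x_{j+1}$, and by \eqref{eq:Ix1} with the chain $i\sqsubseteq x_1\sqsubseteq x_{j+1}$; hence $(f\sqcup i)(x_{j+1})\sqsubseteq x_{j+1}$, i.e., $x_{j+1}$ is a pre-fixed point of $f\sqcup i$. Moreover \eqref{eq:positivechain} and \eqref{eq:xP} give $x_{j+1}\sqsubseteq x_{n-2}\sqsubseteq p$. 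The left rule of \eqref{eq:coinductionproofprinciple} then yields $\mu(f\sqcup i)\sqsubseteq x_{j+1}\sqsubseteq p$.

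For part~2, when {\APDR} returns false we must have $k=1$ (so that $y_1$ is defined, using $k\geq1$) and $i\not\sqsubseteq y_1$. From \eqref{eq:Pepsilon} we get $g^0(p)=p\sqsubseteq y_{n-1}$, and a downward induction on $j$ using \eqref{eq:negativeG} gives $g^{\,n-1-j}(p)\sqsubseteq y_j$ for every $j\in[1,n-1]$ --- this is exactly the derivation of \eqref{negativefinal} --- so in particular $g^{\,n-2}(p)\sqsubseteq y_1$. By \eqref{eq:kleeneadjoint}, $\nu(g\sqcap p)=\bigsqcap_{m\in\Nat}g^m(p)\sqsubseteq g^{\,n-2}(p)\sqsubseteq y_1$, whence $i\not\sqsubseteq\nu(g\sqcap p)$ (otherwise $i\sqsubseteq y_1$), and \eqref{eq:iff} converts this into $\mu(f\sqcup i)\not\sqsubseteq p$. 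Both derivations are visibly independent of the choices of $z$ made along the run.

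The real obstacle is the invariant-preservation argument, and within it the (Conflict) case: one must check that contracting the entire prefix $x_0,\dots,x_k$ to $x_j\sqcap z$ simultaneously preserves monotonicity of the chain \eqref{eq:positivechain}, the inclusion $f(x_j\sqcap z)\sqsubseteq x_{j+1}\sqcap z$ \eqref{eq:positiveF} --- which needs $f(x_j\sqcap z)\sqsubseteq f(x_j)\sqsubseteq x_{j+1}$ from the old \eqref{eq:positiveF} and $f(x_j\sqcap z)\sqsubseteq f(x_{k-1}\sqcap z)\sqsubseteq(f\sqcup i)(x_{k-1}\sqcap z)\sqsubseteq z$ from $x_j\sqsubseteq x_{k-1}$ --- and keeps the shortened negative sequence consistent. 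A secondary point, not needed for soundness as stated but for the transitions to actually be available (a stuck run returns nothing), is that the adjunction $f\dashv g$ supplies the transpose $f(x_{k-1})\not\sqsubseteq y_k\iff x_{k-1}\not\sqsubseteq g(y_k)$, which witnesses existence of the $z$ used in (Decide); the same adjunction underlies \eqref{eq:iff} and \eqref{eq:kleeneadjoint} invoked in part~2.
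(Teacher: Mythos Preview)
Your proposal is correct and follows essentially the same route as the paper: verify the invariants of Fig.~\ref{fig:invariants} by induction over the transitions, then read off both soundness claims from \eqref{eq:coinductionproofprinciple} and \eqref{eq:kleeneadjoint}/\eqref{eq:iff} respectively. The only cosmetic difference is that in part~1 the paper takes $x_j$ (not $x_{j+1}$) as the witnessing pre-fixed point, but since you explicitly note $x_j=x_{j+1}$ this is immaterial; your sketch of the (Conflict) case for \eqref{eq:positiveF} is also exactly the argument the paper gives in its appendix.
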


\subsection{Progression}\label{sec:progression}

It is necessary to prove that in any step of the execution, if the algorithm does not return true or false, then it can progress to a new state, not yet visited. To this aim we must deal with the subtleties of the non-deterministic choice of the element $z$ in (Candidate), (Decide) and (Conflict). The following proposition ensures that, for any of these three rules, there is always a possible choice.

\begin{proposition}[Canonical choices]\label{prop:CanonicalChoice}
The following are always possible:

\noindent \begin{minipage}{.5\linewidth}
\begin{enumerate}\setcounter{enumi}{0}
\item in (Candidate) $z=p$;
\item in (Decide) $z= g(y_k)$;
\end{enumerate}
\end{minipage}
\begin{minipage}{.5\linewidth}
\begin{enumerate}\setcounter{enumi}{2}
\item in (Conflict) $z = y_k$;
\item in (Conflict) $z = (f \sqcup i)(x_{k-1})$.
\end{enumerate}
\end{minipage}
Thus, for all non-conclusive $s\in \states$, if $s_0 \ttr{} s $ then $s \tr{}$.
\end{proposition}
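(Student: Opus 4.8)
The plan is to dispatch the four bullet points one at a time --- each is just a check that the displayed value of $z$ satisfies the two side conditions of the corresponding branch of the \texttt{case} in Fig.~\ref{fig:naive} --- and then to obtain the concluding statement from the observation that the guards of (Unfold), (Candidate), (Decide) and (Conflict) are mutually exclusive and jointly exhaustive, so that at a non-conclusive reachable state at least one rule is enabled with a legal choice of $z$.

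The two forward choices are immediate and use no invariant. For (Candidate) the side conditions are $x_{n-1}\not\sqsubseteq z$ and $p\sqsubseteq z$; with $z=p$ the second is reflexivity and the first is literally the guard. For (Decide) the side conditions are $x_{k-1}\not\sqsubseteq z$ and $g(y_k)\sqsubseteq z$; with $z=g(y_k)$ the second is reflexivity, and the first is the negation of $x_{k-1}\sqsubseteq g(y_k)$, which by the adjunction $f\dashv g$ is the negation of $f(x_{k-1})\sqsubseteq y_k$, i.e. precisely the guard.

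The two (Conflict) choices both reduce, after monotonicity bookkeeping, to the single auxiliary inequality $i\sqsubseteq y_k$, which I would prove to hold whenever (Conflict) is enabled at a non-conclusive reachable state. Granting it: for $z=y_k$, the condition $z\sqsubseteq y_k$ is trivial, while $(f\sqcup i)(x_{k-1}\sqcap y_k)=f(x_{k-1}\sqcap y_k)\sqcup i\sqsubseteq f(x_{k-1})\sqcup i\sqsubseteq y_k$ using monotonicity of $f$, the guard $f(x_{k-1})\sqsubseteq y_k$, and $i\sqsubseteq y_k$; for $z=(f\sqcup i)(x_{k-1})=f(x_{k-1})\sqcup i$, the condition $z\sqsubseteq y_k$ follows from those same two facts, and $(f\sqcup i)(x_{k-1}\sqcap z)\sqsubseteq(f\sqcup i)(x_{k-1})=z$ by monotonicity of $f\sqcup i$ since $x_{k-1}\sqcap z\sqsubseteq x_{k-1}$. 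To get $i\sqsubseteq y_k$, I would first record $x_{k-1}\sqsubseteq y_k$: when $\vec y\neq\varepsilon$ we have $1\le k\le n-1$ by \eqref{eq:invi}, so instantiating \eqref{positivefinal} and \eqref{negativefinal} at $j=k$ gives $x_{k-1}\sqsubseteq g^{n-1-k}(p)\sqsubseteq y_k$. Then for $k\ge 2$ the positive chain \eqref{eq:positivechain} together with \eqref{eq:Ix1} yields $i\sqsubseteq x_1\sqsubseteq x_{k-1}\sqsubseteq y_k$, while for $k=1$, $i\sqsubseteq y_1$ is exactly the statement that $\vec y$ is not conclusive (Definition~\ref{def:neg_seq}), true by hypothesis.

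Finally, for the ``thus'': fix a non-conclusive $s=(\vec x\,\|\,\vec y)_{n,k}$ with $s_0\ttr{}s$, so $s$ satisfies the invariants of Fig.~\ref{fig:invariants}. If $\vec y=\varepsilon$, then according to whether $x_{n-1}\sqsubseteq p$ either (Unfold) fires (needing no choice) or (Candidate) fires with $z=p$; if $\vec y\neq\varepsilon$ then $x_{k-1}$ and $y_k$ are defined and, according to whether $f(x_{k-1})\sqsubseteq y_k$, either (Conflict) fires with $z=y_k$ or (Decide) fires with $z=g(y_k)$. In every case $s\tr{}$. The one step carrying genuine weight is extracting $i\sqsubseteq y_k$ in the (Conflict) case from the invariants; everything else is reflexivity, monotonicity and one transposition across $f\dashv g$. (A minor point to check in passing: (Decide) cannot in fact be enabled when $k=1$, since then $x_{k-1}=x_0=\bot$ and $f(\bot)=\bot\sqsubseteq y_1$ as $f$, being a left adjoint, preserves the empty join; this is consistent with $k$ staying above $1$ as required by \eqref{eq:invi}.)
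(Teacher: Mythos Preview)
Your proof is correct and follows essentially the same approach as the paper: items 1 and 2 are immediate from the guard and one use of the adjunction, while items 3 and 4 both hinge on the auxiliary fact $i\sqsubseteq y_k$, which you establish exactly as the paper does---by non-conclusiveness when $k=1$ and via \eqref{eq:Ix1}, \eqref{eq:positivechain}, \eqref{positivefinal}, \eqref{negativefinal} when $k\ge 2$. Your concluding case analysis for the ``thus'' and the parenthetical remark that (Decide) cannot fire at $k=1$ (since $f(\bot)=\bot$) are a touch more explicit than the paper's presentation but add nothing new.
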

Then, Proposition~\ref{prop:progres} ensures that {\APDR} always traverses new states.
\begin{proposition}[Impossibility of loops]\label{prop:progres}
	If $s_0 \ttr{} s \ttp{ } s'$, then $s\neq  s'$.
\end{proposition}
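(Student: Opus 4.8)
The plan is to exhibit a quantity that strictly increases along every transition of \APDR, from which it follows that a state can never be revisited. The natural candidate is built from the index $(n,k)$ together with the positive chain $\vec{x}$. First I would observe how the index evolves under the four rules: (Unfold) sends $(n,n)$ to $(n+1,n+1)$; (Candidate) sends $(n,n)$ to $(n,n-1)$; (Decide) sends $(n,k)$ to $(n,k-1)$; and (Conflict) sends $(n,k)$ to $(n,k+1)$, moreover replacing $\vec{x}$ by the pointwise-smaller chain $\vec{x}\sqcap_k z$. So $n$ is non-decreasing and, within a fixed $n$, (Candidate) and (Decide) decrease $k$ while (Conflict) increases it; only (Unfold) increases $n$, and it does so from a state with $k=n$ (empty $\vec{y}$). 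This already shows there can be no loop that uses (Unfold), since $n$ would strictly increase and never come back.

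The core of the argument is therefore to rule out loops that keep $n$ fixed, i.e. loops made of (Candidate), (Decide) and (Conflict) only. Here I would combine two facts. (i) Along any such segment the positive chain $\vec{x}$ is non-increasing pointwise: (Candidate) and (Decide) leave $\vec{x}$ untouched, and (Conflict) replaces it by $\vec{x}\sqcap_k z\sqsubseteq\vec{x}$. Hence if the segment returns to the same state, $\vec{x}$ must be unchanged throughout, so in particular \emph{every} (Conflict) step in the loop must leave $\vec{x}$ fixed, i.e. $x_j\sqsubseteq z$ for all $j\le k$. (ii) With $\vec{x}$ genuinely constant, the dynamics of $k$ inside the loop is that of a walk on $\{1,\dots,n\}$ that increases by $1$ on (Conflict) and decreases by $1$ on (Candidate)/(Decide); to close a loop the numbers of up-steps and down-steps must match. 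I would then look for a strictly monotone potential in this reduced setting. A clean choice is to use the negative sequence: track, say, the pair $(k,\ \text{"shape of }\vec{y}\text{"})$ lexicographically, or more simply argue that after a (Conflict) at level $k$ the invariant \eqref{eq:positiveF}, $f(x_{k-1})\sqsubseteq y_k$, now holds, whereas (Decide) is applied precisely when $f(x_{k-1})\not\sqsubseteq y_k$; since $y_k$ is not modified by (Conflict) and $x_{k-1}$ is fixed in the loop, a (Decide) at level $k$ can never be re-enabled after a (Conflict) at level $k$, and vice versa. Pushing this bookkeeping across all levels $k$ in the fixed-$n$ window yields a strictly decreasing well-founded measure (for instance the number of indices $j\in[k,n-1]$ at which the "Decide-enabled" predicate holds, suitably weighted by $j$), contradicting the existence of the loop.

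Concretely, the steps in order are: (1) record the effect of each rule on $(n,k)$ and on $\vec{x}$ from Fig.~\ref{fig:naive}; (2) conclude $n$ is non-decreasing and strictly increases on (Unfold), so a loop must have constant $n$; (3) conclude $\vec{x}$ is non-increasing on the remaining three rules, so a loop must have constant $\vec{x}$, forcing every (Conflict) in it to be $\vec{x}$-preserving; (4) in this constant-$\vec{x}$, constant-$n$ regime, define the well-founded measure described above (built from $k$ and from which levels have $f(x_{j-1})\sqsubseteq y_{j}$), and show it strictly decreases on each of (Candidate), (Decide), (Conflict); (5) conclude no state repeats, i.e. $s_0\ttr{}s\ttp{}s'$ implies $s\neq s'$.

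The main obstacle I expect is step (4): the index $k$ alone is not monotone (it goes up on (Conflict) and down on (Decide)/(Candidate)), so the proof hinges on finding the right auxiliary component of the measure and checking that (Conflict) genuinely makes progress — i.e. that the precision improvement it forces on $\vec{x}$ (or, in the constant-$\vec{x}$ case, the now-satisfied inequality $f(x_{k-1})\sqsubseteq y_k$) cannot be undone by later (Decide)/(Candidate) steps without first changing $\vec{x}$ or $\vec{y}$ at a lower level, which the constancy of $\vec{x}$ along a loop precludes. Getting the weighting right so that a (Conflict) at a high level outweighs the (Decide)/(Candidate) steps it enables at lower levels is the delicate part; everything else is routine inspection of the rules in Fig.~\ref{fig:naive} and the invariants in Fig.~\ref{fig:invariants}.
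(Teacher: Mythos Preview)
Your steps (1)--(3) are correct and essentially coincide with the paper's argument, but you overshoot at step~(3) and thereby miss the short finish. You observe that in a loop with constant $n$ the chain $\vec{x}$ must stay constant, hence every (Conflict) in the loop must be $\vec{x}$-preserving, i.e.\ $x_k\sqsubseteq z$. But (Conflict) is taken with some $z\sqsubseteq y_k$, so $x_k\sqsubseteq z\sqsubseteq y_k$, which directly contradicts the invariant \eqref{eq:positivenegative}: $x_k\not\sqsubseteq y_k$. Thus no (Conflict) can be $\vec{x}$-preserving, and since any loop with fixed $n$ must contain at least one (Conflict) (the other two rules strictly decrease $k$, which is bounded below by~$1$), there is no loop. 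Step~(4) is not needed at all.

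This is exactly the paper's proof: it defines the lexicographic order on states by $(\vec{x},-k)$, where $\vec{x}\prec\vec{x}'$ means $n<n'$ or ($n=n'$ and some $x_j\sqsupset x'_j$ with the others $\sqsupseteq$), and checks that each rule strictly increases the state. The only non-trivial case is (Conflict), and there the single observation $z\sqsubseteq y_k$ together with \eqref{eq:positivenegative} forces $x_k\sqcap z\sqsubset x_k$. Your proposed step~(4) --- building a well-founded measure out of ``Decide-enabled'' predicates in a hypothetical constant-$\vec{x}$ regime --- is chasing a situation that cannot occur; you yourself flag it as the delicate part, and indeed it is the part you do not need.
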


Observe that the above propositions  entail that {\APDR} terminates whenever the lattice $L$ is finite, since the set of reachable states is finite in this case.

\begin{example}
For $(I,F,G,P)$ as in Example \ref{eg:simple}, {\APDR} behaves essentially as IC3/PDR~\cite{Bradley11}, solving reachability problems for transition systems with finite state space $S$. Since
the lattice $\mathcal{P}S$ is also finite, %
{\APDR} always terminates.
\end{example}

\subsection{Heuristics}\label{sec:heuristics}

The nondeterministic choices of the algorithm can be resolved by using heuristics. Intuitively, a heuristic chooses for any states $s\in\states$ an element $z\in L$ to be possibly used in (Candidate), (Decide) or (Conflict), so it is just a function $h\colon \states \to L$.
When defining a heuristic, we will avoid to specify its values on conclusive states or in those performing (Unfold), as they are clearly irrelevant.

With a heuristic, one can instantiate {\APDR} by making the choice of $z$ as prescribed by $h$. Syntactically, this means to erase  from the code of Fig.~\ref{fig:naive} the three lines of \texttt{choose} and replace them by
$z \texttt{:= } h(\,( \vec{x} \| \vec{c} )_{n,k}\,)$. We call {\APDR}$_h$ the resulting deterministic algorithm and write $s \Htrz{}{h}{} s'$ to mean that {\APDR}$_h$ moves from state $s$ to $s'$. We let $\states^h\defeq \{s\in \states \mid s_0\Htrz{}{h}{*} s\}$ be the sets of all states reachable by {\APDR}$_h$.

\begin{definition}[legit heuristic]
A heuristic $h\colon \states \to L$ is called \emph{legit} whenever for all $s,s'\in \states^h$,
if $s \Htrz{}{h}{}s'$ then $s\tr{}s'$. %
\end{definition}
When $h$ is legit, the only execution of the deterministic algorithm {\APDR}$_h$ is one of the possible executions of the non-deterministic algorithm {\APDR}.

The canonical choices provide two legit heuristics:
first, we call \emph{simple} any legit heuristic $h$ that chooses $z$ in (Candidate) and (Decide) as in Proposition \ref{prop:CanonicalChoice}: %
\begin{equation}\label{eq:simple}
( \vec{x} \| \vec{y} )_{n,k} \mapsto
\begin{cases*}
		p & if $( \vec{x} \| \vec{y} )_{n,k} \tr{ \mathit{Ca} }$ \\
		g(y_k)  & if $( \vec{x} \| \vec{y} )_{n,k} \tr{ D }$
\end{cases*}
\end{equation}
Then, if the choice in (Conflict) is like in Proposition \ref{prop:CanonicalChoice}.4, we call $h$ \emph{initial}; if it is like in Proposition \ref{prop:CanonicalChoice}.3, %
we call $h$ \emph{final}. Shortly, the two legit heuristics are: %
\[
\begin{array}{r|ll}
\quad\emph{simple initial} \quad\
& 	\quad\eqref{eq:simple} \text{ and }( \vec{x} \| \vec{y} )_{n,k} \mapsto
	(f\sqcup i)(x_{k-1})   &\quad\mbox{if $( \vec{x} \| \vec{y} )_{n,k} \in \mathit{Co}$}\quad
\\[5pt]
\hline
\\[-7pt]
\quad\emph{simple final} \quad\
& \quad\eqref{eq:simple} \text{ and }
	( \vec{x} \| \vec{y} )_{n,k} \mapsto
	y_k   &\quad\mbox{if $( \vec{x} \| \vec{y} )_{n,k} \in \mathit{Co}$}\quad\\
\end{array}
\]
Interestingly, with any simple heuristic, the sequence $\vec{y}$ takes a familiar shape:
\begin{proposition}\label{prop:negativesequencefinalchain}
Let $h\colon \states \to L$ be any simple heuristic. For all $( \vec{x} \| \vec{y} )_{n,k} \in \states^h$, invariant~\eqref{negativefinal} holds as an equality, namely  for all $j\in[k,n-1]$,
$y_j=g^{n-1-j}(p)$.
\end{proposition}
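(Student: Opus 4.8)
The plan is to prove the statement by induction on the length of the execution $s_0 \Htrz{}{h}{*} ( \vec{x} \| \vec{y} )_{n,k}$ of {\APDR}$_h$, with a case analysis on the last rule applied. Since $h$ is simple, in (Candidate) it picks $z = p$ (a legitimate choice by Proposition~\ref{prop:CanonicalChoice}.1) and in (Decide) it picks $z = g(y_k)$ (Proposition~\ref{prop:CanonicalChoice}.2); crucially, the behaviour of $h$ on (Conflict) is irrelevant here, because (Conflict) modifies only $\vec{x}$ and leaves $\vec{y}$ untouched (up to dropping its head), so the argument works uniformly for every simple heuristic.

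For the base case, the initial state is $s_0 = (\bot, \top \| \varepsilon)_{2,2}$, where $k = n = 2$; the index set $[k,n-1] = [2,1]$ is empty and the claim holds vacuously. For the inductive step, assume the equality $y_j = g^{n-1-j}(p)$ for all $j\in[k,n-1]$ in $( \vec{x} \| \vec{y} )_{n,k}$ and consider a transition to $( \vec{x}' \| \vec{y}' )_{n',k'}$. If (Unfold) is applied, then $\vec{y}=\varepsilon$ and the new state is $(\vec{x}, \top \| \varepsilon)_{n+1,n+1}$, so $[k',n'-1]$ is again empty and there is nothing to prove. If (Candidate) is applied, then $\vec{y}=\varepsilon$, hence $k=n$, and the new state is $(\vec{x} \| p)_{n,n-1}$, i.e. $\vec{y}'$ has length $1$ with $y'_{n-1}=p$; for the only relevant index $j=n-1$ we have $g^{n-1-j}(p)=g^0(p)=p=y'_{n-1}$. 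If (Decide) is applied, the new state is $(\vec{x} \| g(y_k),\vec{y})_{n,k-1}$, so $k'=k-1$, $n'=n$, with $y'_{k-1}=g(y_k)$ and $y'_j=y_j$ for $j\in[k,n-1]$; for $j\in[k,n-1]$ the induction hypothesis gives $y'_j=y_j=g^{n-1-j}(p)$, and for $j=k-1$, using the induction hypothesis at index $k$ (note $k\le n-1$ as $\vec{y}\neq\varepsilon$), $y'_{k-1}=g(y_k)=g\bigl(g^{n-1-k}(p)\bigr)=g^{n-k}(p)=g^{n-1-(k-1)}(p)$. Finally, if (Conflict) is applied, the new state is $(\vec{x}\sqcap_k z \| \mathsf{tail}(\vec{y}))_{n,k+1}$, so $k'=k+1$, $n'=n$ and $y'_j=y_j$ for $j\in[k+1,n-1]$, and the claim for these indices is immediate from the induction hypothesis, which already covers the larger range $[k,n-1]$, whatever $z$ was chosen. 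This exhausts the four rules and completes the induction.

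I do not expect a serious obstacle: the statement is a routine invariant-style induction, and the key point is simply that (Candidate) and (Decide) are the only rules that add elements to $\vec{y}$, with a simple heuristic pinning their choices to $p$ and $g(y_k)$, so each freshly inserted element is exactly the next power of $g$ applied to $p$, while (Unfold) and (Conflict) either reset $\vec{y}$ to $\varepsilon$ or only shorten it. The main care needed is the bookkeeping of the shifting indices $(n,k)$ under each rule and checking that the relevant ranges $[k,n-1]$ behave as claimed at the boundary cases (in particular, that they are empty in the base case and after (Unfold)).
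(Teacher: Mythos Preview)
Your proposal is correct and follows essentially the same approach as the paper's proof: an invariant-style induction along the execution with a case analysis on the last rule applied, handling (Candidate) by $y_{n-1}=p=g^{0}(p)$, (Decide) by $y'_{k-1}=g(y_k)=g(g^{n-1-k}(p))=g^{n-1-(k-1)}(p)$, and noting that (Unfold) and (Conflict) make the claim vacuous or restrict to a subrange already covered by the induction hypothesis.
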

By the above proposition and~\eqref{negativefinal}, the negative sequence $\vec{y}$ occurring in the execution of {\APDR}$_h$, for a simple heuristic $h$, is the least amongst all the negative sequences occurring in any execution of {\APDR}.

Instead, %
 invariant \eqref{eq:positiveinitialfinal} informs us that the positive chain $\vec{x}$ is always in between the initial chain of $f\sqcup i$ and the final chain of $g \sqcap p$. Such values of $\vec{x}$ are obtained  by, respectively, simple initial and simple final heuristic.

\begin{example}
Consider the two runs of {\APDR} in Example~\ref{ex:simple-ts}. The first one exploits the simple initial heuristic and indeed, the positive chain $\vec{x}$ coincides with the initial chain.
Analogously, the second run uses the simple final heuristic.
\end{example}

\subsection{Negative Termination}\label{sec:termination}

When the lattice $L$ is not finite, {\APDR} may not terminate, since checking $\mu (f\sqcup i) \sqsubseteq p$ is not always decidable. In this section, we show that the use of certain heuristics can guarantee termination whenever $\mu (f \sqcup i) \not \sqsubseteq p$. %

The key insight is the following: if $\mu (f \sqcup i) \not \sqsubseteq p$ then by~\eqref{eq:Kleenefpthm}, there should exist some $\tilde{n}\in \Nat$ such that $(f \sqcup i)^{\tilde{n}} (\bot) \not \sqsubseteq p$. By \eqref{eq:positiveinitialfinal}, the rule (Unfold) can be applied only when $(f \sqcup i)^{n-1} (\bot) \sqsubseteq x_{n-1} \sqsubseteq p$. Since (Unfold) increases $n$ and $n$ is never decreased by other rules, then (Unfold) can be applied at most $\tilde{n}$ times.

The elements of negative sequences are introduced by rules (Candidate) and (Decide).
If we guarantee that for any index $(n,k)$ the heuristic in such cases returns a finite number of values for $z$, then one can prove termination. %
To make this formal, we fix
$\mathit{CaD}^h_{n,k} \defeq \{ ( \vec{x} \| \vec{y} )_{n,k}\in \states^h \mid ( \vec{x} \| \vec{y} )_{n,k}\tr{\mathit{Ca}} \text{ or } ( \vec{x} \| \vec{y} )_{n,k}\tr{D}\}$, i.e., the set of all $(n,k)$-indexed  states reachable by {\APDR}$_h$ that trigger (Candidate) or (Decide), and $h(\mathit{CaD}^h_{n,k})\defeq \{h(s) \mid s\in \mathit{CaD}^h_{n,k}\}$, i.e., the set of all possible values returned by $h$ in such states. %

\begin{theorem}[Negative termination]\label{thm:negativetermination}
Let $h$ be a legit heuristic. If $h(\mathit{CaD}^h_{n,k})$ is finite for all $n,k$ and $\mu(f\sqcup i) \not \sqsubseteq p$, then \emph{\APDR}$_h$ terminates.
\end{theorem}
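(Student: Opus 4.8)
The plan is to argue that under the hypothesis $\mu(f\sqcup i)\not\sqsubseteq p$, the reachable state space $\states^h$ of \APDR$_h$ is finite; combined with Proposition~\ref{prop:progres} (impossibility of loops, which applies since $h$ is legit and every \APDR$_h$-transition is an \APDR-transition), this forces termination. So the whole proof reduces to a counting argument on states, organised by the index $(n,k)$.

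First I would bound the first index $n$. Since $\mu(f\sqcup i)\not\sqsubseteq p$, by the Kleene characterisation~\eqref{eq:Kleenefpthm} there is some $\tilde n\in\Nat$ with $(f\sqcup i)^{\tilde n}(\bot)\not\sqsubseteq p$. Rule (Unfold) is the only rule that increases $n$, and by the left half of invariant~\eqref{eq:positiveinitialfinal} it is enabled in a state of index $(n,k)$ only when $(f\sqcup i)^{n-1}(\bot)\sqsubseteq x_{n-1}\sqsubseteq p$ (the guard of (Unfold) requires $x_{n-1}\sqsubseteq p$); hence (Unfold) can fire only while $n-1<\tilde n$, i.e.\ at most $\tilde n$ times, so every reachable state has $n\leq \tilde n+2 =: N$. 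In particular $k\leq n\leq N$, so there are only finitely many admissible index pairs $(n,k)$.

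Next I would fix an index $(n,k)$ with $n\le N$ and show that only finitely many states of that index are reachable. A state of index $(n,k)$ is determined by the tuple $(\vec x,\vec y)=(x_0,\dots,x_{n-1},y_k,\dots,y_{n-1})$, so it suffices to show each coordinate ranges over a finite set. For the $\vec y$-coordinates I would argue by downward induction on $j$ from $n-1$ to $k$: a value $y_{n-1}$ is only ever produced by (Candidate), whose output is $h(s)$ for $s\in\mathit{CaD}^h_{n,n-1}$, a finite set by hypothesis; and for $j<n-1$ the value $y_j$ is produced only by (Decide) from a state in $\mathit{CaD}^h_{n,j}$, again giving finitely many values by hypothesis (note that (Conflict) only shortens $\vec y$ and never creates new $y$-entries, and (Unfold)/(Candidate) change the index). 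For the $\vec x$-coordinates I would note that the entries of $\vec x$ are built from $\bot$, $\top$, $i$, $p$, the finitely many $y$-values just bounded, and the operations $f$, $g$, $(f\sqcup i)(x_{k-1}\sqcap z)$, $\sqcap$, closed under the rules at indices $\le N$: in (Conflict) the new $x$-entries are obtained by meeting existing entries with $z\sqsubseteq y_k$ where the relevant $z$-values lie in $h(\mathit{CaD})$-style finite sets or are $(f\sqcup i)(x_{k-1})$, and in (Candidate)/(Decide) $\vec x$ is untouched, while (Unfold) appends $\top$. So the set of $\vec x$-tuples reachable at a fixed index is finite as well.

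The main obstacle, and the step I would write most carefully, is the last one: controlling the $\vec x$-coordinates. The $y$-values are cleanly bounded by the theorem's hypothesis on $h(\mathit{CaD}^h_{n,k})$, but $\vec x$ is modified by (Conflict) via $\vec x\sqcap_k z$, and although the guard constrains $z$ ($z\sqsubseteq y_k$ and $(f\sqcup i)(x_{k-1}\sqcap z)\sqsubseteq z$), in general $z$ could a priori range over infinitely many lattice elements. The clean way around this is to observe that the reachable states of \APDR$_h$ form a single deterministic run (h is a function), so it is enough to bound the run rather than all choices: by Proposition~\ref{prop:progres} the run visits pairwise distinct states, each with index $(n,k)$, $2\le n\le N$, $1\le k\le n$; if the run were infinite, then since there are finitely many indices, some index $(n,k)$ would be visited infinitely often. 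At that index the state is $(\vec x\|\vec y)_{n,k}$ with $\vec y=(y_k,\dots,y_{n-1})$ taking, by the induction above using finiteness of the $h(\mathit{CaD}^h_{n,\cdot})$, only finitely many values; and for fixed index and fixed $\vec y$ the positive chain $\vec x$ is then also confined to a finite set (its entries being obtained from the finitely many available "seed" values by the bounded combination of rules available before $n$ exceeds $N$). Hence only finitely many states of that index exist, contradicting that it is visited infinitely often. Therefore the run is finite, i.e.\ \APDR$_h$ terminates. \qed
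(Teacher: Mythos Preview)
Your bound on $n$ is correct, and so is the observation that each entry $y_j$ of the negative sequence lies in the finite set $h(\mathit{CaD}^h_{n,j+1})$. The gap is in the attempt to bound the positive-chain coordinates $x_j$. The theorem's hypothesis constrains $h$ only on states triggering (Candidate) or (Decide); it says nothing about the values $h$ returns on states triggering (Conflict). But the entries $x_j$ are built precisely by repeated meets with those (Conflict) $z$-values, so there is no finite ``seed set'' from which they are generated, and the finiteness-of-states argument does not close. Determinism of the run and the monotone decrease of $\vec x$ (Lemma~\ref{lem:decreasing}) do not rescue this: in an infinite lattice a strictly decreasing sequence of meets can be infinite, and the circularity (the (Conflict) output depends on the whole state, including $\vec x$) cannot be broken this way.

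The paper avoids the problem by counting \emph{rule applications} rather than states, via Lemma~\ref{lemma:differentz}: if (Decide), respectively (Candidate), fires twice at the same index $(n,k)$ along the run, the two chosen values must differ. The reason is that between the two firings some (Conflict) must remove the first value $z$ from the head of $\vec y$; that (Conflict) picks $z''\sqsubseteq z$ and updates $x_{k-1}\leftarrow x_{k-1}\sqcap z''\sqsubseteq z$, after which $x_{k-1}$ stays below $z$ permanently (Lemma~\ref{lem:decreasing}); the next (Decide) at $(n,k)$ then requires $x'_{k-1}\not\sqsubseteq z'$, forcing $z'\neq z$. Hence finiteness of $h(\mathit{CaD}^h_{n,k})$ bounds the number of (Candidate)/(Decide) firings at each $(n,k)$, and a simple balance argument on $k$ (Conflict increments $k$, the other two decrement it, $1\le k\le n$) bounds the number of (Conflict) firings for fixed $n$ \emph{without ever inspecting the (Conflict) choices}. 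Together with your bound on $n$, termination follows.
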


\begin{corollary}\label{cor:negativetermiantion}
	Let $h$ be a simple heuristic.
	If  $\mu(f\sqcup i) \not \sqsubseteq p$, then \emph{\APDR}$_h$ terminates.
\end{corollary}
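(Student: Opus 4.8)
The plan is to derive Corollary~\ref{cor:negativetermiantion} directly from Theorem~\ref{thm:negativetermination} by checking that a simple heuristic always satisfies the finiteness hypothesis $h(\mathit{CaD}^h_{n,k})$ finite, indeed a singleton, for every pair $(n,k)$. First I would recall that a simple heuristic is, by definition~\eqref{eq:simple}, legit, so the legitimacy hypothesis of Theorem~\ref{thm:negativetermination} is immediate and nothing more needs to be said there.

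Next I would analyze the set $\mathit{CaD}^h_{n,k}$ of reachable states at index $(n,k)$ that trigger (Candidate) or (Decide). The key observation is that for a simple heuristic the value of $h$ on such a state is determined purely by the index and by a single component, never requiring a free choice. Concretely, if $(\vec{x}\|\vec{y})_{n,k}\tr{\mathit{Ca}}$ then $k=n$ (the negative sequence is empty) and $h$ returns the constant $p$, so $h(\mathit{CaD}^h_{n,n})\subseteq\{p\}$. If $(\vec{x}\|\vec{y})_{n,k}\tr{D}$ then $h$ returns $g(y_k)$; here I would invoke Proposition~\ref{prop:negativesequencefinalchain}, which tells us that for any state reachable by {\APDR}$_h$ with a simple heuristic, the negative sequence satisfies $y_k=g^{n-1-k}(p)$ as an equality. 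Hence $h(s)=g(y_k)=g^{n-k}(p)$ is a single value depending only on $(n,k)$, so $h(\mathit{CaD}^h_{n,k})\subseteq\{g^{n-k}(p)\}$. In both cases the set has at most one element, so it is finite for all $n,k$.

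With the finiteness hypothesis verified and legitimacy automatic, the hypothesis $\mu(f\sqcup i)\not\sqsubseteq p$ of the corollary is exactly the remaining hypothesis of Theorem~\ref{thm:negativetermination}, so {\APDR}$_h$ terminates, which is the claim.

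The only place requiring genuine care, rather than bookkeeping, is the appeal to Proposition~\ref{prop:negativesequencefinalchain} in the (Decide) case: one must be sure that this proposition applies to \emph{all} states in $\mathit{CaD}^h_{n,k}$, i.e. to every reachable state of {\APDR}$_h$ and not merely to a boundary execution, and that it pins $y_k$ down completely. Since the proposition is stated for all $(\vec x\|\vec y)_{n,k}\in\states^h$ and gives $y_j=g^{n-1-j}(p)$ for every $j\in[k,n-1]$, this is fine; the main obstacle is thus really just making sure the quantifiers line up. A subtler point worth a sentence is that one should not even need the sharp equality: the weaker fact that, at a fixed index, only finitely many distinct negative sequences can arise would already suffice, but invoking Proposition~\ref{prop:negativesequencefinalchain} gives the cleanest argument since it collapses the set to a singleton outright.
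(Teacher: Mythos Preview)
Your proposal is correct and follows essentially the same approach as the paper: invoke Proposition~\ref{prop:negativesequencefinalchain} to show that $h(\mathit{CaD}^h_{n,k})$ is finite (indeed a singleton, since for fixed $(n,k)$ only one of (Candidate) or (Decide) can apply), then appeal to Theorem~\ref{thm:negativetermination}. Your argument is in fact slightly sharper than the paper's, which bounds the cardinality by $2$ rather than $1$, but the structure is identical.
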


Note that this corollary ensures negative termination whenever we use the canonical choices in (Candidate) and (Decide) \emph{irrespective of the choice for}  (Conflict), therefore it holds for both simple initial and simple final heuristics.

\section{Recovering Adjoints with Lower Sets}\label{sec:downset}

In the previous section, we have introduced an algorithm for checking $\mu b \sqsubseteq p$ whenever $b$ is of the form $f\sqcup i$ for an element $i\in L$ and a left-adjoint $f\colon L \to L$. This, unfortunately, is not the case for several interesting problems, like the max reachability problem~\cite{BaierK} that we will illustrate in Section~\ref{sec:MDP}.

The next result informs us that, under standard assumptions, one can transfer the problem of checking $\mu b \sqsubseteq p$
to lower sets, where adjoints can always be defined.
Recall that, for a lattice $(L,\sqsubseteq)$, a \emph{lower set} is a subset $X\subseteq L$ such that if $x\in X$ and $x'\sqsubseteq x$ then $x'\in X$; the set of lower sets of $L$ forms a complete lattice $(L^\downarrow, \subseteq)$
with joins and meets given by  union and intersection; as expected $\bot$ is $\emptyset$ and $\top$ is $L$.
Given $b\colon L\to L$, one can define two functions $b^\downarrow, b^\downarrow_r \colon L^\downarrow \to L^\downarrow$ as $b^\downarrow(X) \defeq b(X)^\downarrow$
and $b^\downarrow_r(X) \defeq \{x \mid b(x) \in X\}$. It holds that $b^\downarrow\, \dashv\, b^\downarrow_r$.

  \begin{equation}\label{eq:lowersetadjunction}
    \xymatrix{
      (L, \sqsubseteq) \lloop{b} \ar@/_1.5ex/[r]_-{(-)^\downarrow}^-\bot
      &(L^\downarrow, \subseteq) \rloop{b^\downarrow\, \dashv\, b^\downarrow_r} \ar@/_1.5ex/[l]_-{\bigsqcup}
    }
  \end{equation}
In the diagram above, $(-)^\downarrow\colon x \mapsto \{x' \mid x' \sqsubseteq x\}$
  and $\bigsqcup \colon L^\downarrow \to L$ maps a lower set $X$ into $\bigsqcup \{x\mid x\in X\}$. The maps $\bigsqcup$ and $(-)^\downarrow$ form a \emph{Galois insertion}, namely $\bigsqcup \dashv (-)^\downarrow$ and $\bigsqcup (-)^\downarrow = id$, and thus one can think of~\eqref{eq:lowersetadjunction} in terms of \emph{abstract interpretation} \cite{cousot77,cousot21}: $L^\downarrow$ represents the concrete domain, $L$ the abstract domain and $b$ is a sound abstraction of $b^\downarrow$. Most importantly, it turns out that $b$ is  \emph{forward-complete}~\cite{GRS00,BonchiGGP18} w.r.t. $b^\downarrow$, namely the following equation holds.
\begin{equation}\label{eq:EMlaw}
(-)^\downarrow \circ b = b^\downarrow \circ (-)^\downarrow
\end{equation}

\begin{proposition} \label{prop:prob_down_up}
Let $(L,\sqsubseteq)$ be a complete lattice, $p\in L$ and $b \colon L \to L$ be a $\omega$-continuous map. Then $\mu b \sqsubseteq p$ iff $\mu (b^\downarrow \cup \bot^\downarrow) \subseteq p^\downarrow$.
\end{proposition}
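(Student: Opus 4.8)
The plan is to reduce both inequalities to one and the same condition, namely ``$b^n(\bot)\sqsubseteq p$ for every $n\in\Nat$'', by invoking the Kleene fixed-point characterisations on either side. For the left-hand side this is immediate: since $b$ is $\omega$-continuous, \eqref{eq:Kleenefpthm} gives $\mu b=\bigsqcup_{n\in\Nat}b^n(\bot)$, so $\mu b\sqsubseteq p$ holds iff $b^n(\bot)\sqsubseteq p$ for all $n$.

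For the right-hand side I would first observe that $b^\downarrow\cup\bot^\downarrow$, viewed as a map on $L^\downarrow$, has the shape $f\sqcup i$ with $f=b^\downarrow$ and $i=\bot^\downarrow=(-)^\downarrow(\bot)$ (the principal lower set $\{\bot\}$, not the bottom $\emptyset$ of $L^\downarrow$). Since $b^\downarrow$ is a left adjoint, with right adjoint $b^\downarrow_r$, it is continuous, so \eqref{eq:kleeneadjoint} (equivalently \eqref{eq:Kleenefpthm}) applies and $\mu(b^\downarrow\cup\bot^\downarrow)=\bigcup_{n\in\Nat}(b^\downarrow)^n(\bot^\downarrow)$, where the join in $L^\downarrow$ is union. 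Next I would use forward-completeness \eqref{eq:EMlaw}, i.e.\ $(-)^\downarrow\circ b=b^\downarrow\circ(-)^\downarrow$: an easy induction on $n$ (with inductive step $(b^\downarrow)^{n+1}\circ(-)^\downarrow=b^\downarrow\circ(-)^\downarrow\circ b^n=(-)^\downarrow\circ b^{n+1}$) shows $(b^\downarrow)^n\circ(-)^\downarrow=(-)^\downarrow\circ b^n$, so in particular $(b^\downarrow)^n(\bot^\downarrow)=b^n(\bot)^\downarrow$. Therefore $\mu(b^\downarrow\cup\bot^\downarrow)=\bigcup_{n\in\Nat}b^n(\bot)^\downarrow$.

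Finally, $\bigcup_{n\in\Nat}b^n(\bot)^\downarrow\subseteq p^\downarrow$ holds iff $b^n(\bot)^\downarrow\subseteq p^\downarrow$ for all $n$, and $x^\downarrow\subseteq y^\downarrow$ iff $x\sqsubseteq y$ (one direction uses $x\in x^\downarrow$, the other transitivity of $\sqsubseteq$; equivalently $(-)^\downarrow$ is order-reflecting since $\bigsqcup\dashv(-)^\downarrow$ is a Galois insertion). Hence $\mu(b^\downarrow\cup\bot^\downarrow)\subseteq p^\downarrow$ iff $b^n(\bot)\sqsubseteq p$ for all $n$, which by the first paragraph is exactly $\mu b\sqsubseteq p$. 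I do not expect any genuine obstacle here; the only points needing care are the iterated use of \eqref{eq:EMlaw} to slide $(-)^\downarrow$ past $b^n$, and checking that the continuity hypotheses underlying the two Kleene characterisations are available ($b$ by assumption, $b^\downarrow$ because it is a left adjoint).
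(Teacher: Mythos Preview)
Your proposal is correct and follows essentially the same approach as the paper: both proofs reduce each side to the condition ``$b^n(\bot)\sqsubseteq p$ for all $n$'' via the Kleene characterisations, use an inductive iteration of \eqref{eq:EMlaw} to identify $(b^\downarrow)^n(\bot^\downarrow)$ with $(b^n\bot)^\downarrow$, and conclude by the order-reflection of $(-)^\downarrow$. The only cosmetic difference is that the paper presents the argument as a single chain of logical equivalences, whereas you compute the two least fixed points explicitly and then compare.
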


By means of Proposition~\ref{prop:prob_down_up}, we can thus solve $\mu b \sqsubseteq p$ in $L$ by running {\APDR} on $(\bot^\downarrow, b^\downarrow,b_r^{\downarrow}, p^{\downarrow})$.
Hereafter, we tacitly assume that $b$ is $\omega$-continuous.

\subsection{{\ADPDR}: Positive Chain in $L$, Negative Sequence in $L^\downarrow$}\label{ssec:ADPDR}

\begin{figure}[t]
\begin{lrbox}{\lstbox}\begin{minipage}{\textwidth}
\centering
\underline{{\ADPDR} $(b,p)$}
\begin{codeNT}
<INITIALISATION>
  $( \vec{x} \| \vec{Y} )_{n,k}$ := $(\emptyset,\bot,\top\|\varepsilon)_{3,3}$
<ITERATION>
  case $( \vec{x} \| \vec{Y} )_{n,k}$ of								
	   $\vec{Y}=\varepsilon$ And $x_{n-1} \sqsubseteq p$     :                    
			$( \vec{x} \| \vec{Y} )_{n,k}$ := $( \vec{x}, \top \| \varepsilon )_{n+1,n+1}$
	   $\vec{Y}=\varepsilon$ And $x_{n-1} \not \sqsubseteq p$    :                     
			choose $Z\in L^{\downarrow}$ st  $x_{n-1} \not \in Z$ And  $p \in Z$;
			$( \vec{x} \| \vec{Y} )_{n,k}$ := $( \vec{x} \| Z )_{n,n-1}$
	   $\vec{Y} \neq \varepsilon$ And $b(x_{k-1}) \not \in Y_k$ :                        
			choose $Z\in L^{\downarrow}$ st $x_{k-1} \not \in Z$ And $b^{\downarrow}_r(Y_k) \subseteq Z$;
			$(\vec{x} \| \vec{Y} )_{n,k}$ := $(\vec{x} \| Z , \vec{Y} )_{n,k-1}$
	   $\vec{Y} \neq \varepsilon$ And $b(x_{k-1}) \in Y_k$ :                        
			choose $z \in L$ st $z \in Y_k$ And $b(x_{k-1} \sqcap z) \sqsubseteq z$;
			$(\vec{x} \| \vec{Y} )_{n,k}$ := $(\vec{x} \sqcap_k z \| \mathsf{tail}(\vec{Y}) )_{n,k+1}$
  endcase
<TERMINATION>
	if $\exists j\in [0,n-2]\,.\, x_{j+1} \sqsubseteq x_j$ then return true		
	if $Y_1=\emptyset$ then return false							
\end{codeNT}
\end{minipage}\end{lrbox}
\centering
\scalebox{.8}{\usebox\lstbox}
\caption{The algorithm {\ADPDR} for checking $\mu b \sqsubseteq p$: the elements of negative sequence are in $L^\downarrow$, while those of the positive chain are in $L$, with the only exception of $x_0$ which is constantly the bottom lower set $\emptyset$. For $x_0$, we fix $b(x_0) = \bot$.}
\label{fig:downclosed}
\end{figure}

While {\APDR} on $(\bot^\downarrow, b^\downarrow,b_r^{\downarrow}, p^{\downarrow})$ might be computationally expensive, it is the first step toward the definition of an efficient algorithm
that exploits a convenient form of the positive chain.

A lower set $X\in L^{\downarrow}$ is said to be a \emph{principal} if $X=x^\downarrow$ for some $x\in L$. Observe that the top of the lattice $(L^\downarrow, \subseteq)$ is a principal, namely $\top^\downarrow$, and that the meet (intersection) of two principals $x^\downarrow$ and $y^\downarrow$ is the principal $(x\sqcap y)^\downarrow$.

Suppose now that, in (Conflict), {\APDR}$(\bot^\downarrow, b^\downarrow,b_r^{\downarrow}, p^{\downarrow})$ always chooses principals rather than arbitrary lower sets.
This suffices to guarantee that all the elements of $\vec{x}$ are principals (with the only exception of $x_0$ which is constantly the bottom element of $L^\downarrow$ that, note, is $\emptyset$ and not $\bot^\downarrow$). In fact, the elements of $\vec{x}$ are all obtained by (Unfold), that adds the principal $\top^\downarrow$, and by (Conflict), that  takes their meets with the chosen principal.

Since principals are in bijective correspondence with the elements of $L$, by imposing to {\APDR}$(\bot^\downarrow, b^\downarrow,b_r^{\downarrow}, p^{\downarrow})$ to choose  a principal in (Conflict), we obtain an algorithm, named {\ADPDR}, where the elements of the positive chain are drawn from $L$, while the negative sequence is taken in $L^{\downarrow}$. The algorithm is reported in Fig.~\ref{fig:downclosed} where we use the notation $( \vec{x} \| \vec{Y} )_{n,k}$ to emphasize that the elements of the negative sequence are lower sets of elements in $L$.

All definitions and results illustrated in Section \ref{sec:APDR} for {\APDR}  are inherited\footnote{Up to a suitable renaming: the domain is $(L^\downarrow, \subseteq)$ instead of $(L,\sqsubseteq)$, the parameters are $\bot^\downarrow, b^\downarrow,b_r^{\downarrow}, p^{\downarrow}$ instead of $i, f,g, p$ and the negative sequence is $\vec{Y}$ instead of $\vec{y}$.} by {\ADPDR}, with the only exception of Proposition \ref{prop:CanonicalChoice}.3. The latter does not hold, as it prescribes a choice for (Conflict) that may not be a principal. In contrast, the choice in Proposition \ref{prop:CanonicalChoice}.4 is, thanks to \eqref{eq:EMlaw}, a principal. This means in particular that the simple initial heuristic is always applicable. %

\begin{theorem}\label{th:ADPDR}
All results in Section \ref{sec:APDR},  but Prop.~\ref{prop:CanonicalChoice}.3, hold for \emph{\ADPDR}.
\end{theorem}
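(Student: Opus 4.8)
The plan is to realise {\ADPDR} as the instance {\APDR}$(\bot^\downarrow, b^\downarrow, b^\downarrow_r, p^\downarrow)$ --- which is a legitimate instance since $b^\downarrow \dashv b^\downarrow_r$ (see~\eqref{eq:lowersetadjunction}) and which decides $\mu b \sqsubseteq p$ by Proposition~\ref{prop:prob_down_up} --- in which the nondeterministic choice of the rule (Conflict) is restricted to \emph{principals}, and then to transfer every result of Section~\ref{sec:APDR} essentially for free, since those results are proved for {\APDR} on an arbitrary input and, wherever choices occur, for an arbitrary valid resolution of the nondeterminism.

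First I would record the elementary facts that carry the argument. Writing $(-)^\downarrow\colon L\to L^\downarrow$ for the (injective) principal embedding, I will use: (i) $\top^\downarrow$ is a principal and $x^\downarrow\cap y^\downarrow=(x\sqcap y)^\downarrow$, so along any run in which (Conflict) picks only principals the whole positive chain stays a chain of principals --- the sole exception being $x_0$, which is the genuine bottom $\emptyset$ of $L^\downarrow$ and is handled by the convention $b(x_0)=\bot$ of Fig.~\ref{fig:downclosed}; (ii) forward-completeness~\eqref{eq:EMlaw}, i.e.\ $b^\downarrow(x^\downarrow)=b(x)^\downarrow$, which turns $b^\downarrow$ applied to a principal into a principal and, since $Y_k$ is a lower set, rewrites the {\APDR}-guard $b^\downarrow(x_{k-1}^\downarrow)\subseteq Y_k$ as $b(x_{k-1})\in Y_k$; (iii) for any lower set $Z$ and $x\in L$ one has $x^\downarrow\subseteq Z$ iff $x\in Z$, which rewrites the membership guards of (Candidate) and (Decide); (iv) $b^\downarrow_r$ does \emph{not} in general send principals to principals, which is precisely why the negative sequence must genuinely live in $L^\downarrow$ and cannot be compressed back to $L$. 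Then I would check the operational correspondence rule by rule: (Unfold) appends $\top^\downarrow$ in both algorithms; the guards and the choice $Z=p^\downarrow$ of (Candidate), and those of (Decide) together with $Z=b^\downarrow_r(Y_k)$, translate by (iii); and (Conflict), once confined to a principal $z^\downarrow$, has guard $z^\downarrow\subseteq Y_k$ and $(b^\downarrow\cup\bot^\downarrow)\bigl((x_{k-1}\sqcap z)^\downarrow\bigr)\subseteq z^\downarrow$, which by (i)--(ii) is exactly $z\in Y_k$ and $b(x_{k-1}\sqcap z)\sqsubseteq z$, with update $\vec x\cap_k z^\downarrow=(\vec x\sqcap_k z)^\downarrow$ --- i.e.\ precisely the rules of Fig.~\ref{fig:downclosed}. (The start state $(\emptyset,\bot,\top\|\varepsilon)_{3,3}$ of {\ADPDR} is the state that {\APDR}$(\bot^\downarrow,b^\downarrow,b^\downarrow_r,p^\downarrow)$ reaches from $(\emptyset,\top^\downarrow\|\varepsilon)_{2,2}$ along the forced prefix $\tr{\mathit{Ca}}_{p^\downarrow}\tr{\mathit{Co}}_{\bot^\downarrow}\tr{U}$, so runs of {\ADPDR} embed step-by-step into runs of {\APDR}$(\bot^\downarrow,b^\downarrow,b^\downarrow_r,p^\downarrow)$, and the invariants of Fig.~\ref{fig:invariants} hold after re-reading principals as their $L$-representatives.)

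With the embedding in place, every statement of Section~\ref{sec:APDR} descends to {\ADPDR}: the invariants of Fig.~\ref{fig:invariants}, soundness (Theorem~\ref{th:soundness}, which explicitly holds for any choice of $z$), impossibility of loops (Proposition~\ref{prop:progres}), the canonical choices \ref{prop:CanonicalChoice}.1 and \ref{prop:CanonicalChoice}.2 --- concerning (Candidate) and (Decide), which are unrestricted --- and \ref{prop:CanonicalChoice}.4, whose prescribed element $(b^\downarrow\cup\bot^\downarrow)(x_{k-1}^\downarrow)=b(x_{k-1})^\downarrow$ \emph{is} a principal by~\eqref{eq:EMlaw}; hence the progression conclusion of Proposition~\ref{prop:CanonicalChoice} still holds (restricting (Conflict) to principals never blocks it), and so do the heuristic notions built on these --- \emph{legit}, \emph{simple}, \emph{simple initial} --- together with Proposition~\ref{prop:negativesequencefinalchain} and negative termination (Theorem~\ref{thm:negativetermination}, Corollary~\ref{cor:negativetermiantion}). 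The only casualty is Proposition~\ref{prop:CanonicalChoice}.3: its canonical choice for (Conflict) is the full lower set $Y_k$, which by (iv) need not be a principal and so is not, in general, an admissible move of {\ADPDR}; consequently the \emph{simple final} heuristic need not exist, although \emph{simple initial} always does. I expect the only real work to be this last bit of bookkeeping: verifying that confining (Conflict) to principals preserves the applicability of every rule --- which, as noted, rests entirely on~\eqref{eq:EMlaw}/Proposition~\ref{prop:CanonicalChoice}.4 --- and treating the degenerate element $x_0=\emptyset$ (not a principal) coherently through the convention $b(x_0)=\bot$.
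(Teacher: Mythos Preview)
Your proposal is correct and follows essentially the same approach as the paper's proof: realise {\ADPDR} as {\APDR}$(\bot^\downarrow,b^\downarrow,b^\downarrow_r,p^\downarrow)$ with (Conflict) restricted to principals, use~\eqref{eq:EMlaw} to keep the positive chain principal and to justify that Proposition~\ref{prop:CanonicalChoice}.4 survives while \ref{prop:CanonicalChoice}.3 does not, and handle the shifted initial state via the three-step prefix $\tr{\mathit{Ca}}_{p^\downarrow}\tr{\mathit{Co}}_{\bot^\downarrow}\tr{U}$. One small wording point: that prefix is not ``forced'' in {\APDR} but merely an available (canonical) run reaching {\ADPDR}'s start state; this does not affect the argument.
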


\subsection{{\ADPDR} simulates LT-PDR}\label{ssec:LTPDRvsADPDR}
The closest approach to {\APDR} and {\ADPDR} is the lattice-theoretic extension of the original PDR, called LT-PDR~\cite{KoriCAV22}. While these algorithms exploit essentially the same positive chain to find an invariant, the main difference lies in the sequence used to witness the existence of some counterexamples. %

\begin{definition}[Kleene sequence, from~\cite{KoriCAV22}]
A sequence $\vec{c}= c_k,\dots, c_{n-1}$ of elements of $L$ is a \emph{Kleene sequence}
if the conditions \emph{(C1)} and \emph{(C2)} below hold.
It is \emph{conclusive} if also condition \emph{(C0)} holds.
\[
\emph{(C0) } c_1 \sqsubseteq b(\bot),
\qquad
\emph{(C1) } c_{n-1} \not \sqsubseteq p,
\qquad
\emph{(C2) } \forall j\in[k,n-2].~c_{j+1} \sqsubseteq b(c_j)\text{.}
\]
\end{definition}

LT-PDR tries to construct an under-approximation $c_{n-1}$ of $b^{n-2}(\bot)$ that violates the property $p$. The Kleene sequence is constructed by trial and error, starting by some arbitrary choice of $c_{n-1}$.

{\APDR} crucially differs from LT-PDR in the search for counterexamples: LT-PDR under-approximates the final chain while {\APDR} over\hyp{}approximates it. The algorithms are thus incomparable.
However, we can draw a formal correspondence between {\ADPDR} and LT-PDR by showing that  {\ADPDR} simulates LT-PDR, but cannot be simulated by LT-PDR.
In fact,
{\ADPDR} exploits the existence of the adjoint to start from an over\hyp{}approximation $Y_{n-1}$ of $p^\downarrow$ and computes backward an over-approximation of the set of  safe states.
 Thus, the key difference comes from the strategy to look for a counterexample:
  to prove $\mu b \not \sqsubseteq p$,
  {\ADPDR} tries to find $Y_{n-1}$ satisfying $p \in Y_{n-1}$ and $\mu b \not \in Y_{n-1}$
  while
  LT-PDR tries to find $c_{n-1}$ s.t. $c_{n-1} \not \sqsubseteq p$ and $c_{n-1} \sqsubseteq \mu b$.

\medskip

Theorem~\ref{th:LT-PDR-instance-ADPDR} below states that any execution of LT-PDR can be mimicked by {\ADPDR}.
The proof exploits a map from LT-PDR's Kleene sequences  $\vec{c}$ to {\ADPDR}'s negative sequences $\negation{c}$ of a particular form.
Let $(L^{\uparrow}, \supseteq)$  be the complete lattice of upper sets, namely subsets $X \subseteq L$ such that $X=X^\uparrow \defeq \{x'\in L \mid \exists x\in X \,. \, x\sqsubseteq x'\}$.
There is an isomorphism $\neg \colon {(L^\uparrow, \supseteq)} \stackrel{\cong}{\longleftrightarrow} (L^\downarrow, \subseteq)$ mapping each $X\subseteq S$ into its complement.
For a Kleene sequence $\vec{c} = c_k,\dots, c_{n-1}$ of LT-PDR,  the sequence
$\negation{c} \defeq \lnot (\{ c_k \}^{\uparrow}), \dots, \lnot (\{ c_{n-1} \}^{\uparrow})$
is a negative sequence, in the sense of Definition \ref{def:neg_seq}, for {\ADPDR}. Most importantly, the assignment $\vec{c} \mapsto \negation{c}$ extends to a function, from the states of LT-PDR to those of {\ADPDR}, that is proved to be a \emph{strong simulation} \cite{Mil89}.

\begin{theorem}%
\label{th:LT-PDR-instance-ADPDR}
	\emph{\ADPDR} simulates LT-PDR.
\end{theorem}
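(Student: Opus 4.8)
The plan is to make precise the notion of \emph{strong simulation} between the two algorithms viewed as labeled transition systems, and then exhibit a concrete simulation relation $R$ built from the map $\vec{c}\mapsto \negation{c}$. First I would set up both algorithms as LTSs over a common alphabet of ``rule labels'' (the LT-PDR rules of~\cite{KoriCAV22} on one side, and $U,\mathit{Ca},\mathit{Co},D$ on the other), and recall that a strong simulation of LT-PDR by {\ADPDR} is a relation $R$ between their state spaces such that (i) the initial state of LT-PDR is $R$-related to the initial state of {\ADPDR}, (ii) whenever $sRt$ and $s$ takes a step $s\to s'$ in LT-PDR, there is a matching step (or finite sequence of steps) $t\to t'$ in {\ADPDR} with $s'Rt'$, and (iii) $R$ respects the verdicts: if $s$ is conclusive-true (resp.\ conclusive-false) then so is $t$. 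Condition (iii) is what makes the simulation meaningful for soundness transfer.

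Next I would define $R$ explicitly. On the positive side the two algorithms share essentially the same chain $\vec{x}$, so $R$ should require the positive chains to coincide (up to the constant $x_0$ convention of Fig.~\ref{fig:downclosed}). On the negative side, a state of LT-PDR carries a Kleene sequence $\vec{c}=c_k,\dots,c_{n-1}$, and $R$ relates it to the {\ADPDR} state whose negative sequence is exactly $\negation{c}=\lnot(\{c_k\}^{\uparrow}),\dots,\lnot(\{c_{n-1}\}^{\uparrow})$, as defined before the theorem statement. I would first verify the two facts already asserted in the text but not yet proved: that $\negation{c}$ is a genuine negative sequence in the sense of Definition~\ref{def:neg_seq} for the parameters $(\bot^\downarrow,b^\downarrow,b^\downarrow_r,p^\downarrow)$ — i.e.\ it satisfies \eqref{eq:Pepsilon} and \eqref{eq:negativeG} — and that conclusiveness is preserved: $c_1\sqsubseteq b(\bot)$ translates, under $\neg(\{-\}^\uparrow)$, to $\bot\notin\lnot(\{c_1\}^\uparrow)$, wait more carefully $b(\bot)\in\lnot(\{c_1\}^\uparrow)$ fails iff $c_1\sqsubseteq b(\bot)$, which is precisely the {\ADPDR} false-verdict condition $Y_1=\emptyset$ being avoided — here I would just check the translation of (C0) against $\bot^\downarrow\not\subseteq Y_1$ carefully and the translation of (C1) against $p\in Y_{n-1}$.

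The core of the proof is the step-matching argument, done by a case analysis on which LT-PDR rule fires. For each LT-PDR rule I must produce the corresponding {\ADPDR} rule application and check that the chosen witness lands inside the ``principal / upper-set complement'' shape that $R$ demands, so that the target states are again $R$-related. The rules that extend or shorten the Kleene sequence (the LT-PDR analogues of Candidate/Decide/Conflict) are matched by {\ADPDR}'s $\mathit{Ca}$, $D$, $\mathit{Co}$; the key point is that LT-PDR's under-approximation moves on $\vec{c}$ correspond, via $\neg(\{-\}^\uparrow)$, to over-approximation moves on $\vec{Y}$, and that the non-deterministic choice available to {\ADPDR} is wide enough to pick exactly $\lnot(\{c_j\}^\uparrow)$ — this is where I would lean on the fact that {\ADPDR} may choose arbitrary lower sets in (Candidate)/(Decide) (only (Conflict) is restricted to principals), together with \eqref{eq:EMlaw} to see that the relevant $b^\downarrow_r$-computations stay coherent. (Unfold) matches (Unfold) trivially since both just append $\top$ / $\top^\downarrow$.

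I expect the main obstacle to be the (Conflict)/Conflict case together with the bookkeeping of indices. In {\ADPDR}, (Conflict) is forced to choose a \emph{principal} $z^\downarrow$, and I must check that the update LT-PDR performs on its positive chain — taking meets $\vec{x}\sqcap_k z$ — is faithfully reproduced, i.e.\ that the witness $z\in L$ that LT-PDR uses in its positive-side conflict move is simultaneously a legal choice for {\ADPDR}'s (Conflict) guard $z\in Y_k$ and $b(x_{k-1}\sqcap z)\sqsubseteq z$; unwinding $Y_k=\lnot(\{c_k\}^\uparrow)$, the condition $z\in Y_k$ becomes $z\not\sqsupseteq c_k$, and I would need to match this against the inequality LT-PDR maintains between $c_k$ and the refuted element. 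A secondary subtlety is that a single {\ADPDR} step might need to be matched by zero LT-PDR steps or vice versa when the index conventions differ (LT-PDR's $\vec{c}$ starts at length considerations that differ by one from $\vec{Y}$, cf.\ the constant $x_0$), so I would phrase the simulation as allowing $t\to^{*}t'$ rather than a single step, and be careful that the verdict-preservation clause (iii) is checked at exactly the states where LT-PDR halts. Once all cases close, $R$ is a strong simulation, which is the assertion of Theorem~\ref{th:LT-PDR-instance-ADPDR}. \qed
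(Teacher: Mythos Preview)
Your overall plan---define a functional relation from LT-PDR states to {\ADPDR} states via $\vec{c}\mapsto\negation{c}$ on the negative side and identity (with the $\emptyset$ prefix) on the positive side, then do a rule-by-rule case analysis---is exactly the paper's approach. However, two of the three clauses you impose on $R$ fail as stated, and the paper has to do real work at both points.

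First, clause (i) is false: the initial state $s'_0=(\bot,b(\bot)\|\varepsilon)_{2,2}$ of LT-PDR maps under your $R$ to $(\emptyset,\bot,b(\bot)\|\varepsilon)_{3,3}$, which is \emph{not} the initial state $s_0=(\emptyset,\bot,\top\|\varepsilon)_{3,3}$ of {\ADPDR}. The paper repairs this by showing that (when $b(\bot)\sqsubseteq p$) {\ADPDR} can reach $\mathcal{R}(s'_0)$ from $s_0$ in two steps, $s_0\trz{\mathit{Ca}}{}\trz{\mathit{Co}}{b(\bot)}\mathcal{R}(s'_0)$, and treats the case $b(\bot)\not\sqsubseteq p$ separately. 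You should relax (i) to ``the image of the initial state is \emph{reachable} from $s_0$''.

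Second, and more seriously, clause (iii) fails for the false verdict. Your muddled paragraph about (C0) is a symptom: LT-PDR returns false when $k'=1$ (with $c'_1\sqsubseteq b(\bot)$), but the corresponding {\ADPDR} state has $k=2$ and $Y_2=\lnot(\{c'_1\}^{\uparrow})$, which is \emph{not} conclusive---there is no $Y_1$ at all, and even if there were, $\lnot(\{c'_1\}^{\uparrow})$ is empty only when $c'_1=\bot$. The fix (in the paper) is that {\ADPDR} can perform one further (Decide) step choosing $Z=\emptyset$: the guard $b(x_1)=b(\bot)\notin Y_2$ holds precisely because $c'_1\sqsubseteq b(\bot)$, and $b_r^\downarrow(Y_2)\subseteq\emptyset$ holds because $c'_1\sqsubseteq b(\bot)\sqsubseteq b(x)$ for every $x$. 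After this step $Y_1=\emptyset$ and {\ADPDR} returns false. So verdict preservation is up to one extra {\ADPDR} step, not immediate; your clause (iii) must be weakened accordingly.
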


Remarkably, {\ADPDR}'s negative sequences are not limited to the images of LT-PDR's Kleene sequences: they are more general  than the complement of the upper closure of a singleton.
In fact, a single negative sequence of {\ADPDR} can represent \emph{multiple} Kleene sequences of LT-PDR at once.
Intuitively, this means that a single execution of {\ADPDR} can correspond to multiple runs of LT-PDR.
We can make this formal by means of the following result.

\begin{proposition}\label{prop:multipleLTPDR}
Let $\{\vec{c^m}\}_{m\in M}$ be a family of Kleene sequences.
Then its pointwise intersection $\bigcap_{m\in M} \negation{c^m}$ is a negative sequence.
\end{proposition}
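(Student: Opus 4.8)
\textbf{Proof proposal for Proposition~\ref{prop:multipleLTPDR}.}
The plan is to verify directly that the pointwise intersection $\vec{Y} \defeq \bigcap_{m\in M}\negation{c^m}$, with $Y_j \defeq \bigcap_{m\in M}\lnot(\{c^m_j\}^\uparrow)$, satisfies the two conditions \eqref{eq:Pepsilon} and \eqref{eq:negativeG} defining a negative sequence (Definition~\ref{def:neg_seq}), in the lattice $L^\downarrow$ with parameters $\bot^\downarrow, b^\downarrow, b^\downarrow_r, p^\downarrow$ as prescribed by Theorem~\ref{th:ADPDR}. The only subtlety is that the family $\{\vec{c^m}\}_{m\in M}$ may consist of Kleene sequences of differing lengths and starting indices; I would first reduce to the case where all $\vec{c^m}$ share a common index range $[k,n-1]$. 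This is harmless: given $\vec{c^m} = c^m_{k_m},\dots,c^m_{n_m-1}$, one takes $n \defeq \min_m n_m$ and $k \defeq \max_m k_m$, truncating each sequence to $[k,n-1]$; truncating a Kleene sequence to a subinterval still yields a Kleene sequence since (C1) only constrains the last element and (C2) is a local condition on consecutive pairs. (If the intervals have empty common intersection the statement is vacuous, the negative sequence being $\varepsilon$.)

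Next I would establish the translation facts already asserted in the text just before the proposition: for a single Kleene sequence $\vec{c}$, the sequence $\negation{c}$ is a negative sequence for {\ADPDR}. Concretely, $\lnot(\{c_j\}^\uparrow) = \{x \in L \mid c_j \not\sqsubseteq x\}$, so condition \eqref{eq:Pepsilon}, i.e. $p^\downarrow \subseteq Y_{n-1}$, becomes $c_{n-1}\not\sqsubseteq p'$ for every $p'\sqsubseteq p$, which follows from (C1) $c_{n-1}\not\sqsubseteq p$ by monotonicity of $\sqsubseteq$ in its second argument. Condition \eqref{eq:negativeG}, i.e. $b^\downarrow_r(Y_{j+1})\subseteq Y_j$, unfolds as: if $b(x)\in Y_{j+1}$, that is $c_{j+1}\not\sqsubseteq b(x)$, then $c_j\not\sqsubseteq x$; equivalently, $c_j\sqsubseteq x \Rightarrow c_{j+1}\sqsubseteq b(x)$, which is exactly (C2) $c_{j+1}\sqsubseteq b(c_j)$ combined with monotonicity of $b$.

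With the single-sequence case in hand, the intersection case is immediate because both defining conditions of a negative sequence are preserved by arbitrary intersections: \eqref{eq:Pepsilon} reads $p^\downarrow\subseteq Y_{n-1}$, and if $p^\downarrow\subseteq\negation{c^m}_{n-1}$ for every $m$ then $p^\downarrow\subseteq\bigcap_m\negation{c^m}_{n-1} = Y_{n-1}$; for \eqref{eq:negativeG}, using that $b^\downarrow_r$ preserves meets (it is a right adjoint, hence $b^\downarrow_r(\bigcap_m X_m)=\bigcap_m b^\downarrow_r(X_m)$, or one may argue set-theoretically since $b^\downarrow_r(X)=\{x\mid b(x)\in X\}$ commutes with intersection), we get $b^\downarrow_r(Y_{j+1}) = \bigcap_m b^\downarrow_r(\negation{c^m}_{j+1}) \subseteq \bigcap_m \negation{c^m}_j = Y_j$, where the inclusion is the intersection of the inclusions holding for each $m$. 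Hence $\vec{Y}$ satisfies \eqref{eq:Pepsilon} and \eqref{eq:negativeG}, so it is a negative sequence.

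I do not anticipate a genuine obstacle here; the one point requiring care is purely bookkeeping, namely the alignment of indices across the family (common $[k,n-1]$) and checking that truncation preserves the Kleene-sequence conditions. Everything else is a routine unfolding of the complement-of-upper-closure encoding together with the elementary fact that the two conditions defining a negative sequence are closed under arbitrary intersections, which in turn rests on monotonicity of $b$ and on $b^\downarrow_r$ commuting with intersections.
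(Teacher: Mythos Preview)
Your argument is correct and follows essentially the same route as the paper: first show that each $\negation{c^m}$ is a negative sequence (the paper isolates this as a separate proposition, proved exactly as you do by unfolding $\lnot(\{c_j\}^\uparrow)$ and using (C1), (C2) and monotonicity of $b$), then show that negative sequences are closed under pointwise meet (the paper isolates this as a lemma about the meet-semilattice structure of negative sequences, using that the right adjoint $g=b^\downarrow_r$ preserves meets). Your direct verification simply unfolds these two lemmas in place.

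One quibble on the bookkeeping you added: your truncation argument for aligning index ranges is not quite right. Truncating a Kleene sequence from the right, i.e.\ replacing $n$ by some $n'<n$, does \emph{not} in general preserve (C1): nothing forces $c_{n'-1}\not\sqsubseteq p$ just because $c_{n-1}\not\sqsubseteq p$. (Truncation from the left, increasing $k$, is fine.) Fortunately this digression is unnecessary: the paper's statement presupposes a common index range $[k,n-1]$---see the sentence immediately following the proposition, which defines the pointwise intersection for $j\in[k,n-1]$---and the paper's proof (via its meet-closure lemma) makes the same assumption.
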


The above intersection is pointwise in the sense that,  for all $j\in {[k,n-1]}$, it holds $(\bigcap_{m\in M} \negation{c^m})_j \defeq \bigcap_{m\in M} (\negation{c^m})_j = \lnot(\{ c_j^m \mid m \in M \}^{\uparrow})$: intuitively, this is (up to $\negation{\cdot}$) a set containing all the $M$ counterexamples.
Note that, if the negative sequence of {\ADPDR} makes \eqref{negativefinal} hold as an equality, as it is possible with any simple heuristic (see Proposition~\ref{prop:negativesequencefinalchain}), then its complement contains \emph{all} Kleene
sequences possibly computed by LT-PDR.

\begin{proposition}\label{prop:LTPDRfinal}
Let $\vec{c}$ be a Kleene sequence and
 $\vec{Y}$ be the negative sequence s.t. $Y_j= (b_r^\downarrow)^{n-1-j}(p^\downarrow)$ for all $j \in [k,n-1]$.
Then  $c_j \in \neg(Y_j)$ for all $j \in [k,n-1]$.
\end{proposition}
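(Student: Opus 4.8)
The plan is to proceed by induction on $j$, from $j=n-1$ downwards to $j=k$, using the Kleene sequence conditions (C1), (C2) and the definition of $b_r^\downarrow$. First I would unfold the claim: since $\neg(Y_j) = \{c \in L \mid c \notin Y_j\}$ when $Y_j$ is a lower set (via the complement isomorphism $\neg\colon L^\uparrow \cong L^\downarrow$ from Section~\ref{ssec:LTPDRvsADPDR}), showing $c_j \in \neg(Y_j)$ amounts to showing $c_j \notin Y_j$, i.e., $c_j \notin (b_r^\downarrow)^{n-1-j}(p^\downarrow)$. I would also record the two useful facts: $(b_r^\downarrow)^0(p^\downarrow) = p^\downarrow$, and $(b_r^\downarrow)^{m+1}(p^\downarrow) = b_r^\downarrow\big((b_r^\downarrow)^m(p^\downarrow)\big) = \{x \mid b(x) \in (b_r^\downarrow)^m(p^\downarrow)\}$, directly from the definition $b_r^\downarrow(X) = \{x \mid b(x)\in X\}$ given before~\eqref{eq:lowersetadjunction}.

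For the base case $j = n-1$: here $Y_{n-1} = (b_r^\downarrow)^0(p^\downarrow) = p^\downarrow = \{x \mid x \sqsubseteq p\}$, and by (C1) we have $c_{n-1} \not\sqsubseteq p$, hence $c_{n-1} \notin Y_{n-1}$, i.e.\ $c_{n-1} \in \neg(Y_{n-1})$. For the inductive step, suppose $c_{j+1} \notin Y_{j+1} = (b_r^\downarrow)^{n-2-j}(p^\downarrow)$ for some $j$ with $k \le j \le n-2$; I want $c_j \notin Y_j = (b_r^\downarrow)^{n-1-j}(p^\downarrow) = b_r^\downarrow(Y_{j+1}) = \{x \mid b(x) \in Y_{j+1}\}$. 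Suppose toward a contradiction that $c_j \in Y_j$, i.e.\ $b(c_j) \in Y_{j+1}$. By (C2), $c_{j+1} \sqsubseteq b(c_j)$, and since $Y_{j+1}$ is a lower set (every element of $L^\downarrow$ is), from $b(c_j) \in Y_{j+1}$ we get $c_{j+1} \in Y_{j+1}$, contradicting the induction hypothesis. Hence $c_j \notin Y_j$, which is $c_j \in \neg(Y_j)$.

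The main point to be careful about — though it is not really an obstacle — is the bookkeeping on indices and the fact that $\vec{Y}$ as defined (with $Y_j = (b_r^\downarrow)^{n-1-j}(p^\downarrow)$ for $j\in[k,n-1]$) is genuinely a negative sequence in the sense of Definition~\ref{def:neg_seq}: condition \eqref{eq:Pepsilon} holds since $Y_{n-1} = p^\downarrow \ni p$, and \eqref{eq:negativeG} (instantiated as $b_r^\downarrow(Y_{j+1}) \subseteq Y_j$) holds with equality by construction, so the statement is well-posed; this is anyway already covered by Proposition~\ref{prop:negativesequencefinalchain} applied in $L^\downarrow$. I do not expect any real difficulty here: the proof is a short downward induction whose only ingredients are (C1), (C2), the definition of $b_r^\downarrow$, and the lower-set (downward-closure) property of the $Y_j$.
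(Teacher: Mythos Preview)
Your proof is correct. It differs from the paper's argument, which is more indirect: the paper first invokes Proposition~\ref{prop:negLTPDR} to conclude that $\negation{c} = \neg(\{c_k\}^\uparrow),\dots,\neg(\{c_{n-1}\}^\uparrow)$ is itself a negative sequence, and then applies invariant~\eqref{negativefinal} (which says that the sequence $j\mapsto (b_r^\downarrow)^{n-1-j}(p^\downarrow)$ is the least negative sequence) to obtain $(b_r^\downarrow)^{n-1-j}(p^\downarrow) \subseteq \neg(\{c_j\}^\uparrow)$, whence $c_j \in \neg(Y_j)$ by complementation. Your route is a direct downward induction using only (C1), (C2), the definition of $b_r^\downarrow$, and downward closure of $Y_{j+1}$. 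The paper's approach has the virtue of reusing the structural result that $\vec{Y}$ is the bottom of the meet-semilattice of negative sequences, so the proposition falls out as a one-line corollary; your approach is more self-contained and avoids the detour through $\negation{c}$ and the invariant, at the cost of re-doing essentially the same induction that underlies~\eqref{negativefinal}.
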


While the previous result suggests that simple heuristics are always the best in theory, as they can carry all counterexamples, this is  often not the case in practice, since
they might be computationally hard and outperformed by  some smart over-approximations. An example is given by \eqref{eq:secondterminatingheuristics} in the next section.

\section{Instantiating {\ADPDR} for MDPs}\label{sec:MDP}
In this section we illustrate how to use {\ADPDR} to address the max reachability problem \cite{BaierK} for Markov Decision Processes.

A \emph{Markov Decision Process} (MDP) is a tuple  $(A, S, s_\iota, \delta)$ where $A$ is a set of labels, $S$ is a set of states, $s_\iota \in S$ is an initial state, and $\delta \colon S\times A  \to \mathcal{D}S + 1$ is a transition function.
  Here $\mathcal{D}S$ is the set of probability distributions over $S$, namely functions $d\colon S\to [0, 1]$ such that $\sum_{s\in S} d(s)=1$, and $\mathcal{D}S + 1$ is the disjoint union of $\mathcal{D}S$ and $1=\{*\}$. The transition function $\delta$ assigns to every label $a\in A$ and to every state $s\in S$ either a distribution of states or $* \in 1$. We assume that both $S$ and $A$ are finite sets and
that the set $\mathit{Act}(s)\defeq \{ a\in A \mid \delta(s,a)\neq *\}$ of actions enabled at $s$ is non-empty for all states.

Intuitively, the \emph{max reachability problem} requires to check whether the probability of reaching some bad states $\beta \subseteq S$ is less than or equal to a given threshold $\lambda \in [0, 1]$.
Formally, it can be expressed in lattice theoretic terms, by considering the lattice $([0, 1]^S,\leq)$ of all functions $d\colon S\to [0,1]$, often called frames, ordered pointwise. The max reachability problem consists in checking $\mu b \leq p$ for $p\in[0,1]^S$ and $b \colon [0, 1]^S \to [0, 1]^S$, defined for all $d\in [0, 1]^S $ and $s\in S$, as
  \begin{displaymath}
      p(s)\defeq  \begin{cases}
        \lambda  &\text{ if } s=s_\iota, \\
        1 &\text{ if }  s \neq s_\iota,
      \end{cases}
      \qquad
 b(d)(s) \defeq  \begin{cases}
        1  &\text{ if } s \in \beta, \\
        \displaystyle \max_{a \in \mathit{Act}(s)} \sum_{s'\in S} d(s') \cdot \delta(s, a)(s') &\text{ if } s \notin \beta .
      \end{cases}
  \end{displaymath}
The reader is referred to \cite{BaierK} for all details. %

\smallskip

Since $b$ is not of the form $f\sqcup i$ for a left adjoint $f$ (see e.g. \cite{KoriCAV22}), rather than using {\APDR}, one can exploit {\ADPDR}.
Beyond the simple initial heuristic, which is always applicable and enjoys negative termination, we illustrate now two additional heuristics that are experimentally tested in Section~\ref{sec:experiments}.

The two novel heuristics make the same choices in (Candidate) and (Decide). They exploit functions $\alpha \colon S \to A$, also known as memoryless schedulers,
and the function $b_{\alpha} \colon [0, 1]^S \to [0, 1]^S$ defined for all $d\in [0, 1]^S $ and $s\in S$ as follows:
  \begin{displaymath}
      b_{\alpha}(d)(s) \defeq \begin{cases}
        1  &\text{ if } s \in \beta, \\
        \sum_{s'\in S} d(s') \cdot \delta(s, \alpha(s))(s') &\text{ otherwise}.
      \end{cases}
  \end{displaymath}
Since for all $D\in ([0,1]^S)^\downarrow$, $b^\downarrow_r (D) = \{d \mid b(d) \in D\}
= \bigcap_{\alpha}\{d \mid b_{\alpha} (d)\in D\}$ %
and since {\ADPDR} executes (Decide) only when $b(x_{k-1}) \notin Y_k$, there should exist some $\alpha$ such that $b_{\alpha} (x_{k-1})\notin Y_k$. One can thus fix
\begin{equation}\label{eq:secondterminatingheuristics}
( \vec{x} \| \vec{Y} )_{n,k} \mapsto
\begin{cases*}
		p^\downarrow & if $( \vec{x} \| \vec{Y} )_{n,k} \tr{\mathit{Ca}}$ \\
		\{d \mid b_{\alpha}(d) \in Y_k\}  & if $( \vec{x} \| \vec{Y} )_{n,k} \tr{D}$
\end{cases*}
\end{equation}
Intuitively, such choices are smart refinements of those in \eqref{eq:simple}: for (Candidate) they are exactly the same; for (Decide) rather than taking $b^\downarrow_r (Y_k)$, we consider a larger lower-set determined by the labels chosen by $\alpha$. %
This allows to represent each $Y_j$ as a set of $d\in [0, 1]^S $ satisfying a \emph{single} linear inequality, while using $b^\downarrow_r (Y_k)$ would yield a systems of possibly exponentially many inequalities
(see Example~\ref{ex:shortheuristicforDecide} below).
Moreover, from Theorem~\ref{thm:negativetermination}, it follows that such choices ensures negative termination.

\begin{corollary}\label{cor:ADPDRtermination}
Let $h$ be a legit heuristic defined for (Candidate) and (Decide) as in \eqref{eq:secondterminatingheuristics}.
If $\mu b \not \leq p$, then \emph{\ADPDR}$_h$ terminates.
\end{corollary}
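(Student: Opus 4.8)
The plan is to derive Corollary~\ref{cor:ADPDRtermination} directly from the general negative-termination result, Theorem~\ref{thm:negativetermination}, which (via Theorem~\ref{th:ADPDR}) applies verbatim to \ADPDR{} once we read it as \APDR{} run on the parameters $(\bot^\downarrow, b^\downarrow, b^\downarrow_r, p^\downarrow)$ in the lattice $L^\downarrow$. By Proposition~\ref{prop:prob_down_up}, $\mu b \not\leq p$ is equivalent to $\mu(b^\downarrow \cup \bot^\downarrow) \not\subseteq p^\downarrow$, so the hypothesis of the corollary feeds exactly the hypothesis $\mu(f\sqcup i)\not\sqsubseteq p$ of Theorem~\ref{thm:negativetermination} in the lower-set lattice. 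It therefore remains to check the other hypothesis of that theorem: that $h$ is legit and that $h(\mathit{CaD}^h_{n,k})$ is finite for all $n,k$.

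First I would argue legitimacy. Since the corollary assumes $h$ is a legit heuristic whose (Candidate) and (Decide) clauses are those of~\eqref{eq:secondterminatingheuristics}, there is nothing to prove here beyond noting that such a legit heuristic exists: the (Candidate) choice $p^\downarrow$ is exactly Proposition~\ref{prop:CanonicalChoice}.1, and the (Decide) choice $\{d \mid b_\alpha(d)\in Y_k\}$ is legit because, as observed in the paragraph preceding~\eqref{eq:secondterminatingheuristics}, $b^\downarrow_r(Y_k) = \bigcap_\alpha \{d \mid b_\alpha(d)\in Y_k\} \subseteq \{d \mid b_\alpha(d)\in Y_k\}$, so the guard $b^\downarrow_r(Y_k)\subseteq Z$ of (Decide) is met, and $x_{k-1}\notin Z$ holds by the choice of $\alpha$ (which exists precisely because (Decide) fired, i.e.\ $b(x_{k-1})\notin Y_k$, hence $b_\alpha(x_{k-1})\notin Y_k$ for some $\alpha$); any (Conflict) clause extending these to a legit heuristic works. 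The (Conflict) clause is unconstrained by the statement, so no further check is needed there.

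The substantive step is finiteness of $h(\mathit{CaD}^h_{n,k})$. In (Candidate) the heuristic always returns the single value $p^\downarrow$, contributing one element. In (Decide) the returned value is $\{d \mid b_\alpha(d)\in Y_k\}$, which depends only on the scheduler $\alpha\colon S\to A$ and on the lower set $Y_k$. Since $S$ and $A$ are finite, there are only finitely many schedulers $\alpha$ (at most $|A|^{|S|}$). One must also control the dependence on $Y_k$: I would argue inductively on the index that, in every state reachable by \ADPDR{}$_h$, each $Y_j$ ranges over a finite set of lower sets. Indeed $Y_{n-1}$ is always $p^\downarrow$ (introduced by (Candidate)), and each subsequent $Y_{j}$ introduced by (Decide) has the form $\{d \mid b_\alpha(d)\in Y_{j+1}\}$ for one of finitely many $\alpha$; since the length of the negative sequence is bounded by $n$, the set of lower sets appearing as some $Y_j$ at a fixed index is finite. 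Hence $h(\mathit{CaD}^h_{n,k})$ is finite. Applying Theorem~\ref{thm:negativetermination} (in the form inherited by \ADPDR{} per Theorem~\ref{th:ADPDR}) with $\mu(b^\downarrow\cup\bot^\downarrow)\not\subseteq p^\downarrow$ then gives termination of \ADPDR{}$_h$.

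**The main obstacle** I anticipate is making the finiteness argument for $h(\mathit{CaD}^h_{n,k})$ fully rigorous: one has to be careful that the set of lower sets $Y_k$ that can occur at a given index $(n,k)$ across \emph{all} runs of the (deterministic) algorithm is genuinely finite, rather than merely that each run produces finitely many. Because \ADPDR{}$_h$ is deterministic once $h$ is fixed, this is in fact automatic — a single run visits each index finitely often before (Unfold) increments $n$, and (Unfold) can fire at most $\tilde n$ times where $(f\sqcup i)^{\tilde n}(\bot)\not\sqsubseteq p$, exactly as in the proof sketch of Theorem~\ref{thm:negativetermination} — so the cleanest route is to invoke Theorem~\ref{thm:negativetermination} as a black box and only verify its two hypotheses, keeping the finiteness check as local as possible: $h$ restricted to (Candidate)/(Decide) at index $(n,k)$ takes values in $\{p^\downarrow\} \cup \{\{d\mid b_\alpha(d)\in Y\} : \alpha \in A^S,\ Y \in \mathcal{Y}_{n,k}\}$ where $\mathcal{Y}_{n,k}$ is the (finite) set of lower sets occurring in position $k$ of a negative sequence at outer index $n$, and this is finite since $A^S$ is finite and, unwinding, every such $Y$ is obtained from $p^\downarrow$ by at most $n$ applications of maps of the form $Y\mapsto\{d\mid b_\alpha(d)\in Y\}$.
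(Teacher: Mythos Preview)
Your proposal is correct and follows essentially the same route as the paper: invoke Theorem~\ref{thm:negativetermination} (as inherited by \ADPDR{} via Theorem~\ref{th:ADPDR}) and verify that $h(\mathit{CaD}^h_{n,k})$ is finite, using that (Candidate) always returns $p^\downarrow$ and that every $Y_k$ arising at index $(n,k)$ is obtained from $p^\downarrow$ by at most $n-1-k$ applications of maps $Y\mapsto\{d\mid b_\alpha(d)\in Y\}$ with $\alpha$ drawn from the finite set $A^{S}$. Your write-up makes explicit a few points the paper leaves implicit (the role of Proposition~\ref{prop:prob_down_up} and the legitimacy check), and your worry about ``all runs versus one run'' is moot since \ADPDR{}$_h$ is deterministic, but the substance of the finiteness bound is identical to the paper's inductive count.
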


\begin{example}\label{ex:shortheuristicforDecide}
Consider the maximum reachability problem with threshold $\lambda = \frac{1}{4}$ and $\beta= \{s_3\}$
 for the following MDP on alphabet $A=\{a,b\}$ and $s_\iota=s_0$.
  \begin{displaymath}
    \xymatrix@C=30pt{
      s_2\ar@/^1ex/[r]^{b,1} & s_0 \ar@(dr,dl)|{b,\frac{1}{3}} \ar@/^1ex/[l]^{a,\frac{1}{2}\; b,\frac{2}{3}} \ar@/_1ex/[r]_{a,\frac{1}{2}} &s_1 \ar@/_1ex/[l]_{a,\frac{1}{2}} \ar[r]^{a,\frac{1}{2}} &s_3 \ar@(rd, ru)_{a, 1}},
  \end{displaymath}
Hereafter we write $d\in [0,1]^S$ as column vectors with four entries $v_0\dots v_3$
and we will use $\cdot$ for the usual matrix multiplication.
With this notation, the lower set $p^\downarrow \in ([0,1]^S)^\downarrow$ and $b\colon [0,1]^S \to [0,1]^S$ can be written as
\[ p^\downarrow = \{\,\cvec{v_0}{v_1}{v_2}{v_3} \mid {\tiny{ \!\begin{bmatrix} 1 & 0 & 0 &0 \end{bmatrix}\!}} \cdot \cvec{v_0}{v_1}{v_2}{v_3} \leq {\tiny{ \!\begin{bmatrix} \frac{1}{4}  \end{bmatrix}\!} } \} \quad \text{ and } \quad b (\, \cvec{v_0}{v_1}{v_2}{v_3} \,) =\cvec{\max(\frac{v_1+v_2}{2}, \frac{v_0+2v_2}{3})}{\frac{v_0+v_3}{2}}{v_0}{1}
\textrm{.}  \]

Amongst the several memoryless schedulers, only two are relevant for us:
$\zeta \defeq ( s_0 \mapsto a ,\; s_1 \mapsto a ,\; s_2 \mapsto b ,\; s_3 \mapsto a )$  and $\xi \defeq (s_0 \mapsto b ,\; s_1 \mapsto a ,\; s_2 \mapsto b ,\; s_3 \mapsto a)$.
By using the definition of $b_\alpha \colon [0,1]^S \to [0,1]^S$, we have that
\[ b_\zeta (\, \cvec{v_0}{v_1}{v_2}{v_3} \,) =\cvec{\frac{v_1+v_2}{2}}{\frac{v_0+v_3}{2}}{v_0}{1}  \qquad \text{ and } \qquad b_\xi (\, \cvec{v_0}{v_1}{v_2}{v_3} \,) =\cvec{ \frac{v_0+2v_2}{3}}{\frac{v_0+v_3}{2}}{v_0}{1}\textrm{.} \]

It is immediate to see that the problem has negative answer, since using $\zeta$ in $4$ steps or less, $s_0$ can reach $s_3$ already with probability $\frac{1}{4}+\frac{1}{8}$.

\begin{figure}[t]
\[\begin{array}{rc|c|c}
\mathcal{F}^0 & \defeq &  \{\,\cvec{v_0}{v_1}{v_2}{v_3} \mid {\tiny{ \!\begin{bmatrix} 1 & 0 & 0 &0 \end{bmatrix}\!}} \cdot \cvec{v_0}{v_1}{v_2}{v_3} \leq {\tiny{ \!\begin{bmatrix} \frac{1}{4}  \end{bmatrix}\!} } \} & \{\,\cvec{v_0}{v_1}{v_2}{v_3} \mid {\tiny{ \!\begin{bmatrix} 1 & 0 & 0 &0 \end{bmatrix}\!}} \cdot \cvec{v_0}{v_1}{v_2}{v_3} \leq {\tiny{ \!\begin{bmatrix} \frac{1}{4}  \end{bmatrix}\!} } \,\} \\
\mathcal{F}^1 & \defeq &   \{\,\cvec{v_0}{v_1}{v_2}{v_3} \mid {\tiny{ \!\begin{bmatrix} 0 & 1 & 1 &0 \\ 1&0 &2 &0 \end{bmatrix}\!}} \cdot \cvec{v_0}{v_1}{v_2}{v_3} \leq {\tiny{ \!\begin{bmatrix} \frac{1}{2} \\ \frac{3}{4} \end{bmatrix}\!} } \}   & \{\,\cvec{v_0}{v_1}{v_2}{v_3} \mid {\tiny{ \!\begin{bmatrix} 0 & \frac{1}{2} & \frac{1}{2} &0 \end{bmatrix}\!}} \cdot \cvec{v_0}{v_1}{v_2}{v_3} \leq {\tiny{ \!\begin{bmatrix} \frac{1}{4}  \end{bmatrix}\!} } \,\}  \\
\mathcal{F}^2 & \defeq & \{\,\cvec{v_0}{v_1}{v_2}{v_3} \mid {\tiny{ \!\begin{bmatrix} 3 & 0 & 0 & 1 \\ 2&1 &1 &0 \\ 4&0&2&0\end{bmatrix}\!}} \cdot \cvec{v_0}{v_1}{v_2}{v_3} \leq {\tiny{ \!\begin{bmatrix}1 \\ \frac{3}{2} \\ \frac{9}{4}  \end{bmatrix}\!} } \}  & \{\,\cvec{v_0}{v_1}{v_2}{v_3} \mid {\tiny{ \!\begin{bmatrix} \frac{3}{4} & 0 & 0 & \frac{1}{4} \end{bmatrix}\!}} \cdot \cvec{v_0}{v_1}{v_2}{v_3} \leq {\tiny{ \!\begin{bmatrix} \frac{1}{4}  \end{bmatrix}\!} } \,\}   \\
\mathcal{F}^3 & \defeq & \{\,\cvec{v_0}{v_1}{v_2}{v_3} \mid {\tiny{ \!\begin{bmatrix} 0 & \frac{3}{2} & \frac{3}{2} & 0 \\ 1&0 &2 &0 \\ \frac{3}{2}&1&1&\frac{1}{2} \\ \frac{13}{6}&0&\frac{4}{3}&\frac{1}{2}
\\ 2&2&2&0 \\ \frac{10}{3}&0&\frac{8}{3}&0 \end{bmatrix}\!}} \cdot \cvec{v_0}{v_1}{v_2}{v_3} \leq {\tiny{ \!\begin{bmatrix}0 \\ 0 \\ \frac{3}{2} \\ \frac{3}{2} \\\frac{9}{4} \\ \frac{9}{4}  \end{bmatrix}\!} } \} & \{\,\cvec{v_0}{v_1}{v_2}{v_3} \mid {\tiny{ \!\begin{bmatrix} 0 & \frac{3}{8} & \frac{3}{8} & 0 \end{bmatrix}\!}} \cdot \cvec{v_0}{v_1}{v_2}{v_3} \leq {\tiny{ \!\begin{bmatrix} 0  \end{bmatrix}\!} } \,\}\\
\mathcal{F}^4 & \defeq & \{\,\cvec{v_0}{v_1}{v_2}{v_3} \mid  {\tiny{ \!\begin{bmatrix} 1 & 0 & 0 & 0 \\ 0&1 &0 &0 \\ 0&0&1&0\\ 0&0&0&1\end{bmatrix}\!}} \cdot \cvec{v_0}{v_1}{v_2}{v_3} \leq {\tiny{ \!\begin{bmatrix}0 \\ 0 \\ 0\\0  \end{bmatrix}\!} }    \}  = \{ \, \cvec{0}{0}{0}{0} \, \}& \{\,\cvec{v_0}{v_1}{v_2}{v_3} \mid {\tiny{ \!\begin{bmatrix} \frac{9}{16} & 0 & 0 & \frac{3}{16} \end{bmatrix}\!}} \cdot \cvec{v_0}{v_1}{v_2}{v_3} \leq {\tiny{ \!\begin{bmatrix} 0  \end{bmatrix}\!} } \,\} \\
\mathcal{F}^5 & \defeq  & \emptyset  &\emptyset
\end{array}\]
\caption{The elements of the negative sequences computed by {\ADPDR} for the MDP in  Example~\ref{ex:shortheuristicforDecide}. In the central column, these elements are computed by means of the simple initial heuristics, that is $\mathcal{F}^i=(b_r^\downarrow)^i(p^\downarrow)$. In the rightmost column, these elements are computed
using the heuristic in \eqref{eq:secondterminatingheuristics}. In particular $\mathcal{F}^i = \{d\mid b_\zeta(d) \in \mathcal{F}^{i-1} \}$ for $i\leq 3$, while for $i\geq 4$ these are computed as $\mathcal{F}^i = \{d\mid b_\xi(d) \in \mathcal{F}^{i-1} \}$.
}\label{fig:exDecideheuristics}
\end{figure}

To illustrate the advantages of \eqref{eq:secondterminatingheuristics}, we run {\ADPDR} with the simple initial heuristic and with the heuristic that only differs for the choice in (Decide), taken as in \eqref{eq:secondterminatingheuristics}. For both heuristics, the first iterations are the same: several repetitions of (Candidate), (Conflict) and (Unfold) exploiting elements of the positive chain that form
the initial chain (except for the last element $x_{n-1}$). %
{\footnotesize
\begin{align*}
  (\emptyset\cvec{0}{0}{0}{0}\cvec{1}{1}{1}{1} \| \varepsilon)_{3,3}
  \tr{\mathit{Ca}} %
  \tr{\mathit{Co}} (\emptyset\cvec{0}{0}{0}{0}\cvec{0}{0}{0}{1} \| \varepsilon)_{3,3}
  \tr{U}\tr{\mathit{Ca}} %
  \tr{\mathit{Co}} %
  \tr{U}\tr{\mathit{Ca}} %
  \tr{\mathit{Co}} %
   \tr{U}\tr{\mathit{Ca}}  %
   \tr{\mathit{Co}} %
    \tr{U}\tr{\mathit{Ca}}  (\emptyset\cvec{0}{0}{0}{0}\cvec{0}{0}{0}{1} \cvec{0}{\frac{1}{2}}{0}{1}\cvec{\frac{1}{4}}{\frac{1}{2}}{0}{1} \cvec{\frac{1}{4}}{\frac{5}{8}}{\frac{1}{4}}{1} \cvec{1}{1}{1}{1} \|  p^\downarrow )_{7,6} \textrm{.}
\end{align*} }

In the latter state the algorithm has to perform (Decide), since $b(x_5) \notin p^\downarrow$.
Now the choice of $z$ in (Decide) is different for the two heuristics: the former uses $b_r^\downarrow(p^\downarrow) = \{d \mid b(d) \in p^\downarrow \}$, the latter uses $\{d \mid b_\zeta(d) \in p^\downarrow \}$. Despite the different choices, both the heuristics proceed with $6$ steps of (Decide):
{\footnotesize
\begin{align*}
 (\emptyset\cvec{0}{0}{0}{0}\cvec{0}{0}{0}{1} \cvec{0}{\frac{1}{2}}{0}{1}\cvec{\frac{1}{4}}{\frac{1}{2}}{0}{1} \cvec{\frac{1}{4}}{\frac{5}{8}}{\frac{1}{4}}{1} \cvec{1}{1}{1}{1} \|  \mathcal{F}^0 )_{7,6} \tr{D}
 \tr{D} %
  \tr{D} %
    \tr{D} %
 \tr{D}  (\emptyset\cvec{0}{0}{0}{0}\cvec{0}{0}{0}{1} \cvec{0}{\frac{1}{2}}{0}{1}\cvec{\frac{1}{4}}{\frac{1}{2}}{0}{1} \cvec{\frac{1}{4}}{\frac{5}{8}}{\frac{1}{4}}{1} \cvec{1}{1}{1}{1} \|  \mathcal{F}^5,\mathcal{F}^4, \mathcal{F}^3,\mathcal{F}^2,\mathcal{F}^1, \mathcal{F}^0 )_{7,1}
\end{align*} }

The element of the negative sequence $\mathcal{F}^i$ are illustrated in Fig.~\ref{fig:exDecideheuristics} for both the heuristics. In both cases, $\mathcal{F}^5=\emptyset$ and thus {\ADPDR} returns false.

To appreciate the advantages provided by \eqref{eq:secondterminatingheuristics}, it is  enough to compare the two columns for the $\mathcal{F}^i$ in Fig.~\ref{fig:exDecideheuristics}:  in the central column, the number of inequalities defining $\mathcal{F}^i$ significantly  grows, while in the rightmost column is always $1$.
\end{example}

\kori{
Whenever $Y_k$ is generated by a single linear inequality, we observe that
$Y_k=\{d\in [0,1]^S \mid \sum_{s\in S}(r_s \cdot d(s)) \leq r \}$
for suitable non-negative real numbers $r$ and $r_s$ for all $s\in S$.
The convex set $Y_k$ is generated by finitely many $d\in [0,1]^S$ enjoying a convenient property: $d(s)$ is different from $0$ and $1$ only for at most one $s\in S$. The set of its generators, denoted by $\mathcal{G}_k$, can thus be easily computed.
We exploit this property to resolve the choice for (Conflict).
}
We consider  its sub set $\mathcal{Z}_k\defeq \{d \in \mathcal{G}_k \mid b(x_{k-1}) \leq d\}$ and define $z_{B},  z_{01}\in[0,1]^S$ for all $s\in S$ as
\begin{equation}\label{eq:heuristicsZConflict}
\hspace{-10pt}z_{B}(s) \!\defeq\!
\begin{cases*}
		(\bigwedge \mathcal{Z}_k)(s)  & if $r_s \neq 0$, $\mathcal{Z}_k\neq\emptyset$ \\
		b(x_{k-1})(s) & otherwise
\end{cases*}
\hspace{-4pt}z_{01}(s) \!\defeq\!
\begin{cases*}
		\lceil z_{B}(s)\rceil  & if $r_s = 0, \mathcal{Z}_k\neq\emptyset$\\
		z_{B}(s) & otherwise
\end{cases*}
\end{equation}
where, for $u\in[0,1]$, $\lceil u \rceil$ denotes $0$ if $u=0$ and $1$ otherwise. We call \verb|hCoB|  and \verb|hCo01| the heuristics defined as in \eqref{eq:secondterminatingheuristics} for (Candidate) and (Decide) and as $z_{B}$, respectively $z_{01}$, for (Conflict).
The heuristics \verb|hCo01| can be seen as a Boolean modification of
\verb|hCoB|, rounding up positive values to $1$ to accelerate convergence.

\begin{proposition}\label{prop:genlegit}
The heuristics \verb|hCoB|  and \verb|hCo01| are legit.
\end{proposition}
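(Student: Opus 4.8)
The plan is the standard one for establishing legitimacy: for $h\in\{\texttt{hCoB},\texttt{hCo01}\}$ I verify that, on every state reachable by \ADPDR$_h$, the element prescribed by $h$ in each of (Candidate), (Decide), (Conflict) meets the side-conditions following \texttt{st} in that rule of Fig.~\ref{fig:downclosed}; (Unfold) involves no choice. For (Candidate) the choice is $Z=p^\downarrow$, legit exactly as in Proposition~\ref{prop:CanonicalChoice}.1: the guard gives $x_{n-1}\not\sqsubseteq p$, i.e.\ $x_{n-1}\notin p^\downarrow$, and $p\in p^\downarrow$ trivially. For (Decide) the choice is $Z=\{d\mid b_\alpha(d)\in Y_k\}$ for a scheduler $\alpha$ with $b_\alpha(x_{k-1})\notin Y_k$ (such $\alpha$ exists because $b^\downarrow_r(Y_k)=\bigcap_\alpha\{d\mid b_\alpha(d)\in Y_k\}$ and the guard of (Decide) gives $x_{k-1}\notin b^\downarrow_r(Y_k)$); then $x_{k-1}\notin Z$ by the choice of $\alpha$, and $b^\downarrow_r(Y_k)\subseteq Z$ from the same intersection formula. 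This is precisely the argument already spelled out before Corollary~\ref{cor:ADPDRtermination}, so everything reduces to the (Conflict) choices $z_B$ and $z_{01}$.

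To make $z_B,z_{01}$ well defined on reachable states, I would first record, by induction along runs of \ADPDR$_h$, that each negative-sequence element $Y_k$ has the shape $\{d\in[0,1]^S\mid\sum_{s\in S}r_s\,d(s)\le r\}$ with $r_s\ge 0$: (Candidate) inserts $p^\downarrow=\{d\mid d(s_\iota)\le\lambda\}$, of this shape; (Decide) inserts $\{d\mid b_\alpha(d)\in Y_k\}$, and substituting $b_\alpha(d)(s)$ (the constant $1$ on $\beta$, a non-negative combination of the $d(s')$ off $\beta$) into $\sum_s r_s\,b_\alpha(d)(s)\le r$ again yields an inequality of this shape; (Conflict) only removes the head $Y_k$. (The degenerate case $r<0$ means $Y_k=\emptyset$, which propagates down the sequence and makes \ADPDR$_h$ return false, so it is harmless.) Given the invariant, $\mathcal{G}_k$ (the generators of $Y_k$, the vertices of a knapsack-type polytope meeting the cube, as in the discussion preceding \eqref{eq:heuristicsZConflict}), $\mathcal{Z}_k=\{d\in\mathcal{G}_k\mid b(x_{k-1})\le d\}$, and hence $z_B,z_{01}\in[0,1]^S=L$ are all defined, with $\mathcal{G}_k\subseteq Y_k$.

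The core is the chain $b(x_{k-1})\le z_B\le z_{01}$ together with $z_B,z_{01}\in Y_k$. We have $z_B\le z_{01}$ since $z_{01}$ differs from $z_B$ only by rounding entries up and $\lceil u\rceil\ge u$ on $[0,1]$. For $b(x_{k-1})\le z_B$: if $\mathcal{Z}_k=\emptyset$ then $z_B=b(x_{k-1})$ by definition; otherwise, on entries $s$ with $r_s=0$ we have $z_B(s)=b(x_{k-1})(s)$, while on entries with $r_s\ne 0$ we have $z_B(s)=(\bigwedge\mathcal{Z}_k)(s)=\min_{d\in\mathcal{Z}_k}d(s)\ge b(x_{k-1})(s)$, since every $d\in\mathcal{Z}_k$ satisfies $b(x_{k-1})\le d$ by definition of $\mathcal{Z}_k$. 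For $z_B\in Y_k$: if $\mathcal{Z}_k=\emptyset$ then $z_B=b(x_{k-1})\in Y_k$ by the (Conflict) guard; otherwise, fixing any $d^\star\in\mathcal{Z}_k$ and using $r_s\ge 0$, $\sum_s r_s\,z_B(s)=\sum_{s:r_s\ne 0}r_s\,(\bigwedge\mathcal{Z}_k)(s)\le\sum_{s:r_s\ne 0}r_s\,d^\star(s)=\sum_s r_s\,d^\star(s)\le r$ because $d^\star\in\mathcal{G}_k\subseteq Y_k$. Since $z_{01}$ coincides with $z_B$ on every $s$ with $r_s\ne 0$, the identical computation gives $z_{01}\in Y_k$. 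Finally, for $z\in\{z_B,z_{01}\}$, monotonicity of $b$ with $x_{k-1}\sqcap z\sqsubseteq x_{k-1}$ and $b(x_{k-1})\le z$ gives $b(x_{k-1}\sqcap z)\sqsubseteq b(x_{k-1})\sqsubseteq z$; combined with $z\in Y_k$ these are exactly the conditions of the (Conflict) rule, so both heuristics are legit.

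The only real friction I expect is in the first inductive step: pinning down the invariant that every $Y_k$ is a single non-negative linear inequality, so that $\mathcal{G}_k$ is well defined, and tidily handling the degenerate cases $\mathcal{Z}_k=\emptyset$, $Y_k=\emptyset$, and the special bottom element $x_0=\emptyset$ of the positive chain (for which one uses the convention $b(x_0)=\bot$ of Fig.~\ref{fig:downclosed}). Once the invariant is established, all the inequalities above are routine, and the (Candidate)/(Decide) cases are inherited verbatim from the earlier discussion.
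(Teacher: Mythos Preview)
Your proposal is correct and follows essentially the same approach as the paper's proof: the (Candidate) and (Decide) cases are identical, and for (Conflict) both you and the paper establish $b(x_{k-1})\le z_B$ and $z_B\in Y_k$ (then $z_{01}$) by exploiting that $z_B$ agrees with $\bigwedge\mathcal{Z}_k$ on coordinates with $r_s\neq 0$ and with $b(x_{k-1})$ elsewhere, together with monotonicity of $b$. The only difference is that you explicitly prove by induction the invariant that each $Y_k$ is a single non-negative linear inequality, whereas the paper treats this as a standing observation from the discussion preceding~\eqref{eq:heuristicsZConflict}.
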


By Corollary~\ref{cor:ADPDRtermination},  {\ADPDR} terminates for negative answers with both \verb|hCoB|  and \verb|hCo01|. We conclude this section with a last example.
  \begin{example}\label{ex:MDPpositive}
Consider the following MDP with alphabet $A=\{a,b\}$ and $s_\iota=s_0$
  \begin{displaymath}
    \xymatrix{
      s_2 \ar@(ld,lu)^{a, 1} & s_0 \ar@(lu,ru)^{a, 1} \ar[l]^{b, \frac{1}{2}} \ar@/_1ex/[r]_{b, \frac{1}{2}} &s_1 \ar@/_1ex/[l]_{a, \frac{1}{3}} \ar[r]^{a, \frac{2}{3}} &s_3 \ar@(rd,ru)_{a, 1}    }
  \end{displaymath}
and the max reachability problem with threshold $\lambda = \frac{2}{5}$ and $\beta=\{s_3\}$.
The lower set $p^\downarrow \in ([0,1]^S)^\downarrow$ and $b\colon [0,1]^S \to [0,1]^S$ can be written as
\[ p^\downarrow = \{\,\cvec{v_0}{v_1}{v_2}{v_3} \mid {\tiny{ \!\begin{bmatrix} 1 & 0 & 0 &0 \end{bmatrix}\!}} \cdot \cvec{v_0}{v_1}{v_2}{v_3} \leq {\tiny{ \!\begin{bmatrix} \frac{2}{5}  \end{bmatrix}\!} } \}
\quad \text{ and }  \quad
b (\, \cvec{v_0}{v_1}{v_2}{v_3} \,) =\cvec{\max(v_0, \frac{v_1+v_2}{2})}{\frac{v_0+2\cdot v_3}{3}}{v_2}{1}
 \]

\noindent
With the simple initial heuristic, {\ADPDR} does not terminate. %
With the heuristic \verb|hCo01|, it returns true in 14 steps, %
while with \verb|hCoB| in 8. %
The first 4 steps, common to both \verb|hCoB| and \verb|hCo01|, are illustrated below.

\begin{lrbox}{\lstbox}\begin{minipage}{\textwidth}
 {\footnotesize \begin{align*}
    & (\emptyset\cvec{0}{0}{0}{0}\cvec{1}{1}{1}{1} \| \varepsilon)_{3,3}
    \tr{\mathit{Ca}} (\emptyset\cvec{0}{0}{0}{0}\cvec{1}{1}{1}{1} \| p^\downarrow )_{3,2} \tr{\mathit{Co}} (\emptyset\cvec{0}{0}{0}{0}\cvec{\frac{2}{5}}{0}{0}{1} \| \varepsilon)_{3,3} & {\tiny b(\cvec{0}{0}{0}{0})=\cvec{0}{0}{0}{1} \quad \mathcal{Z}_2 = \{\cvec{\frac{2}{5}}{0}{0}{1}, \cvec{\frac{2}{5}}{1}{0}{1},\cvec{\frac{2}{5}}{0}{1}{1}, \cvec{\frac{2}{5}}{1}{1}{1} \} } \\
    \tr{U}  \tr{\mathit{Ca}} & (\emptyset\cvec{0}{0}{0}{0}\cvec{\frac{2}{5}}{0}{0}{1} \cvec{1}{1}{1}{1}\| p^\downarrow)_{4,3} \tr{\mathit{Co}} \;\vline\vline (\emptyset\cvec{0}{0}{0}{0}\cvec{\frac{2}{5}}{0}{0}{1} \cvec{\frac{2}{5}}{1}{0}{1} \| \varepsilon)_{4,4}  \;\vline\vline\; (\emptyset\cvec{0}{0}{0}{0}\cvec{\frac{2}{5}}{0}{0}{1} \cvec{\frac{2}{5}}{\frac{4}{5}}{0}{1} \| \varepsilon)_{4,4}  & {\tiny b(\cvec{\frac{2}{5}}{0}{0}{1})= \cvec{\frac{2}{5}}{\frac{4}{5}}{0}{1} \quad \mathcal{Z}_3 = \{ \cvec{\frac{2}{5}}{1}{0}{1}, \cvec{\frac{2}{5}}{1}{1}{1} \} }%
 \end{align*}}
\end{minipage}\end{lrbox}
\[\noindent\scalebox{.95}{\usebox\lstbox}\]

\noindent
Observe that in the first (Conflict) $z_B = z_{01}$, while in the second $z_{01}(s_1)=1$ and $z_{B}(s_1)=\frac{4}{5}$, leading to the two different states prefixed by vertical lines.

\end{example}

\section{Implementation and Experiments}\label{sec:experiments}

We first developed, using Haskell and exploiting its abstraction features,  a common template that accommodates both {\APDR} and {\ADPDR}.
It is a  program parametrized by two lattices---used for positive chains and negative sequences, respectively---and by a heuristic.

For our experiments, we instantiated the template to \ADPDR{} for MDPs (letting $L=[0,1]^{S}$), with three different heuristics: \verb|hCoB| and \verb|hCo01| from Proposition~\ref{prop:genlegit};  and  \verb|hCoS|  introduced below. Besides the template ($\sim$100 lines),  we needed $\sim$140 lines to account for \verb|hCoB| and \verb|hCo01|, and additional $\sim$100 lines to further obtain \verb|hCoS|. All this indicates a clear benefit of our abstract theory: a general template can itself be coded succinctly;  instantiation to concrete problems is easy, too, thanks to an explicitly specified interface of heuristics.

Our implementation accepts MDPs  expressed in a symbolic format
inspired by Prism models~\cite{KwiatkowskaNP11}, in which states are variable valuations and  transitions are described by symbolic functions (they can be segmented with symbolic guards $\{\text{guard}_i\}_{i}$).
We use rational arithmetic
(\verb|Rational| in Haskell) for probabilities
to limit the impact of rounding errors.

\myparagraph{Heuristics.}
The three heuristics (\verb|hCoB|, \verb|hCo01|, \verb|hCoS|) use the same choices in (Candidate) and (Decide),
as defined in \eqref{eq:secondterminatingheuristics},
but different ones in (Conflict).

The third heuristics \verb|hCoS| is a \emph{symbolic} variant of
\verb|hCoB|; it relies on our symbolic model format.
It uses $z_{S}$ for $z$ in (Conflict), where $z_{S}(s)=z_{B}(s)$ if $r_{s}\neq 0$ or $\mathcal{Z}_k=\emptyset$. The definition of $z_{S}(s)$ otherwise is notable: we use a piecewise affine function $(t_{i}\cdot s + u_{i})_{i}$ for $z_{S}(s)$,
where the affine functions $(t_{i}\cdot s + u_{i})_{i}$ are guarded by the same guards $\{\text{guard}_i\}_{i}$ of the MDP's transition function. We let the SMT solver Z3~\cite{MouraB08} search for the values of the coefficients $t_{i}, u_{i}$, so that $z_{S}$ satisfies the requirements of (Conflict) (namely $b(x_{k-1})(s) \leq z_{S}(s) \leq 1$ for each $s\in S$ with $r_s=0$), together with the condition
 $b (z_{S}) \leq z_{S}$ for faster convergence.
If the search is unsuccessful, we give up
\verb|hCoS| and fall back on the heuristic \verb|hCoB|.

As a task common to the three heuristics, we need to calculate $\mathcal{Z}_k = \{d \in \mathcal{G}_k \mid b(x_{k-1}) \leq d\}$ in (Conflict) (see~(\ref{eq:heuristicsZConflict})). Rather than computing the whole set $\mathcal{G}_k$ of generating points of the linear inequality that defines $Y_{k}$, we implemented an ad-hoc algorithm that crucially exploits the condition $b(x_{k-1}) \leq d$ for pruning.

\myparagraph{Experiment Settings.}
We conducted the experiments on Ubuntu 18.04 and AWS t2.xlarge (4 CPUs, 16 GB memory, up to 3.0 GHz Intel Scalable Processor). We used several Markov chain (MC)
 benchmarks
and a couple of MDP ones.

\begin{table}[tb]\scriptsize
  \caption{Experimental results on MC benchmarks.
 $|S|$ is the number of states, $P$ is the
reachability probability (calculated by manual inspection), $\lambda$ is the threshold in the problem $P\le_{?} \lambda$ (shaded if the answer is no). The other columns show the average execution time in seconds; TO is timeout (\SI{900}{\second}); MO is out-of-memory.
For \ADPDR{} and LT-PDR
we used the \texttt{tasty-bench} Haskell package and repeated executions until std.\ dev.\ is $<$ 5\% (at least three execs). For PrIC3 and Storm, we made five executions. Storm's execution does not depend on $\lambda$: it seems to answer queries of the form $P\le_{?} \lambda$ by calculating $P$. We observed a wrong answer for the entry with $(\dagger)$ (Storm, sp.-num., Haddad-Monmege); see the discussion of RQ2.
} \label{tb:mc}
\begin{lrbox}{\lstbox}\begin{minipage}{\textwidth}
  \npdecimalsign{.}
  \nprounddigits{3}
	       \begin{tabular}{ccccccccccccccc}
    	       \toprule
    	       Benchmark & $|S|$ &
    	       $P$
    	       & $\lambda$
    	       &\multicolumn{3}{c}{\ADPDR}
    	       &%
     	       LT-PDR
    	       &\multicolumn{4}{c}{PrIC3}
    	       &\multicolumn{3}{c}{Storm}
    	       \\\cmidrule(lr){5-7} \cmidrule(lr){9-12} \cmidrule(lr){13-15}
    	       &&&&\verb|hCoB| &\verb|hCo01| &\verb|hCoS|
    	       & &none & lin. & pol. & hyb. &sp.-num. &sp.-rat. &sp.-sd. \\
    	       \midrule\midrule
    	       \multirow{4}{*}{Grid} & \multirow{2}{*}{$10^2$} & \multirow{2}{*}{0.033} & 0.3 & 0.013 &0.022 & 0.659 & 0.343 & 1.383 &23.301 &MO &MO &\multirow{2}{*}{0.010} &\multirow{2}{*}{0.010} &\multirow{2}{*}{0.010} \\
               &   &   & 0.2 & 0.013 &0.031 & 0.657 & 0.519 & 1.571 &26.668 &TO &MO \\
    	       \cmidrule[0.5pt](r){2-15}
     	       & \multirow{2}{*}{$10^3$} & \multirow{2}{*}{$<$0.001} & 0.3 & 1.156 &2.187 & 5.633 & 126.441 & TO &TO &TO &MO &\multirow{2}{*}{0.010} &\multirow{2}{*}{0.017} &\multirow{2}{*}{0.011} \\
               &   &   & 0.2 & 1.146 &2.133 & 5.632 & 161.667 & TO &TO &TO &MO \\
    	       \midrule
    	       \multirow{3}{*}{BRP} & \multirow{3}{*}{$10^3$} & \multirow{3}{*}{0.035} & 0.1 & 12.909 &7.969 & 55.788 & TO & TO &TO &MO &MO &\multirow{3}{*}{0.012} &\multirow{3}{*}{0.018} &\multirow{3}{*}{0.011} \\
               &   &   & \cellcolor{gray!25}0.01  & 1.977 &8.111 & 5.645 & 21.078 & 60.738 &626.052 &524.373 &823.082  \\
               &   &   & \cellcolor{gray!25}0.005 & 0.604 &2.261  & 2.709 & 1.429 & 12.171 &254.000 &197.940 &318.840  \\
      	       \midrule
    	       \multirow{8}{*}{\parbox{4em}{Zero- Conf}} & \multirow{4}{*}{$10^2$} & \multirow{4}{*}{0.5} & 0.9 &1.217 &68.937  & 0.196 & TO & 19.765 &136.491 &0.630 &0.468 &\multirow{4}{*}{0.010} &\multirow{4}{*}{0.018} &\multirow{4}{*}{0.011} \\
               &   &   & 0.75 & 1.223 & 68.394 & 0.636 & TO & 19.782 &132.780 &0.602 &0.467  \\
               &   &   & 0.52 & 1.228 & 60.024 & 0.739& TO & 19.852 &136.533 &0.608 &0.474  \\
               &   &   & \cellcolor{gray!25}0.45 &$<$0.001 & 0.001 & 0.001 & $<$0.001 & 0.035 &0.043 &0.043 &0.043 \\
    	       \cmidrule[0.5pt](r){2-15}
     	       & \multirow{4}{*}{$10^4$} & \multirow{4}{*}{0.5} & 0.9 &MO & TO & 7.443 & TO & TO &TO &0.602 &0.465 &\multirow{4}{*}{0.037} &\multirow{4}{*}{262.193} &\multirow{4}{*}{0.031} \\
               &   &   & 0.75 & MO & TO & 15.223 & TO & TO &TO &0.599 &0.470  \\
               &   &   & 0.52 & MO & TO & TO & TO &TO &TO &0.488 & 0.475 \\
               &   &   & \cellcolor{gray!25}0.45 &0.108 & 0.119 & 0.169 & 0.016 & 0.035 &0.040 &0.040 &0.040  \\
    	       \midrule
    	       \multirow{4}{*}{Chain} & \multirow{4}{*}{$10^3$} & \multirow{4}{*}{0.394} & 0.9 & 36.083 & TO & 0.478 & TO & 269.801 &TO &0.938 &0.686 &\multirow{4}{*}{0.010} &\multirow{4}{*}{0.014} &\multirow{4}{*}{0.011} \\
               &   &   & 0.4 & 35.961 & TO & 394.955 & TO & 271.885 &TO &0.920 &TO\\
               &   &   & \cellcolor{gray!25}0.35 & 101.351 & TO & 454.892 & 435.199 & 238.613 &TO &TO &TO  \\
               &   &   & \cellcolor{gray!25}0.3 & 62.036 & 463.981 & 120.557 & 209.346 & 124.829 &746.595 &TO &TO \\
    	       \midrule
    	       \multirow{4}{*}{\parbox{4em}{Double- Chain}} & \multirow{4}{*}{$10^3$} & \multirow{4}{*}{0.215} & 0.9 & 12.122 & 7.318 & TO & TO & TO &TO &1.878 &2.053 &\multirow{4}{*}{0.011} &\multirow{4}{*}{0.018} &\multirow{4}{*}{0.010} \\
               &   &   & 0.3 & 12.120 & 20.424 & TO & TO & TO &TO &1.953 &2.058 \\
               &   &   & 0.216 & 12.096 & 19.540 & TO & TO & TO &TO &172.170 &TO \\
               &   &   & \cellcolor{gray!25}0.15 & 12.344 & 16.172 & TO & 16.963 & TO &TO &TO &TO \\
    	       \midrule
    	       \multirow{4}{*}{\parbox{4em}{Haddad- Monmege}} & \multirow{2}{*}{$41$} & \multirow{2}{*}{0.7} & 0.9 & 0.004 & 0.009 &8.528 & TO & 1.188 &31.915 &TO &MO &\multirow{2}{*}{0.011} &\multirow{2}{*}{0.011} &\multirow{2}{*}{1.560} \\
               &   &   & 0.75 & 0.004 & 0.011 & 2.357 & TO & 1.209 &32.143 &TO &712.086 \\
    	       \cmidrule[0.5pt](r){2-15}
     	       & \multirow{2}{*}{$10^3$} & \multirow{2}{*}{0.7} & 0.9 & 59.721 & 61.777 &TO & TO & TO &TO &TO &TO &\multirow{2}{*}{0.013 $(\dagger)$} &\multirow{2}{*}{0.043} &\multirow{2}{*}{TO} \\
               &   &   & 0.75 & 60.413 & 63.050 & TO & TO & TO &TO &TO &TO \\
    	       \bottomrule
  	       \end{tabular}
\end{minipage}\end{lrbox}
\scalebox{.81}{\centering\usebox\lstbox}

\end{table}

\myparagraph{Research Questions.} We wish to address the following questions.
\begin{description}
 \item[RQ1] Does \ADPDR{} advance the state-of-the-art performance of \emph{PDR} algorithms for probabilistic model checking?
 \item[RQ2] How does \ADPDR{}'s performance compare against \emph{non-PDR} algorithms for probabilistic model checking?
 \item[RQ3] Does the  theoretical framework of \ADPDR{} successfully guide the discovery of various heuristics with practical performance?
 \item[RQ4] Does \ADPDR{} successfully manage nondeterminism in MDPs
  (that is absent in MCs)?
\end{description}

\myparagraph{Experiments on
MCs (Table~\ref{tb:mc}).}
We used six
benchmarks:  Haddad\hyp{}Monmege  is from~\cite{HartmannsKPQR19}; the others are from~\cite{BatzJKKMS20,KoriCAV22}.
We compared
  \ADPDR (with three heuristics) against LT-PDR~\cite{KoriCAV22}, PrIC3 (with four  heuristics \emph{none}, \emph{lin.}, \emph{pol.}, \emph{hyb.}, see~\cite{BatzJKKMS20}), and Storm 1.5~\cite{DehnertJK017}.
Storm is a recent comprehensive toolsuite that implements different algorithms and solvers. Among them, our comparison is against \emph{sparse-numeric}, \emph{sparse-rational}, and \emph{sparse-sound}.
The \emph{sparse} engine uses explicit state space representation by sparse matrices; this is unlike another representative \emph{dd} engine that uses symbolic BDDs. (We did not use \emph{dd} since it often reported errors, and was overall slower than \emph{sparse}.)
 \emph{Sparse-numeric}  is a value-iteration (VI) algorithm;  \emph{sparse-rational}  solves linear (in)equations using rational arithmetic;
 \emph{sparse-sound} is a sound VI algorithm~\cite{QuatmannK18}.\footnote{There are another two sound algorithms in Storm:
 one that utilizes interval iteration~\cite{Baier0L0W17} and the other does optimistic VI~\cite{HartmannsK20}.  We have excluded them from the results since we observed that they returned incorrect answers.}

\begin{table}[tb]
  \scriptsize
  \caption{Experimental results on MDP benchmarks. The legend is the same as Table~\ref{tb:mc}, except that $P$ is now the maximum reachability probability.} \label{tb:mdp}
\begin{lrbox}{\lstbox}\begin{minipage}{\textwidth}
\centering
\begin{tabular}{cccccccccc}
    \toprule
    Benchmark & $|S|$ &
    $P$\
    & $\lambda$
    &\multicolumn{3}{c}{\ADPDR}
    &\multicolumn{3}{c}{Storm}
    \\\cmidrule(lr){5-7}\cmidrule{8-10}
    &&&&\verb|hCoB| &\verb|hCo01| &\verb|hCoS| &sp.-num &sp.-rat. &sp.-sd. \\
    \midrule\midrule
    \multirow{3}{*}{CDrive2} &\multirow{3}{*}{38} &\multirow{3}{*}{0.865} &0.9 &MO &0.172 &TO &\multirow{3}{*}{0.019} &\multirow{3}{*}{0.019} &\multirow{3}{*}{0.018}\\
    &&&\cellcolor{gray!25}0.75 &MO &0.058 &TO\\
    &&&\cellcolor{gray!25}0.5 &0.015 &0.029 &86.798 \\
    \midrule
    \multirow{3}{*}{TireWorld} &\multirow{3}{*}{8670} &\multirow{3}{*}{0.233} &0.9 &MO &3.346 &TO  &\multirow{4}{*}{0.070}&\multirow{4}{*}{0.164} &\multirow{4}{*}{0.069}\\
    &&&0.75 &MO &3.337 &TO \\
    &&&0.5 &MO &6.928 &TO \\
    &&&\cellcolor{gray!25}0.2 &4.246 &24.538 &TO \\
    \bottomrule
\end{tabular}
\end{minipage}\end{lrbox}
\scalebox{.85}{\centering\usebox\lstbox}
\end{table}

\myparagraph{Experiments on MDPs (Table~\ref{tb:mdp}).}
We used two benchmarks  from~\cite{HartmannsKPQR19}. We compared \ADPDR only against Storm, since RQ1 is already addressed using MCs (besides, PrIC3 did not run for MDPs).

\myparagraph{Discussion.} The experimental results suggest the following answers to the RQs.

\textbf{RQ1}. The performance advantage of \ADPDR{}, over both LT-PDR and PrIC3, was clearly observed throughout the benchmarks.
\ADPDR{} outperformed LT-PDR, thus confirming empirically the theoretical observation in Section~\ref{ssec:LTPDRvsADPDR}. The profit is particularly evident in those instances whose answer is positive.
\ADPDR{} generally outperformed PrIC3, too.
Exceptions are in ZeroConf, Chain and DoubleChain, where PrIC3 with  polynomial (pol.) and hybrid (hyb.) heuristics performs well. This seems to be thanks to the expressivity of the polynomial template in PrIC3, which is a possible enhancement we are yet to implement (currently our symbolic heuristic \verb|hCoS| uses only the affine template).

\textbf{RQ2}. The comparison with Storm is interesting.  Note first that Storm's \emph{sparse-numeric} algorithm is a VI algorithm that gives a guaranteed lower bound \emph{without guaranteed convergence}. Therefore  its positive answer to $P\le_{?}\lambda$ may not be correct. Indeed, for Haddad-Monmege with $|S|\sim 10^{3}$, it answered $P=0.5$ which is wrong ($(\dagger)$ in Table~\ref{tb:mc}). This is in contrast with PDR algorithms that discovers an explicit witness for  $P\le\lambda$ via their positive chain.

Storm's \emph{sparse-rational} algorithm is precise.
It was faster than PDR algorithms in many benchmarks, although \ADPDR was better or comparable in ZeroConf ($10^4$) and Haddad-Monmege ($41$), for $\lambda$ such that $P\le\lambda$ is true. We believe this suggests a general advantage of PDR algorithms, namely to accelerate the search for an invariant-like witness for safety.

Storm's \emph{sparse-sound} algorithm is a sound VI algorithm that returns correct answers aside numerical errors.
Its performance was similar to that of sparse-numeric, except for the two instances of Haddad-Monmege: sparse-sound returned correct answers but was much slower than sparse-numeric.
For these two instances, \ADPDR{} outperformed sparse-sound.

It seems that a big part of Storm's good performance is attributed to the sparsity of state representation. This is notable in the comparison of the two instances of Haddad-Monmege ($41$ vs.\ $10^3$): while Storm handles both of them easily, \ADPDR{} struggles a bit in the bigger instance. Our implementation can be extended to use sparse representation, too; this is future work.

\textbf{RQ3}. We derived the three heuristics (\verb|hCoB|, \verb|hCo01|, \verb|hCoS|) exploiting the theory of \ADPDR{}. The experiments show that each heuristic has its own strength. For example, \verb|hCo01| is slower than \verb|hCoB| for MCs, but it is much better for MDPs. In general, there is no silver bullet heuristic, so coming up with a variety of them is important. The experiments suggest that our theory of \ADPDR{} provides great help in doing so.

\textbf{RQ4}. Table~\ref{tb:mdp}  shows that \ADPDR{} can handle nondeterminism well: once a suitable heuristic is chosen, its performances on MDPs and on MCs of similar size are comparable.
It is also interesting that better-performing heuristics vary, as we discussed above.

\myparagraph{Summary.}
\ADPDR{} clearly outperforms  existing probabilistic PDR algorithms in many benchmarks. It also compares well with Storm---a highly sophisticated toolsuite---in a couple of benchmarks. These are notable especially given that \ADPDR{} currently lacks enhancing features such as richer symbolic templates and sparse representation (adding which is future work). Overall, we believe that \ADPDR{} \emph{confirms the potential of PDR algorithms in probabilistic model checking}. Through the three heuristics, we also observed the value of an abstract general theory in devising heuristics in PDR, which is probably true of verification algorithms in general besides PDR.

\bibliographystyle{splncs04}
\bibliography{mybib}

\newpage

\appendix


\section{Additional material}\label{app:additional}

\subsection{The Three Executions in Example~\ref{ex:MDPpositive}}\label{app:ExampleMDP}

In this appendix we report in full details the execution of {\ADPDR} for the max reachability problem in Example~\ref{ex:MDPpositive} with the simple initial heuristic (Fig.~\ref{fig:exsimpleinitialMDPs}), with the \verb|hCo01| heuristic (Fig.~\ref{fig:exgen}) and with the \verb|hCoB| heuristic (Fig.~\ref{fig:exgen2}).

In Fig.~\ref{fig:exgen} we exploit the scheduler $\xi\colon S \to A$ defined as $\xi \defeq [s_0\mapsto b,s_1\mapsto a,s_2\mapsto a,s_3\mapsto a]$, for which we illustrate
\begin{equation}\label{eq:xi}
b_\xi (\, \cvec{v_0}{v_1}{v_2}{v_3} \,) =\cvec{\frac{v_1+v_2}{2}}{\frac{v_0+2\cdot v_3}{3}}{v_2}{1}
\quad \text{ and }  \quad
\mathcal{F}^1_\xi \defeq \{ d \mid b_\xi(d ) \in p^\downarrow \} = \{\,\cvec{v_0}{v_1}{v_2}{v_3} \mid {\tiny{ \!\begin{bmatrix} 0 & 1 & 1 &0 \end{bmatrix}\!}}\cdot \cvec{v_0}{v_1}{v_2}{v_3} \leq {\tiny{ \!\begin{bmatrix} \frac{4}{5}  \end{bmatrix}\!} } \}
\end{equation}

\begin{figure}
  {\footnotesize
  \begin{align*}
    & (\cvec{0}{0}{0}{0}\cvec{1}{1}{1}{1} \| \varepsilon)_{2,2} \tr{\mathit{Ca}} (\cvec{0}{0}{0}{0}\cvec{1}{1}{1}{1} \| p^\downarrow )_{2,1}
    \tr{\mathit{Co}} (\cvec{0}{0}{0}{0}\cvec{0}{0}{0}{1} \| \varepsilon)_{2,2}
    \\
    \tr{U}\tr{\mathit{Ca}} & (\cvec{0}{0}{0}{0}\cvec{0}{0}{0}{1} \cvec{1}{1}{1}{1} \| p^\downarrow )_{3,2}
    \tr{\mathit{Co}} (\cvec{0}{0}{0}{0}\cvec{0}{0}{0}{1} \cvec{0}{\frac{2}{3}}{0}{1} \| \varepsilon)_{3,3}
    \\
    \tr{U}\tr{\mathit{Ca}} & (\cvec{0}{0}{0}{0}\cvec{0}{0}{0}{1} \cvec{0}{\frac{2}{3}}{0}{1} \cvec{1}{1}{1}{1} \| p^\downarrow )_{4,3}
    \tr{\mathit{Co}} (\cvec{0}{0}{0}{0}\cvec{0}{0}{0}{1} \cvec{0}{\frac{2}{3}}{0}{1} \cvec{\frac{1}{3}}{\frac{2}{3}}{0}{1} \| \varepsilon )_{4,4}
    \\
    \tr{U}\tr{\mathit{Ca}} & (\cvec{0}{0}{0}{0}\cvec{0}{0}{0}{1} \cvec{0}{\frac{2}{3}}{0}{1} \cvec{\frac{1}{3}}{\frac{2}{3}}{0}{1}\cvec{1}{1}{1}{1} \| p^\downarrow )_{5,4}
    \tr{\mathit{Co}} (\cvec{0}{0}{0}{0}\cvec{0}{0}{0}{1} \cvec{0}{\frac{2}{3}}{0}{1} \cvec{\frac{1}{3}}{\frac{2}{3}}{0}{1}\cvec{\frac{1}{3}}{\frac{7}{9}}{0}{1} \| \varepsilon )_{4,4}
    \\
    \tr{U}\tr{\mathit{Ca}} & (\cvec{0}{0}{0}{0}\cvec{0}{0}{0}{1} \cvec{0}{\frac{2}{3}}{0}{1} \cvec{\frac{1}{3}}{\frac{2}{3}}{0}{1}\cvec{\frac{1}{3}}{\frac{7}{9}}{0}{1}\cvec{1}{1}{1}{1} \| p^\downarrow )_{6,5}
    \tr{\mathit{Co}} (\cvec{0}{0}{0}{0}\cvec{0}{0}{0}{1} \cvec{0}{\frac{2}{3}}{0}{1} \cvec{\frac{1}{3}}{\frac{2}{3}}{0}{1}\cvec{\frac{1}{3}}{\frac{7}{9}}{0}{1}\cvec{\frac{7}{18}}{\frac{7}{9}}{0}{1} \| \varepsilon )_{6,6}
    \\
    \tr{U}\tr{\mathit{Ca}} & (\cvec{0}{0}{0}{0}\cvec{0}{0}{0}{1} \cvec{0}{\frac{2}{3}}{0}{1} \cvec{\frac{1}{3}}{\frac{2}{3}}{0}{1}\cvec{\frac{1}{3}}{\frac{7}{9}}{0}{1}\cvec{\frac{7}{18}}{\frac{7}{9}}{0}{1}\cvec{1}{1}{1}{1} \| p^\downarrow )_{7,6}
    \tr{\mathit{Co}} (\cvec{0}{0}{0}{0}\cvec{0}{0}{0}{1} \cvec{0}{\frac{2}{3}}{0}{1} \cvec{\frac{1}{3}}{\frac{2}{3}}{0}{1}\cvec{\frac{1}{3}}{\frac{7}{9}}{0}{1}\cvec{\frac{7}{18}}{\frac{7}{9}}{0}{1}\cvec{\frac{7}{18}}{\frac{43}{54}}{0}{1} \| \varepsilon )_{7,7}
    \cdots
    \end{align*}
  }
\caption{The non-terminating execution of {\ADPDR} with the simple initial heuristics for the max reachability problem of Example \ref{ex:MDPpositive}. The elements of the positive chain, with the exception of the last one $x_{n-1}$ are those of the initial chain.}\label{fig:exsimpleinitialMDPs}
\end{figure}

\begin{figure}
{\footnotesize\begin{align*}
    & (\emptyset\cvec{0}{0}{0}{0}\cvec{1}{1}{1}{1} \| \varepsilon)_{3,3}
    \tr{\mathit{Ca}} (\emptyset\cvec{0}{0}{0}{0}\cvec{1}{1}{1}{1} \| p^\downarrow )_{3,2} \tr{\mathit{Co}} (\emptyset\cvec{0}{0}{0}{0}\cvec{\frac{2}{5}}{0}{0}{1} \| \varepsilon)_{3,3} & b(\cvec{0}{0}{0}{0})=\cvec{0}{0}{0}{1} \quad \mathcal{Z}_2 = \{\cvec{\frac{2}{5}}{0}{0}{1}, \cvec{\frac{2}{5}}{1}{0}{1},\cvec{\frac{2}{5}}{0}{1}{1}, \cvec{\frac{2}{5}}{1}{1}{1} \} \\
    %
    %
    \tr{U}  \tr{\mathit{Ca}} & (\emptyset\cvec{0}{0}{0}{0}\cvec{\frac{2}{5}}{0}{0}{1} \cvec{1}{1}{1}{1}\| p^\downarrow)_{4,3} \tr{\mathit{Co}} (\emptyset\cvec{0}{0}{0}{0}\cvec{\frac{2}{5}}{0}{0}{1} \cvec{\frac{2}{5}}{1}{0}{1} \| \varepsilon)_{4,4} & b(\cvec{\frac{2}{5}}{0}{0}{1})= \cvec{\frac{2}{5}}{\frac{4}{5}}{0}{1} \quad \mathcal{Z}_3 = \{ \cvec{\frac{2}{5}}{1}{0}{1}, \cvec{\frac{2}{5}}{1}{1}{1} \}\\
    %
    %
    %
    \tr{U} \tr{\mathit{Ca}} & (\emptyset\cvec{0}{0}{0}{0}\cvec{\frac{2}{5}}{0}{0}{1} \cvec{\frac{2}{5}}{1}{0}{1} \cvec{1}{1}{1}{1} \| p^\downarrow)_{5,4}  \tr{D}  (\emptyset\cvec{0}{0}{0}{0}\cvec{\frac{2}{5}}{0}{0}{1} \cvec{\frac{2}{5}}{1}{0}{1} \cvec{1}{1}{1}{1} \| \mathcal{F}^1_\xi p^\downarrow)_{5,3} & b(\cvec{\frac{2}{5}}{1}{0}{1})= \cvec{max(\frac{2}{5},\frac{1}{2})}{\frac{4}{5}}{0}{1} \not \in p^\downarrow  \\
    \tr{\mathit{Co}} &  (\emptyset\cvec{0}{0}{0}{0}\cvec{\frac{2}{5}}{0}{0}{1} \cvec{\frac{2}{5}}{\frac{4}{5}}{0}{1} \cvec{1}{1}{1}{1} \| p^\downarrow)_{5,4}  &b(\cvec{\frac{2}{5}}{0}{0}{1})= \cvec{\frac{2}{5}}{\frac{4}{5}}{0}{1} \quad \mathcal{Z}_3 = \{ \cvec{1}{\frac{4}{5}}{0}{1} \}\\
        \tr{\mathit{Co}} &  (\emptyset\cvec{0}{0}{0}{0}\cvec{\frac{2}{5}}{0}{0}{1} \cvec{\frac{2}{5}}{\frac{4}{5}}{0}{1} \cvec{\frac{2}{5}}{1}{0}{1} \| \varepsilon)_{5,5}  & b(\cvec{\frac{2}{5}}{\frac{4}{5}}{0}{1})= \cvec{\frac{2}{5}}{\frac{4}{5}}{0}{1} \quad \mathcal{Z}_4 =\{ \cvec{\frac{2}{5}}{1}{0}{1}, \cvec{\frac{2}{5}}{1}{1}{1} \} \\
        \tr{U} \tr{\mathit{Ca}} &(\emptyset\cvec{0}{0}{0}{0}\cvec{\frac{2}{5}}{0}{0}{1} \cvec{\frac{2}{5}}{\frac{4}{5}}{0}{1} \cvec{\frac{2}{5}}{1}{0}{1} \cvec{1}{1}{1}{1} \| p^\downarrow)_{6,5} \tr{D} (\emptyset\cvec{0}{0}{0}{0}\cvec{\frac{2}{5}}{0}{0}{1} \cvec{\frac{2}{5}}{\frac{4}{5}}{0}{1} \cvec{\frac{2}{5}}{1}{0}{1} \cvec{1}{1}{1}{1} \| \mathcal{F}^1_\xi p^\downarrow)_{6,4} & b(\cvec{\frac{2}{5}}{1}{0}{1})= \cvec{max(\frac{2}{5},\frac{1}{2})}{\frac{4}{5}}{0}{1} \not \in p^\downarrow \\
 \tr{\mathit{Co}} &   (\emptyset\cvec{0}{0}{0}{0}\cvec{\frac{2}{5}}{0}{0}{1} \cvec{\frac{2}{5}}{\frac{4}{5}}{0}{1} \cvec{\frac{2}{5}}{\frac{4}{5}}{0}{1} \cvec{1}{1}{1}{1} \|  p^\downarrow)_{6,4}     & b(\cvec{\frac{2}{5}}{\frac{4}{5}}{0}{1})= \cvec{\frac{2}{5}}{\frac{4}{5}}{0}{1}\quad \mathcal{Z}_3 = \{ \cvec{1}{\frac{4}{5}}{0}{1} \}
 \end{align*}}
\caption{On the left, the execution of {\ADPDR}$_{\texttt{hCo01}}$ for the max reachability problem of Example \ref{ex:MDPpositive}: in the last state, it returns true since $x_3=x_4$. On the right, the data explaining the choices of (Conflict) and (Decide). Note that, in the two (Decide) steps, the guard $b(x_{k-1}) \notin Y_k$ holds because of the possibility of choosing the label $b$ in state $s_0$. This explain why  $Z$  is taken as $\mathcal{F}^1_\xi(p\downarrow)$ for the scheduler $\xi$ defined in~\eqref{eq:xi}.}\label{fig:exgen}
\end{figure}

\begin{figure}
 {\footnotesize \begin{align*}
    & (\emptyset\cvec{0}{0}{0}{0}\cvec{1}{1}{1}{1} \| \varepsilon)_{3,3}
    \tr{\mathit{Ca}} (\emptyset\cvec{0}{0}{0}{0}\cvec{1}{1}{1}{1} \| p^\downarrow )_{3,2} \tr{\mathit{Co}} (\emptyset\cvec{0}{0}{0}{0}\cvec{\frac{2}{5}}{0}{0}{1} \| \varepsilon)_{3,3} & {\tiny b(\cvec{0}{0}{0}{0})=\cvec{0}{0}{0}{1} \quad \mathcal{Z}_2 = \{\cvec{\frac{2}{5}}{0}{0}{1}, \cvec{\frac{2}{5}}{1}{0}{1},\cvec{\frac{2}{5}}{0}{1}{1}, \cvec{\frac{2}{5}}{1}{1}{1} \}  } \\
    %
    %
    \tr{U}  \tr{\mathit{Ca}} & (\emptyset\cvec{0}{0}{0}{0}\cvec{\frac{2}{5}}{0}{0}{1} \cvec{1}{1}{1}{1}\| p^\downarrow)_{4,3} \tr{\mathit{Co}}  (\emptyset\cvec{0}{0}{0}{0}\cvec{\frac{2}{5}}{0}{0}{1} \cvec{\frac{2}{5}}{\frac{4}{5}}{0}{1} \| \varepsilon)_{4,4}  & {\tiny b(\cvec{\frac{2}{5}}{0}{0}{1})= \cvec{\frac{2}{5}}{\frac{4}{5}}{0}{1} \quad \mathcal{Z}_3 = \{ \cvec{\frac{2}{5}}{1}{0}{1}, \cvec{\frac{2}{5}}{1}{1}{1} \} }\\
    %
    %
    %
    \tr{U} \tr{\mathit{Ca}} & (\emptyset\cvec{0}{0}{0}{0}\cvec{\frac{2}{5}}{0}{0}{1} \cvec{\frac{2}{5}}{\frac{4}{5}}{0}{1} \cvec{1}{1}{1}{1} \| p^\downarrow)_{5,4}  \tr{\mathit{Co}}  (\emptyset\cvec{0}{0}{0}{0}\cvec{\frac{2}{5}}{0}{0}{1} \cvec{\frac{2}{5}}{\frac{4}{5}}{0}{1} \cvec{\frac{2}{5}}{\frac{4}{5}}{0}{1} \| p^\downarrow)_{5,4} & {\tiny b(\cvec{\frac{2}{5}}{\frac{4}{5}}{0}{1})= \cvec{\frac{2}{5}}{\frac{4}{5}}{0}{1} \quad \mathcal{Z}_4 = \mathcal{Z}_3 }
 \end{align*}}
\caption{On the left, the execution of {\ADPDR}$_{\texttt{hCoB}}$ for the max reachability problem of Example \ref{ex:MDPpositive}: in the last state, it returns true since $x_3=x_4$. On the right, the data explaining the three choices of (Conflict). }\label{fig:exgen2}
\end{figure}

 \subsection{A Summary About the Max Reachability Problem}\label{app:summarymaxreach}
As a courtesy to the reader, we provide here a short description of the max reachability problems for MDPs. The reader is referred to \cite{BaierK} for further details.

An MDP $(A, S, s_\iota, \delta)$ mixes nondeterministic and probabilistic computations. The notion of a \emph{scheduler}, also known as \emph{adversary}, \emph{policy} or \emph{strategy}, is used to resolve nondeterministic choices.
Below, we write $S^+$ for the set of non-empty sequence over $S$, intuitively representing runs of the MDP.
A scheduler is a function $\alpha \colon S^+ \to A$ such that $\alpha(s_0 s_1\dots s_n)\in \mathit{Act}(s_n)$: given the states visited so far, the scheduler decides which action to trigger among the enabled ones so that the MDP behaves as a Markov chain. A scheduler $\alpha$ is called \emph{memoryless} if it always selects the same action in a given state, namely, if $\alpha(s_0 s_1\dots s_n)=\alpha(s_n)$ for any sequence $s_0 s_1\dots s_n\in S^+$. Memoryless schedulers can thus be represented just as functions $\alpha \colon S \to A$ such that $\alpha(s)\in \mathit{Act}(s)$ for any $s\in S$.

Given an MDP, the \emph{max reachability problem} requires to check whether the probability of reaching some bad states $\beta \subseteq S$ is less than or equal to a given threshold $\lambda \in [0, 1]$ for all possible schedulers.
Thus, to solve this problem, one should compute the supremum over infinitely many schedulers. Notably, it is known that there always exists one memoryless scheduler that maximizes the probabilities to reach $\beta$  (see e.g. \cite{BaierK}). As the memoryless schedulers are finitely many (although their number can grow exponentially), the supremum can thus be replaced by a maximum.

The fact that the problem stated in this way coincides with the lattice theoretic problem that we illustrated in Section \ref{sec:MDP}, it is well known: see e.g. \cite{BaierK}.


\section{Proofs of Section~\ref{sec:APDR}}\label{app:APDR}

In this appendix, we illustrate the proofs for the various results in Section~\ref{sec:APDR}. After the proofs for the invariants (Appendix~\ref{app:invariants}), we show soundness (Appendix~\ref{app:soundness}) and discuss two results about positive and negative sequences (Appendix~\ref{app:semilattices}) that will be used for the proof of Proposition~\ref{prop:multipleLTPDR}. The proofs for the results in Sections~\ref{sec:progression},~\ref{sec:heuristics} and~\ref{sec:termination} are in Appendixes~\ref{app:prog},~\ref{app:heuristics} and~\ref{app:neg}.


\subsection{Proofs about invariants}\label{app:invariants}


In this appendix we show that the properties in Fig.~\ref{fig:invariants} are invariants of {\APDR}, namely they hold in all reachable states. The proofs are fairly standard and some of them are similar in the spirit to those in~\cite{KoriCAV22}. However, we illustrate them in full details as we believe that the reader may find them helpful. \smallskip

For a state $s$ and a property $(Q)$, we will write $s\models (Q)$ to mean that $(Q)$ holds in $s$.
In order to show that $(Q)$ is an invariant we will prove
\begin{itemize}
\item[(a)] $s_0 \models (Q)$ and
\item[(b)] if $s \models (Q)$ and $s \tr{ } s'$, then $s'\models (Q)$.
\end{itemize}
Hereafter, we will fix $s=( \vec{x} \| \vec{y} )_{n,k}$ and $s'=( \vec{x}' \| \vec{y}' )_{n',k'}$. As usual we will write $x_j$ and $y_j$ for the elements of $\vec{x}$ and $\vec{y}$. For the elements of $\vec{x}'$ and $\vec{y}'$, we will write $x_j'$ and $y_j'$. Through the proofs, we will avoid to repeat every time in (b) that $s \models (Q)$, and we will just write $\stackrel{\tiny{(Q)}}{=}$ or $\stackrel{\tiny{(Q)}}{\tiny{\sqsubseteq}}$ whenever using such hypothesis. Moreover in (b) we will avoid to specify those cases that are trivial: for instance, for the properties that only concerns the positive chain $\vec{x}$, e.g., \eqref{eq:x0bot}, \eqref{eq:positivechain}, \eqref{eq:Ix1}, \eqref{eq:xP}  and \eqref{eq:positiveF}, it is enough to check the property (b) for $s \tr{U} s'$ and $s \tr{\mathit{Co}} s'$, since $s \tr{D} s'$ and $s \tr{\mathit{Ca}} s'$ only modify the negative sequence $\vec{y}$.


\medskip

We begin with proving \eqref{eq:x0bot} and \eqref{eq:positivechain}.

\paragraph{Proof of \eqref{eq:x0bot}$\colon$ $ x_0 = \bot $}
\begin{itemize}
\item[(a)] In $s_0$, $x_0= \bot$.
\item[(b)] If $s \tr{U} s'$, then $x_0' = x_0 \stackrel{\tiny{\eqref{eq:x0bot}}}{=}\bot$. \\
If $s \trz{\mathit{Co}}{z} s'$, then $x_0'=x_0 \sqcap z \stackrel{\tiny{\eqref{eq:x0bot}}}{=} \bot \sqcap z = \bot$. \qed
\end{itemize}

\paragraph{Proof of \eqref{eq:positivechain}$\colon$  }$\forall j\in[0, n-2] \text{, } x_j \sqsubseteq x_{j+1}$
\begin{itemize}
\item[(a)] In $s_0$, since $n=2$ one needs to check only the case $j=0$: $x_{0} = \bot \sqsubseteq \top =x_1 $.
\item[(b)] If $s \tr{U} s'$, then $x_j' =x_j \stackrel{\tiny{\eqref{eq:positivechain}}}{\sqsubseteq}  x_{j+1} =x_{j+1}'$ for all $j\in[0,n-2]$. For $j=n-1$, $x_{n-1}' =x_{n-1} \sqsubseteq \top =x_{n}'$. Since $n'=n+1$, then $\forall j\in [0,n'-2] \text{, } x_j'\sqsubseteq x_{j+1}'$.\\
If $s \trz{\mathit{Co}}{z} s'$, then $x_j' = x_j \sqcap z \stackrel{\tiny{\eqref{eq:positivechain}}}{\sqsubseteq} x_{j+1} \sqcap z \sqsubseteq x_{j+1}'$ for all $j\in[0,k]$.
For all $j\in [i+1,n-2]$, $x_j' = x_j \stackrel{\tiny{\eqref{eq:positivechain}}}{\sqsubseteq} x_{j+1}  =x_{j+1}'$. Thus, $\forall j\in [0,n'-2] \text{, } x_j'\sqsubseteq x_{j+1}'$.  \qed
\end{itemize}

%

The proof for the next invariant follows a different pattern and crucially relies on the following observation.
\begin{lemma}\label{lemma:i=1}
When $k=1$, then the algorithm either returns or does (Conflict).
\end{lemma}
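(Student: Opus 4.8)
The plan is to show the contrapositive: if $k=1$ and the algorithm neither returns nor does (Conflict) from the state $s=(\vec x \| \vec y)_{n,1}$, we derive a contradiction. When $k=1$, the negative sequence $\vec y$ is non-empty (since $k=1 < n$ because $n\ge 2$, unless $n=1$ which is impossible as $n\ge 2$ always), so the only applicable rules among the four are (Decide) and (Conflict), which split on whether $f(x_{k-1}) = f(x_0)\sqsubseteq y_k = y_1$ or not. So if (Conflict) does not fire, then (Decide) must: its guard $f(x_0)\not\sqsubseteq y_1$ holds. I would then argue that this forces the algorithm to have already returned false, contradicting the assumption that it does not return.

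First I would recall that by invariant \eqref{eq:x0bot}, $x_0=\bot$. Hence the (Decide)-guard $f(x_0)\not\sqsubseteq y_1$ reads $f(\bot)\not\sqsubseteq y_1$. The key step is to relate this to the termination condition $i\not\sqsubseteq y_1$. Since $b = f\sqcup i$, we have $b(\bot) = f(\bot)\sqcup i$. Now I want to combine $f(\bot)\not\sqsubseteq y_1$ with invariant \eqref{eq:negativeG} (which gives $g(y_{j+1})\sqsubseteq y_j$ for $j$ in range) — actually the cleaner route is to use invariant \eqref{negativefinal}, i.e. $g^{n-1-j}(p)\sqsubseteq y_j$ for $j\in[k,n-1]$; with $k=1$ this gives $g^{n-2}(p)\sqsubseteq y_1$. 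Alternatively, and more directly, I suspect the intended argument uses the fact that whenever $k=1$, invariant \eqref{eq:negativeG} together with \eqref{eq:Pepsilon} guarantees $\nu(g\sqcap p)\sqsubseteq y_1$ via \eqref{eq:kleeneadjoint}, so that $f(\bot)\not\sqsubseteq y_1$ combined with $i\sqsubseteq \cdots$ would contradict $\mu(f\sqcup i)\sqsubseteq\nu(g\sqcap p)$; but that is circular. The right argument is purely local: I would show that under $k=1$, the guard of (Decide), $f(x_0)\not\sqsubseteq y_1$, together with invariant \eqref{eq:positivenegative} at index $j=1$ — namely $x_1\not\sqsubseteq y_1$ — and the invariant \eqref{eq:Ix1}, $i\sqsubseteq x_1$, forces in fact $i\not\sqsubseteq y_1$, because $f(\bot)\not\sqsubseteq y_1$ means $b(\bot)=f(\bot)\sqcup i\not\sqsubseteq y_1$; but $y_1$ is a lower-ish witness and one checks $b(\bot)\sqsubseteq x_1$... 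This needs care: the honest observation is that $(f\sqcup i)(x_0\sqcap z)\sqsubseteq z$ reasoning in (Conflict) shows that whenever (Conflict) is \emph{not} applicable at $k=1$ with $z=\top$-style canonical choices fail, the (Decide) branch must be taken, and (Decide) at $k=1$ would produce a state with index $(n,0)$, which is outside the legal range $k\ge 1$ — so (Decide) at $k=1$ is \emph{not} actually permitted by the algorithm (the rule $(\vec x\|\vec y)_{n,k}\mapsto(\vec x\|z,\vec y)_{n,k-1}$ requires $k-1\ge 1$). Thus at $k=1$ (Decide) cannot fire, leaving only (Conflict), and if even (Conflict)'s guard fails then we are in the (Decide)-guard case with $k=1$, where the TERMINATION check $i\not\sqsubseteq y_1$ must be examined.

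So the cleanest plan: observe that at $k=1$, rules (Unfold) and (Candidate) are unavailable because $\vec y\ne\varepsilon$; hence only (Decide) or (Conflict) can be enabled. If (Conflict) is enabled we are done. Otherwise the (Decide)-branch guard $f(x_0)\not\sqsubseteq y_1$ holds. I would then show this implies $i\not\sqsubseteq y_1$: indeed $f(x_0)=f(\bot)\sqsubseteq f(x_1)\sqsubseteq$ (by \eqref{eq:positiveF}) $x_2\sqsubseteq\cdots$, but more to the point, by \eqref{eq:positiveinitialfinal} we have $(f\sqcup i)^1(\bot)=f(\bot)\sqcup i\sqsubseteq x_1$; and I claim $f(\bot)\not\sqsubseteq y_1 \Rightarrow i\not\sqsubseteq y_1$ is \emph{not} generally true. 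Therefore the correct reading must be that a (Decide) step at $k=1$ would require building index $(n,k-1)=(n,0)$ which violates the standing constraint $1\le k\le n$ (invariant \eqref{eq:invi}); hence (Decide) is simply \emph{not among the moves} when $k=1$, so the only remaining possibility is (Conflict). The main obstacle I anticipate is pinning down precisely why (Decide) is disallowed at $k=1$: it is because the algorithm's rule for the branch $f(x_{k-1})\not\sqsubseteq y_k$ decrements $k$, and the state space $\states$ and invariant \eqref{eq:invi} require $k\ge 1$; so when $k=1$ the algorithm, rather than performing (Decide), must instead reach the TERMINATION block and return false (since $f(x_0)\not\sqsubseteq y_1$ with $x_0=\bot$ forces, via \eqref{eq:negativeG}/\eqref{negativefinal} chaining down to $y_1$, that $i\not\sqsubseteq y_1$) — and if it does not return false, then it was not in the (Decide) case at all, so (Conflict) applies. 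I would therefore structure the proof as: (i) at $k=1$, $\vec y\ne\varepsilon$, so only (Decide)/(Conflict) guards can hold; (ii) the (Decide) guard at $k=1$ together with $x_0=\bot$ and the invariants yields $i\not\sqsubseteq y_1$, triggering "return false"; (iii) hence if the algorithm neither returns nor does (Conflict), we have a contradiction. The delicate computation in step (ii) — deriving $i\not\sqsubseteq y_1$ from $f(\bot)\not\sqsubseteq y_1$ — is the crux, and I expect it uses $b(\bot)=f(\bot)\sqcup i$ together with the fact that by invariant \eqref{eq:positivenegative} (at $j=1$, valid since $k=1\le 1\le n-1$) $x_1\not\sqsubseteq y_1$ while $i\sqsubseteq x_1$ by \eqref{eq:Ix1}, combined with a sharper invariant ensuring $y_1$ absorbs everything below $b(\bot)$ except possibly $i$.
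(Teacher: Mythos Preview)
Your proposal misses the one-line observation that makes this lemma immediate: since $f$ is a left adjoint, it preserves all joins, in particular the empty join, so $f(\bot)=\bot$. Combined with $x_0=\bot$ (invariant \eqref{eq:x0bot}), this gives $f(x_{k-1})=f(x_0)=f(\bot)=\bot\sqsubseteq y_1$ unconditionally. Hence the (Decide) guard $f(x_{k-1})\not\sqsubseteq y_k$ is \emph{never} satisfied when $k=1$; the (Conflict) guard is automatically met, and the only alternative is that the state is conclusive and the algorithm returns. That is the paper's proof in full.

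Your various attempted routes all stumble on the same point. You correctly note that deriving $i\not\sqsubseteq y_1$ from $f(\bot)\not\sqsubseteq y_1$ is ``not generally true'' --- indeed it is false --- but you keep trying to salvage it via \eqref{eq:positivenegative}, \eqref{eq:Ix1}, or \eqref{negativefinal}, none of which can work because the premise $f(\bot)\not\sqsubseteq y_1$ is itself vacuous. The suggestion that (Decide) is ``disallowed'' at $k=1$ because it would produce an index $(n,0)$ outside the legal range is also off: invariant \eqref{eq:invi} is a property that must be \emph{proved} to be maintained, not a syntactic constraint built into the rules, and in fact Lemma~\ref{lemma:i=1} is exactly what the paper uses to establish \eqref{eq:invi}. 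So you cannot appeal to \eqref{eq:invi} here without circularity.
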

\begin{proof}
Since $k=1$ then $\vec{y}\neq \varepsilon$.
By \eqref{eq:x0bot} $x_0=\bot$. Since $f$ is a left adjoint, then $f(x_0) = \bot$. Thus $f(x_0) \sqsubseteq y_1$ and if the state is not conclusive then (Conflict) is enabled.
\qed
\end{proof}

\paragraph{Proof of \eqref{eq:invi}$\colon$}$1\leq k \leq n$
\begin{itemize}
\item To prove that $1 \leq k$, observe that $k$ is initialised at $2$ and that it is only decremented by $1$. When $k=1$, by Lemma \ref{lemma:i=1}, we have that the algorithms either returns or does (Conflict) and thus increment $k$.
\item To prove that $k \leq n$, observe that $k$ is incremented only by $1$. When $k=n$, the algorithm does either (Unfold) or (Candidate). In the latter case, $k$ is decremented. In the former, both $n$ and $k$ are incremented.
\qed
\end{itemize}

It is worth to remark that the proofs of the three invariants \eqref{eq:x0bot}, \eqref{eq:invi}  and \eqref{eq:positivechain} do not rely on the properties of the chosen element $z\in L$. Such properties are instead fundamental for proving the invariants of the positive chain
(\eqref{eq:Ix1}, \eqref{eq:xP}, \eqref{eq:positiveF} and \eqref{eq:positiveG}), and the invariants of the negative sequence (\eqref{eq:Pepsilon} and \eqref{eq:negativeG}).

\paragraph{Proof of \eqref{eq:Ix1}$\colon$}$i \sqsubseteq x_1$
\begin{itemize}
\item[(a)] In $s_0$, $x_1 = \top \sqsupseteq i$.
\item[(b)] If $s \tr{U} s'$, then $x_1'=x_1 \stackrel{\tiny{\eqref{eq:Ix1}}}{\sqsupseteq} i$. \\
If $s \trz{\mathit{Co}}{z} s'$, since $z\sqsupseteq i$,  then $x_1' = x_1\sqcap z \stackrel{\tiny{\eqref{eq:Ix1}}}{\sqsupseteq} i\sqcap i = i$.
 \qed
\end{itemize}

\paragraph{Proof of \eqref{eq:xP}$\colon$}$x_{n-2}\sqsubseteq p$
\begin{itemize}
\item[(a)] In $s_0$, $n=2$ and $x_{0}= \bot \sqsubseteq p$.
\item[(b)] If $s \tr{U} s'$, since (Unfold) is applied only if $x_{n-1}\sqsubseteq p$ and $n'=n+1$, then $x'_{n'-2}=x_{n-1} \sqsubseteq p$. \\
If $s \tr{\mathit{Co}} s'$, since $n'=n$, then $x_{n'-2}'= x_{n-2}'  \sqsubseteq x_{n-2} \stackrel{\tiny{\eqref{eq:xP}}}{\sqsubseteq} p$.   \qed
\end{itemize}

\paragraph{Proof of \eqref{eq:positiveF}$\colon$}$\forall j\in[0, n-2] \text{, } f(x_j) \sqsubseteq x_{j+1}$
\begin{itemize}
\item[(a)] In $s_0$, since $n=2$  one needs to check only the case $j=0$: $f(x_0) \sqsubseteq \top = x_1$.
\item[(b)] If $s \tr{U} s'$, then $f(x_j') =f(x_j) \stackrel{\tiny{\eqref{eq:positivechain}}}{\sqsubseteq} x_{j+1} =x_{j+1}'$ for all $j\in[0,n-2]$. For $j=n-1$, $f(x_{n-1}') = f(x_{n-1}) \sqsubseteq \top = x_{j+1}'$. Since $n'=n+1$, then $\forall j\in [0,n'-2] \text{, } f(x_j')\sqsubseteq x_{j+1}'$. \\
If $s \trz{\mathit{Co}}{z} s'$, since $f(x_{k-1} \sqcap z) \sqsubseteq z$, then by \eqref{eq:positivechain} and monotonicity of $f$ it holds that $\forall j\in [0,k-1]$, $f(x_{j} \sqcap z) \sqsubseteq z$. Since $f(x_j \sqcap z) \sqsubseteq f(x_j) \stackrel{\tiny{\eqref{eq:positiveF}}}{\sqsubseteq} x_{j+1}$, it holds that $f(x_j \sqcap z) \sqsubseteq x_{j+1} \sqcap z $ for all $j\in[0, k-1]$. With this observation is immediate to conclude that $\forall j\in[0,n'-2] \text{, }f(x_j')\sqsubseteq x_{j+1}'$. \qed
\end{itemize}

\paragraph{Proof of \eqref{eq:positiveG}$\colon$}$\forall j\in[0, n-2] \text{, }x_j \sqsubseteq g(x_{j+1})$

\noindent
Follows immediately from \eqref{eq:positiveF} and $f \dashv g$. \qed

\medskip

Next we prove that \eqref{eq:Pepsilon} and \eqref{eq:negativeG} are invariant. We will omit the cases of (Conflict) and (Unfold) since they are trivial. Indeed, the negative sequence $\vec{y}$ is truncated in the rule (Conflict) but if the two invariants holds for $\vec{y}$ then they obviously hold also for its tail $\mathsf{tail}(\vec{y})$. Moreover the number $n$ is modified by (Unfold) but, the two invariants trivially holds when $\vec{y} = \varepsilon$.

\paragraph{Proof of \eqref{eq:Pepsilon}$\colon$ }If $\vec{y}\neq \varepsilon$ then $p \sqsubseteq y_{n-1}$
\begin{itemize}
\item[(a)] In $s_0$, $\vec{y}= \varepsilon$. Thus \eqref{eq:Pepsilon} trivially holds.
\item[(b)] If $s \trz{\mathit{Ca}}{z} s'$, since $p\sqsubseteq z$, then $p \sqsubseteq z = y_{n-1}'$.\\
If $s \tr{D} s'$, then $p \stackrel{\tiny{\eqref{eq:Pepsilon}}}{\sqsubseteq} y_{n-1}=y'_{n-1}$. \qed
\end{itemize}

\paragraph{Proof of \eqref{eq:negativeG}$\colon$}$\forall j\in[k,n-2] \text{, } g(y_{j+1}) \sqsubseteq y_j$
\begin{itemize}
\item[(a)] In $s_0$, $k=2$ and $n=2$. Thus \eqref{eq:negativeG} trivially holds.
\item[(b)] If $s \tr{\mathit{Ca}} s'$, then $k'=n-1$ and thus \eqref{eq:negativeG} trivially holds.\\
If $s \trz{D}{z} s'$, since  $z\sqsupseteq g(y_k)$ and $k'=k-1$, then $y_{k'}' = y_{k-1}' = z \sqsupseteq g(y_k) = g(y_k')= g(y_{k'+1}')$. For $j\in[k'+1,n-2]$, namely for $j\in[k,n-2]$, it holds that $y'_j= y_j\stackrel{\tiny{\eqref{eq:negativeG}}}{\sqsupseteq} g(y_{j+1})=g(y_{j+1}')$. Thus, $\forall j \in [k',n-2] \text{, }g(y_{j+1}') \sqsubseteq y_j'$. \qed
\end{itemize}


At this point it is worth to mention that the proofs of the invariants illustrated so far only exploit the second constraints on the chosen element $z\in L$ in the rules (Candidate), (Decide) and (Conflict). The proof of the next invariant, necessary to prove progression, relies on the first constraints on $z$.

\paragraph{Proof of \eqref{eq:positivenegative}$\colon$}$\forall j \in [k, n - 1] \text{, } x_j \not\sqsubseteq y_j$
\begin{itemize}
\item[(a)] In $s_0$, $k=n$ and thus \eqref{eq:positivenegative} trivially holds.
\item[(b)] If $s \tr{U} s'$, then $k'=n'$ and thus \eqref{eq:positivenegative} trivially holds.\\
If $s \trz{\mathit{Ca}}{z} s'$, since $x_{n-1} \not\sqsubseteq z$, $x'_{n-1}=x_{n-1}$ and $k'=n'-1=n-1$, then $x'_{n'-1} = x_{n-1}\not\sqsubseteq z = y'_{n'-1}$.  \\
If $s \trz{D}{z} s'$, since $x_{k-1} \not\sqsubseteq z$, then $x_{k-1}'=x_{k-1} \not\sqsubseteq z = y'_{k-1}$. Moreover, $\forall j \in [k, n - 1]$,
$x_j' = x_j \stackrel{\tiny{\eqref{eq:positivenegative}}}{ \not\sqsubseteq} y_j = y_j'$. Thus, $\forall j \in [k', n' - 1] \text{, } x_j' \not\sqsubseteq y_j'$.\\
If $s \tr{\mathit{Co}} s'$, then $k'=k+1$ and $n'=n$. Observe that for $j \in [k + 1, n - 1]$, $x'_j = x_j \stackrel{\tiny{\eqref{eq:positivenegative}}}{ \not\sqsubseteq} y_j =y_j'$. Thus $\forall j \in  [k', n' - 1] \text{,  }x_j' \not\sqsubseteq y_j' $.
\qed
\end{itemize}

We conclude with the proofs of \eqref{eq:positiveinitialfinal}, \eqref{positivefinal} and \eqref{negativefinal} that are simple arguments based on the invariants proved above.

\paragraph{Proof of \eqref{eq:positiveinitialfinal}$\colon$}$\forall j \in [0, n-1] \text{, } (f \sqcup i)^j (\bot) \sqsubseteq x_j \sqsubseteq (g \sqcap p)^{n-1-j} (\top)$
\begin{itemize}
\item We prove $(f \sqcup i)^j \bot \sqsubseteq x_j$ by induction on $j \in [0, n-1]$.
For $j=0$, $(f \sqcup i)^j  \bot = \bot \sqsubseteq x_j$. For $j \in [1,n-1]$,
\begin{align*}
(f \sqcup i)^j \bot
    & = (f \sqcup i) (\, (f \sqcup i)^{j-1} \bot \,) & \tag{def.}\\
    & = f (\, (f \sqcup i)^{j-1} \bot \,)  \sqcup i  & \tag{def.}    \\
    & \sqsubseteq f(x_{j-1}) \sqcup i \tag{Induction hypothesis} &  \\
    & \sqsubseteq x_j \sqcup i & \tag{\eqref{eq:positiveF}} \\
    & = x_j  & \tag{\eqref{eq:Ix1} and \eqref{eq:positivechain}}
\end{align*}
\item In order to prove $x_j \sqsubseteq (g \sqcap p)^{n-1-j} \top$ for all $j \in [0, n-1]$, it is convenient to prove the equivalent statement  $x_{n-1-j} \leq (g \sqcap p)^{j} \top$ for all $j \in [0, n-1]$.
For $j=0$, $x_{n-1} \sqsubseteq \top = (g \sqcap p)^{0} \top$. For $j \in [1,n-1]$,
\begin{align*}
(g \sqcap p)^{j} \top
    & = (g \sqcap p) (\, (g \sqcap p)^{j-1} \top \,) & \tag{def.}\\
    & = g (\, (g \sqcap p)^{j-1} \top \,)  \sqcap p  & \tag{def.}    \\
    & \sqsupseteq g(x_{n-1-j+1}) \sqcap p \tag{Induction hypothesis} &  \\
    & \sqsupseteq x_{n-1-j} \sqcap p & \tag{\eqref{eq:positiveG}} \\
    & = x_{n-1-j}  & \tag{\eqref{eq:xP} and \eqref{eq:positivechain}}
\end{align*} \qed
\end{itemize}

\paragraph{Proof of \eqref{positivefinal}$\colon$}$\forall j \in [1, n-1] \text{, } x_{j-1} \sqsubseteq g^{n-1-j}(p)$

\noindent
By \eqref{eq:chainsadjoint}, for all $l\in \Nat$, $(g \sqcap p)^{l+1}\top = \bigsqcap_{j\leq l}g^j(p)$. Thus, using \eqref{eq:positiveinitialfinal}, it holds that for all $j\in [1,n-1]$, $x_{j-1} \sqsubseteq (g \sqcap p)^{n-j} \top = \bigsqcap_{j\leq n-j-1}g^j(p) \sqsubseteq g^{n-j-1}(p)$. \qed

\paragraph{Proof of \eqref{negativefinal}$\colon$}$\forall j\in[k,n-1]\text{, }g^{n-1-j}(p) \sqsubseteq y_j$

\noindent
In order to prove \eqref{negativefinal}, we prove the equivalent statement $g^j(p) \sqsubseteq y_{n-1-j}$ for all $j\in [0,n-1-k]$.
For $j=0$, $g^0(p)=p \sqsubseteq y_{n-1}$. The last inequality holds by \eqref{eq:Pepsilon}. For $j\in [1,n-1-k]$,
\begin{align*}
g^{j}(p)
    & = g (\, g^{j-1} (p) \,) & \tag{def.}\\
    & \sqsupseteq g(y_{n-1-(j-1)} )  \tag{Induction hypothesis} &  \\
& = g(y_{n-j} )  \tag{def.} &  \\
    & \sqsupseteq y_{n-1-j}  & \tag{\eqref{eq:negativeG}}
\end{align*}
This concludes the proofs of all main invariants of {\APDR}.
\qed

\subsection{Proof of Theorem~\ref{th:soundness}: Soundness for {\APDR}}\label{app:soundness}

\begin{proof}[Proof of Theorem~\ref{th:soundness}]
The first point exploits Knaster-Tarski. The second Kleene.
\begin{enumerate}
\item Observe that {\APDR} returns true if $x_{j+1} \sqsubseteq x_j$. By \eqref{eq:positiveF}, we thus have $f(x_j) \sqsubseteq x_{j+1} \sqsubseteq x_j$. Moreover, by \eqref{eq:Ix1} and \eqref{eq:positivechain}, it holds that $i \sqsubseteq x_j$ and $x_j\sqsubseteq p$. Therefore, it holds that
\[(f\sqcup i) x_j \sqsubseteq  x_j \sqsubseteq p.\]
By \eqref{eq:coinductionproofprinciple}, we have that $\mu (f\sqcup i) \sqsubseteq p$.
\item Observe that {\APDR} returns false if $i \not \sqsubseteq y_1$. By \eqref{negativefinal},  $g^{n-2}(p) \sqsubseteq y_1$. Thus  $i \not \sqsubseteq g^{n-2}(p)$. Moreover
\begin{align*}
g^{n-2}p & \sqsupseteq \bigsqcap_{j\in \omega} g^{j}(p) & \tag{def.}\\
&  = \nu (g \sqcap p) & \tag{\eqref{eq:kleeneadjoint}}
\end{align*}
Thus $i \not \sqsubseteq \nu (g \sqcap p) $. By \eqref{eq:iff}, $\mu (f \sqcup i) \not \sqsubseteq p$.\qed
\end{enumerate}
\end{proof}



\subsection{Positive Chains and Negative Sequences Form Semilattices}\label{app:semilattices}
We show that positive chains form a join-semilattice and negative sequences a meet-semilattices.
We recall that joins and meets are defined point-wise, i.e., for all positive chains $\vec{x^1}, \vec{x^2}$ and negative sequences  $\vec{y^1}, \vec{y^2}$, $(\vec{x^1\sqcup x^2})_j \defeq x^1_j\sqcup x^2_j$ and $(\vec{y^1\sqcap y^2})_j \defeq y^1_j\sqcap y^2_j$. The meet semilattice of negative sequences has as bottom element the sequence defined for all $j\in [k,n-1]$ as $g^{n-1-j}(p)$, see \eqref{negativefinal} in Fig.~\ref{fig:invariants}. The join semilattice of positive chains has as top element the chain defined for all $j\in [0,n-1]$ as $(g \sqcap p)^{n-1-j} (\top)$ and as bottom element the chain $(f \sqcup i)^j (\bot)$, see \eqref{eq:positiveinitialfinal}.
Note that, if  $\vec{y^1}$ and $\vec{y^2}$ are conclusive, also $\vec{y^1\sqcap y^2}$ is conclusive, while such property does not necessarily hold for positive chains.

\begin{lemma}\label{prop:joinpositive} Let $I$ be a set. For all $m \in I$, let $\vec{x^m}=x^m_0, \dots, x^m_{n-1}$ be a positive chain.
Then, the chain $\vec{\bigsqcup_{m\in I} x^m}$ defined for all $j \in [0, n-1]$ as
\[(\vec{\bigsqcup_{m\in I} x^m})_j \defeq \bigsqcup_{m\in I}x^m_{j}\] is a positive chain.
\end{lemma}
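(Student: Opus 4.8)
The plan is to check directly that the pointwise join satisfies the four conditions of Definition~\ref{def:posi_seq}. Write $\vec{z}\defeq\vec{\bigsqcup_{m\in I}x^m}$, so $z_j\defeq\bigsqcup_{m\in I}x^m_j$ for $j\in[0,n-1]$. All the $\vec{x^m}$ share the same length $n\geq 2$, hence so does $\vec{z}$, and it remains to verify that $\vec{z}$ is a chain (condition~\eqref{eq:positivechain}) and that it satisfies \eqref{eq:Ix1}, \eqref{eq:xP} and \eqref{eq:positiveF}. The only ingredients I would use are that each $\vec{x^m}$ is a positive chain, the universal property of least upper bounds, and the left-adjointness of $f$.

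First I would handle the three ``order-theoretic'' conditions. For \eqref{eq:positivechain}, fix $j\in[0,n-2]$: for every $m\in I$ we have $x^m_j\sqsubseteq x^m_{j+1}\sqsubseteq z_{j+1}$, so $z_{j+1}$ is an upper bound of $\{x^m_j\mid m\in I\}$ and therefore $z_j\sqsubseteq z_{j+1}$. For \eqref{eq:xP}, each $x^m_{n-2}\sqsubseteq p$, so $p$ bounds $\{x^m_{n-2}\mid m\in I\}$ from above and $z_{n-2}\sqsubseteq p$. For \eqref{eq:Ix1}, pick any $m\in I$: then $i\sqsubseteq x^m_1\sqsubseteq z_1$; here I rely on $I\neq\emptyset$, and I would state this assumption explicitly, since for $I=\emptyset$ one has $z_1=\bot$ and the claim may fail.

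The remaining condition \eqref{eq:positiveF} is the only place where more than bookkeeping is needed. Since $f$ is a left adjoint it preserves arbitrary joins, so $f(z_j)=f\big(\bigsqcup_{m\in I}x^m_j\big)=\bigsqcup_{m\in I}f(x^m_j)$ for each $j\in[0,n-2]$. By \eqref{eq:positiveF} applied to $\vec{x^m}$ we have $f(x^m_j)\sqsubseteq x^m_{j+1}\sqsubseteq z_{j+1}$ for every $m$, so $z_{j+1}$ is an upper bound of $\{f(x^m_j)\mid m\in I\}$ and hence $f(z_j)\sqsubseteq z_{j+1}$. Combining the four items shows that $\vec{z}$ is a positive chain.

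I do not expect a genuine obstacle: the statement is exactly a ``pointwise join respects the defining inequalities'' fact. The one subtlety worth flagging is \eqref{eq:positiveF}, where plain monotonicity of $f$ would only give $f(x^m_j)\sqsubseteq f(z_j)$, the wrong direction; obtaining an upper bound on $f(z_j)$ genuinely needs that $f$ preserves joins, i.e.\ that it is a left adjoint. A minor secondary point is the non-emptiness of $I$ required for \eqref{eq:Ix1}.
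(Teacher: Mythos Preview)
Your proof is correct and follows essentially the same approach as the paper: verify \eqref{eq:Ix1}, \eqref{eq:xP}, and \eqref{eq:positiveF} directly, using that the left adjoint $f$ preserves arbitrary joins for the last one. Your version is in fact slightly more careful than the paper's, since you also check explicitly that the pointwise join is a chain (which Definition~\ref{def:posi_seq} requires but the paper's proof leaves implicit) and you correctly flag the need for $I\neq\emptyset$ in establishing \eqref{eq:Ix1}; the paper silently assumes this.
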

\begin{proof}
Since $  i \sqsubseteq x^m_{1}$ for all $m\in I$, then $  i \sqsubseteq \bigsqcup_{m\in I}x^m_{1} $.

Since $  x^m_{n-2} \sqsubseteq p$ for all $m\in I$, then $ \bigsqcup_{m\in I}x^m_{n-2}  \sqsubseteq p$.

To show that $f( (\bigsqcup_{m\in I}\vec{x^m})_{j})  \sqsubseteq (\bigsqcup_{m\in I}\vec{x^m})_{j+1}$ we just observe the following
\begin{align*}
f((\bigsqcup_{m\in I}\vec{x^m})_{j})   &= f(\bigsqcup_{m\in I}x^m_j) & \tag{def.}\\
&  = \bigsqcup_{m\in I} f(x^m_{j} ) & \tag{$f \dashv g$}\\
&  \sqsubseteq \bigsqcup_{m\in I} x^m_{j+1} & \tag{\eqref{eq:positiveF}}\\
& =  (\bigsqcup_{m\in I}\vec{x^m})_{j+1} &\tag{def.}
\end{align*}
Thus \eqref{eq:Ix1}, \eqref{eq:xP} and  \eqref{eq:positiveF} hold for $\vec{\bigsqcup_{m\in I} x^m}$.
\qed
\end{proof}

\begin{lemma}\label{prop:meetneg} Let $I$ be a set. For all $m \in I$, let $\vec{y^m}=y^m_k, \dots, y^m_{n-1}$ be a negative sequence. Then, the sequence $\vec{\bigsqcap_{m\in I} y^m}$ defined for all $j = 0, \dots n-1$ as
\[(\vec{\bigsqcap_{m\in I} y^m})_j \defeq \bigsqcap_{m\in I}y^m_{j}\] is a negative sequence. Moreover, if  $\vec{y^m}$ is conclusive for all $m \in I$, then also $\vec{\bigsqcap_{m\in I} y^m}$ is conclusive.
\end{lemma}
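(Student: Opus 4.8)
The plan is to mirror the proof of Lemma~\ref{prop:joinpositive}: I check the two defining conditions of a negative sequence from Definition~\ref{def:neg_seq}, namely \eqref{eq:Pepsilon} and \eqref{eq:negativeG}, for the pointwise meet, which I abbreviate $\vec{Y}\defeq\vec{\bigsqcap_{m\in I} y^m}$, so $Y_j = \bigsqcap_{m\in I} y^m_j$. Since all the $\vec{y^m}$ share the same index $(n,k)$, so does $\vec{Y}$, and in particular the constraint $1\le k\le n$ is inherited; it thus remains only to verify \eqref{eq:Pepsilon} and \eqref{eq:negativeG}.

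For \eqref{eq:Pepsilon}: if $k=n$ then $\vec{Y}=\varepsilon$ and there is nothing to prove; otherwise, from $p\sqsubseteq y^m_{n-1}$ for every $m\in I$ we get $p\sqsubseteq\bigsqcap_{m\in I} y^m_{n-1}=Y_{n-1}$ directly from the universal property of the greatest lower bound. For \eqref{eq:negativeG}, fix $j\in[k,n-2]$. Since $f\dashv g$, the right adjoint $g$ preserves meets, so $g(Y_{j+1})=g(\bigsqcap_{m} y^m_{j+1})=\bigsqcap_{m} g(y^m_{j+1})$; applying \eqref{eq:negativeG} for each $\vec{y^m}$ gives $g(y^m_{j+1})\sqsubseteq y^m_j$, hence $\bigsqcap_m g(y^m_{j+1})\sqsubseteq\bigsqcap_m y^m_j=Y_j$. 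Chaining the equalities and the inequality yields $g(Y_{j+1})\sqsubseteq Y_j$, as required. (Mere monotonicity of $g$ would already suffice, since $g(Y_{j+1})\sqsubseteq g(y^{m_0}_{j+1})\sqsubseteq y^{m_0}_j$ for every $m_0$; I invoke the adjunction only to keep the argument visibly parallel to Lemma~\ref{prop:joinpositive}.)

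For the ``moreover'' part, assume $I\neq\emptyset$ and that every $\vec{y^m}$ is conclusive, so $k=1$ and $i\not\sqsubseteq y^m_1$ for all $m$. I must show $i\not\sqsubseteq Y_1=\bigsqcap_{m} y^m_1$. Suppose instead $i\sqsubseteq\bigsqcap_m y^m_1$; picking any $m_0\in I$ and using $\bigsqcap_m y^m_1\sqsubseteq y^{m_0}_1$, we get $i\sqsubseteq y^{m_0}_1$, contradicting conclusiveness of $\vec{y^{m_0}}$. Hence $\vec{Y}$ is conclusive. I do not anticipate any real obstacle here; the only points that require a line of care are the empty-sequence case $k=n$ in \eqref{eq:Pepsilon}, the use of meet-preservation (or just monotonicity) of $g$ in \eqref{eq:negativeG}, and the harmless nonemptiness assumption on $I$ needed for the conclusiveness claim (exactly as in Lemma~\ref{prop:joinpositive}).
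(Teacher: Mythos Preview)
Your proof is correct and follows essentially the same route as the paper's: verify \eqref{eq:Pepsilon} via the universal property of the meet, verify \eqref{eq:negativeG} using that the right adjoint $g$ preserves meets, and derive conclusiveness by a one-line contradiction. You are in fact slightly more careful than the paper, since you make explicit the trivial case $k=n$ and the need for $I\neq\emptyset$ in the conclusiveness clause (the paper glosses over both), and your side remark that monotonicity of $g$ already suffices is a nice observation.
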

\begin{proof}
Since $  p \sqsubseteq y^m_{n-1}$ for all $m\in I$, then $  p \sqsubseteq \bigsqcap_{m\in I}y^m_{n-1} $.

To show that $g(\bigsqcap_{m\in I}\vec{y^m})_{j+1}  \sqsubseteq (\bigsqcap_{m\in I}\vec{y^m})_{j}$ we proceed as follows
\begin{align*}
g(\bigsqcap_{m\in I}\vec{y^m})_{j+1}   &= g(\bigsqcap_{m\in I}y^m_j) & \tag{def.}\\
&  = \bigsqcap_{m\in I} g(y^m_{j+1} ) & \tag{$f \dashv g$}\\
&  \sqsubseteq \bigsqcap_{m\in I} y^m_{j} & \tag{\eqref{eq:negativeG}}\\
& =  (\bigsqcap_{m\in I}\vec{y^m})_{j} &\tag{def.}
\end{align*}

For conclusive observe that, since  $i \not \sqsubseteq y_1^m$ for all $m \in I$, then  $i \not \sqsubseteq \bigsqcap_{m\in I}y_1^m = (\bigsqcap_{m\in I}\vec{y^m})_{1}$.
\qed
\end{proof}

The fact that the bottom element of such meet-semilattice is given by the sequence $g^{n-1-j}(p)$ is exactly the statement of invariant \eqref{negativefinal}.
The top and bottom element for positive chains are defined as in invariant \eqref{eq:positiveinitialfinal}.


\subsection{Proof of Section~\ref{sec:progression}: Progression}\label{app:prog}

\begin{proof}[Proof of Proposition~\ref{prop:CanonicalChoice}]
For each rule, we prove that if the guard of the rule is satisfied then the choice of $z$ satisfies the required constraints.
\begin{enumerate}
\item The guard of (Candidate) is $x_{n-1} \not \sqsubseteq p$. By choosing $z=p$, one has that $x_{n-1} \not \sqsubseteq z$ and $p \sqsubseteq z$ are trivially satisfied;
\item The guard of (Decide) is $f(x_{k-1}) \not \sqsubseteq y_k$ thus, by $f \dashv g$, $x_{k-1} \not \sqsubseteq g(y_k)$. By choosing $z= g(y_k)$, one has that $x_{k-1} \not \sqsubseteq z$ and $g(y_k) \sqsubseteq z$;
\end{enumerate}
The proofs for the choices in (Conflict) are more subtle. First of all, observe that if $k=1$, then $i\sqsubseteq y_1$ otherwise the algorithm would have returned false. Moreover, for $k\geq 2$, we have that
$i \sqsubseteq x_{k-1} \sqsubseteq y_k$: the first inequality holds by \eqref{eq:Ix1} and the second by \eqref{positivefinal} and \eqref{negativefinal}. In summary,
\begin{equation}\label{eq:yigeqI}
\text{for all } j \geq 1\text{, } i \sqsubseteq y_j\text{.}
\end{equation}
We can then proceed as follows.
\begin{enumerate}\setcounter{enumi}{2}
\item The guard of (Conflict) is $f(x_{k-1}) \sqsubseteq y_k$. By choosing $z=y_k$, one has that $z \sqsubseteq y_k$ trivially holds. For $(f \sqcup i)(x_{k-1} \sqcap z) \sqsubseteq z$ observe that
\begin{align*}
(f \sqcup i)(x_{k-1} \sqcap z) & = f(x_{k-1} \sqcap z) \sqcup i & \tag{def.}\\
&  \sqsubseteq f(x_{k-1} ) \sqcup i  & \tag{monotonicity}\\
&  \sqsubseteq z \sqcup i & \tag{guard}\\
& = z & \tag{\eqref{eq:yigeqI}}
\end{align*}

\item The guard of (Conflict) is $f(x_{k-1}) \sqsubseteq y_k$. By choosing $z=(f \sqcup i)(x_{k-1})$, one has that $(f \sqcup i)(x_{k-1} \sqcap z) \sqsubseteq (f \sqcup i)(x_{k-1}) =z$ holds by monotonicity.
For $z \sqsubseteq y_k$, by using the guard and \eqref{eq:yigeqI}, we have that $z= (f \sqcup i)(x_{k-1}) = f(x_{k-1}) \sqcup i \sqsubseteq y_k$.
\qed
\end{enumerate}
\end{proof}

\begin{proof}[Proof of Proposition~\ref{prop:progres}]
	Let us consider the following partial order on positive chains: given two sequences $\vec{x}= x_0, \dots x_{n-1}$ and $\vec{x}' = x'_0, \dots, x'_{n'-1}$, we say $\vec{x} \preceq \vec{x}'$ if
	\begin{equation*}
		n \le n' \land x_j \sqsupseteq x'_j \text{ for each } j\in[0,n - 1]
	\end{equation*}
	We extend the order to states by letting $(\vec{x} \| \vec{y} )_{n, k} \preceq (\vec{x}' \| \vec{y}' )_{n', k'}$ with $\vec{x} \prec \vec{x}'$ or $\vec{x} = \vec{x}'$ and $k \ge k'$.

	We prove the first statement by showing that applying a rule strictly increases the state in that partial order.
	As before, we use non-primed variables such as $\vec{x}$ for values before the application of a rule, and primed variables such as $\vec{x}'$ after.

	For (Unfold), we have that $n < n' = n + 1$ and $x_j = x'_j$ for each $j\in[0,n - 1]$.

	For (Candidate), we have $\vec{x}' = \vec{x}$ and $k' = n - 1 < n = k$.

	For (Decide), we have $\vec{x}' = \vec{x}$ and $k' = k - 1 < k$.

	For (Conflict), $n = n'$, and
	\[
	x'_j = \begin{cases*}
		x_j & if $j > k$ \\
		x_j \sqcap z & if $j \le k$
	\end{cases*}
	\]
	So for $j\in[k+1,n - 1]$ we have $x_j = x'_j$, and for $j\in[0,k]$ we have $x_j \sqsupseteq x_j \sqcap z = x'_j$. So $\vec{x} \preceq \vec{x}'$. Assume by contradiction that $x'_k = x_k$. Since $x'_k = x_k \sqcap z$, this is equivalent to $x_k \sqsubseteq z$. The choice of $z$ in (Conflict) satisfies $z \sqsubseteq y_k$, that would imply $x_k \sqsubseteq z \sqsubseteq y_k$. However, this is a contradiction, since by \eqref{eq:positivenegative} we know $x_k \not \sqsubseteq y_k$. Hence $x_k \sqsupset x'_k$, meaning $\vec{x} \prec \vec{x}'$.
\qed
\end{proof}

The following lemma will be useful later on to prove negative termination.

\begin{lemma}\label{lem:decreasing}
If $s_0 \ttr{}  ( \vec{x} \| \vec{y} )_{n,k} \ttr{}  ( \vec{x}' \| \vec{y}' )_{n',k'}$, then $n'\geq n$ and, for all $j\in [0,n-1]$, $x_j \sqsupseteq x_j'$.
\end{lemma}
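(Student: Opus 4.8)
\textbf{Proof plan for Lemma~\ref{lem:decreasing}.}
The statement is a ``monotonicity along runs'' property: as {\APDR} executes, the length $n$ of the positive chain never decreases and each individual entry $x_j$ (for $j$ in the range that already exists) can only go down in the order. The plan is to prove this by induction on the length of the derivation $( \vec{x} \| \vec{y} )_{n,k} \ttr{} ( \vec{x}' \| \vec{y}' )_{n',k'}$, reducing everything to a single-step claim: if $( \vec{x} \| \vec{y} )_{n,k} \tr{} ( \vec{x}' \| \vec{y}' )_{n',k'}$ then $n'\ge n$ and $x_j \sqsupseteq x_j'$ for all $j\in[0,n-1]$. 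The base case (zero steps) is trivial, and the inductive step composes the single-step claim with the induction hypothesis using transitivity of $\sqsubseteq$ and of $\le$ on $\Nat$; note that the hypothesis $s_0 \ttr{}( \vec{x}\|\vec{y})_{n,k}$ is only needed so that the intermediate and final states are genuinely reachable (hence the invariants of Fig.~\ref{fig:invariants}, in particular \eqref{eq:positivenegative}, are available), which is exactly what the proof of Proposition~\ref{prop:progres} already exploited.

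For the single-step claim I would go through the four rules, essentially re-reading the relevant part of the proof of Proposition~\ref{prop:progres}. For (Unfold): $n' = n+1 \ge n$ and $x_j' = x_j$ for all $j\in[0,n-1]$, so both conclusions hold with equality. For (Candidate) and (Decide): $n' = n$ and $\vec{x}' = \vec{x}$, so again trivially $x_j' = x_j$. For (Conflict): $n' = n$, and $x_j' = x_j$ for $j\in[k+1,n-1]$ while $x_j' = x_j \sqcap z \sqsubseteq x_j$ for $j\in[0,k]$; in all cases $x_j \sqsupseteq x_j'$, which is all that is claimed here (unlike in Proposition~\ref{prop:progres}, we do not even need the strictness argument at index $k$).

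There is essentially no obstacle: the lemma is a direct bookkeeping consequence of the shape of the four transition rules, and it is strictly weaker than what is already established in the proof of Proposition~\ref{prop:progres} (which shows $\vec{x} \preceq \vec{x}'$ along a single step, and $\preceq$ by definition entails exactly $n\le n'$ and $x_j \sqsupseteq x_j'$ for $j$ in range). The only point requiring a word of care is that the order $\preceq$ on positive chains, and hence the conclusion $x_j \sqsupseteq x_j'$, is stated with the \emph{reversed} lattice order — longer, pointwise-smaller chains are ``bigger'' — so one must be careful not to flip an inequality when chaining single steps. Thus the cleanest writeup is: invoke the single-step inequalities above, then close by transitivity along the derivation.
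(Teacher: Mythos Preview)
Your proposal is correct and follows essentially the same approach as the paper: the paper's proof of this lemma is simply ``Follows from proof of Proposition~\ref{prop:progres},'' and you have spelled out exactly that derivation, observing that the single-step analysis of the four rules already yields $\vec{x}\preceq\vec{x}'$, which by definition gives $n\le n'$ and $x_j\sqsupseteq x_j'$ for $j\in[0,n-1]$, and then closing by induction on the length of the run.
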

\begin{proof}
Follows from proof of Proposition~\ref{prop:progres}.
	\qed
\end{proof}


\subsection{Proofs of Section~\ref{sec:heuristics}: Heuristics}\label{app:heuristics}
In this appendix, after illustrating the proof of  Proposition~\ref{prop:negativesequencefinalchain}, we show  two results Propositions \eqref{prop:heuristic-final-chain} and \eqref{prop:heuristic-initial-chain}, about simple intial and final heuristics, briefly mentioned in Section~\ref{sec:heuristics}. These two results will not be used by other proofs.

\begin{proof}[Proof of Proposition~\ref{prop:negativesequencefinalchain}]
	As for the invariants, we prove this equality by induction showing
	\begin{itemize}
		\item[(a)] it holds for $s_0$ and
		\item[(b)] if it holds for $s$ and $s \tr{ } s'$, then it holds for $s'$.
	\end{itemize}

	In the initialization and after (Unfold), since $k = n$ there is no $j \in [k, n-1]$.

	For (Conflict), since the property holds on $\vec{y}$ it also holds on $\vec{y}' = \mathsf{tail}(\vec{y})$.

	For (Candidate), $\vec{y}' = p$ and $k' = n - 1$, so the thesis holds because $y_{n-1} = p = g^{n-1-(n-1)} p$.

	For (Decide), $\vec{y}' = g(y_k), \vec{y}$ and $k' = k - 1$. For all $j \in [k' + 1, n-1]$ the thesis holds because $y'_j = y_j$. For $j = k'$, we have $y_{k'} = g(y_k) = g(g^{n - 1 -k}( p)) = g^{n - 1 - k'}(p)$.
	\qed
\end{proof}


\begin{proposition}\label{prop:heuristic-final-chain}
Assume $p \neq \top$ and let $h\colon \states \to L$ be any simple final heuristic.
If $s_0\ttr{}\tr{U} ( \vec{x} \| \vec{y} )_{n,k}$
then the second inequality in \eqref{eq:positiveinitialfinal} holds as an equality, namely for all $j\in[1,n-1]$,
$x_j=(g\sqcap p)^{n-1-j}(\top)$.
\end{proposition}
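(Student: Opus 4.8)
The plan is to proceed by induction on the length of the execution, exactly in the style of the invariant proofs in Appendix~\ref{app:invariants}, but now tracking the stronger statement that $x_j = (g\sqcap p)^{n-1-j}(\top)$ for $j\in[1,n-1]$, restricted to states reached immediately after an application of (Unfold) (or, more precisely, states $s$ with $s_0\ttr{}\tr{U}s$, which by inspection of the algorithm means states of the form $(\vec x\|\varepsilon)_{n,n}$ together with those reached from them before the next (Unfold)). First I would record the baseline: after the first (Unfold) we have $n=3$ and the positive chain $x_0=\bot$, $x_1=\top$, $x_2=\top$; the claim for $j\in\{1,2\}$ reads $x_1=(g\sqcap p)^{1}(\top)=g(\top)\sqcap p$ and $x_2=(g\sqcap p)^0(\top)=\top$. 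The second is immediate; for the first we need $g(\top)\sqcap p = \top$, which forces... — so actually the base case must be taken right after (Candidate) and (Conflict) have run, not right after (Unfold). The cleaner formulation is: the equality holds in every state reachable after the cycle (Unfold);(Candidate);$(\ldots)$;(Conflict) that returns $\vec y$ to $\varepsilon$, i.e. whenever $\vec y=\varepsilon$ and $n\geq 3$; this is what "after $\tr U$" should be read as together with the observation that between two (Unfold)'s the value of $x_j$ for $j\le n-1$ only changes via (Conflict).

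The inductive step has two sub-cases to analyze. The harmless one is (Unfold) itself: it appends $x_{n}=\top$, leaving all earlier $x_j$ unchanged and incrementing $n$; one checks the indices shift correctly, using $(g\sqcap p)^{n'-1-j}(\top)$ with $n'=n+1$, and the new top entry matches $(g\sqcap p)^0(\top)=\top$. The substantive sub-case is (Conflict) with the simple final choice $z=y_k$. Here $x_j$ is replaced by $x_j\sqcap z = x_j\sqcap y_k$ for $j\le k$ and left unchanged for $j>k$. By Proposition~\ref{prop:negativesequencefinalchain} (applicable since $h$ is simple), $y_k = g^{n-1-k}(p)$. The key identity to establish is that for each $j\le k$,
\[
(g\sqcap p)^{n-1-j}(\top)\;\sqcap\; g^{n-1-k}(p)\;=\;(g\sqcap p)^{n-1-j}(\top),
\]
i.e. $(g\sqcap p)^{n-1-j}(\top)\sqsubseteq g^{n-1-k}(p)$. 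This follows from the expansion $(g\sqcap p)^{m}(\top)=\bigsqcap_{i<m} g^i(p)$ (which is~\eqref{eq:chainsadjoint} applied with $i:=\top$, i.e. using $g\dashv$ nothing — rather it is the analogue $(g\sqcap p)^m(\top)=\bigsqcap_{i\le m-1}g^i(p)$ established in the proof of~\eqref{positivefinal}), since $n-1-k\le n-1-j-1$ exactly when $j\le k-1$, and the case $j=k$ is immediate as the meet already contains the $i=n-1-k$ term — wait, one must double-check the off-by-one here: with $m=n-1-j$ the meet ranges over $i\in[0,m-1]=[0,n-2-j]$, and $n-1-k\le n-2-j$ iff $j\le k-1$, while for $j=k$ we need $(g\sqcap p)^{n-1-k}(\top)\sqsubseteq g^{n-1-k}(p)$, true since $(g\sqcap p)\sqsubseteq g$ pointwise and monotonicity (or directly since the $i=n-1-k$ term is absent but $(g\sqcap p)^{n-1-k}\sqsubseteq g\circ(g\sqcap p)^{n-2-k}\sqsubseteq\ldots$). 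I would handle the $j=k$ endpoint separately and cleanly.

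The remaining obligations are: before the (Conflict) that closes the cycle, $\vec y$ has length one (so $k=n-1$) — this is because after (Unfold) we have $\vec y=\varepsilon$, then (Candidate) produces $\vec y$ of length one with $k=n-1$, and with the simple final heuristic no (Decide) is possible at $k=n-1$... — actually this needs a short argument that the cycle is exactly (Unfold);(Candidate);(Conflict). In general after (Candidate) the algorithm could do (Decide); but one shows (using $x_{n-2}\sqsubseteq p$ from~\eqref{eq:xP} and the invariant already proved for the previous post-(Unfold) state, namely $x_{n-2}=(g\sqcap p)^0(\top)=\top$ only if... hmm, more carefully: after (Candidate), $x_{k-1}=x_{n-2}$, and we need $f(x_{n-2})\sqsubseteq y_{n-1}=p$, equivalently $x_{n-2}\sqsubseteq g(p)$; this holds because, by the inductive hypothesis applied to the state right after the previous (Unfold) cycle, $x_{n-2}=(g\sqcap p)^{1}(\top)=g(\top)\sqcap p\sqsubseteq g(p)$... wait $g(\top)\sqsupseteq g(p)$, so this does not immediately work). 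The main obstacle, then, is precisely this: showing that with the simple final heuristic the algorithm deterministically performs the cycle (Unfold);(Candidate);(Conflict) and never takes (Decide) in between — equivalently, that right after (Candidate) the guard $f(x_{k-1})\sqsubseteq y_k$ holds. I expect this is where the hypothesis $p\neq\top$ is used, and I would prove it by combining~\eqref{eq:xP} with the invariant~\eqref{positivefinal} (giving $x_{n-2}\sqsubseteq g^{0}(p)=p$) — indeed $x_{n-2}\sqsubseteq p$ gives $f(x_{n-2})\sqsubseteq f(p)$, and we want $\sqsubseteq p$; alternatively use $x_{n-2}=(g\sqcap p)^1(\top)$ from the previous cycle's equality, giving $x_{n-2}\sqsubseteq g(\top)\sqcap p$, hence... the honest route is that after the prior cycle $x_{n-2}$ equals $(g\sqcap p)^{1}(\top)$ only when $n$ was the old length; after (Unfold) the index $n-2$ points at the old $x_{n-1}=\top$, so actually $x_{n-2}=\top$ after (Unfold)+(Candidate)?? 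No: (Unfold) appends $\top$ as the new $x_{n'-1}$, so $x_{n'-2}$ is the old $x_{n-1}$ which was $\top$ only in the very first round. In later rounds the old $x_{n-1}$ had been overwritten by the previous (Conflict) to $x_{n-1}\sqcap y_{n-1}=x_{n-1}\sqcap p\sqsubseteq p$. So in all rounds after the first, $x_{n'-2}\sqsubseteq p$, and then $f(x_{n'-2})\sqsubseteq f(p)$ — still need $f(p)\sqsubseteq p$, which is false in general. I therefore expect the correct argument routes through~\eqref{eq:positiveinitialfinal}: $x_{n'-2}\sqsubseteq(g\sqcap p)^{1}(\top)=g(\top)\sqcap p$, and by $f\dashv g$, $x_{n'-2}\sqsubseteq g(\top)\sqcap p\sqsubseteq g(\top)$ is automatic, giving nothing; whereas $x_{n'-2}\sqsubseteq g(p)$ would need $x_{n'-2}\sqsubseteq(g\sqcap p)(g^0 p)=g(p)\sqcap p$, i.e. we'd want the equality one step further. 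This tension is the crux; I would resolve it by strengthening the induction hypothesis to also assert $x_{n-2}=(g\sqcap p)^{1}(\top)$ in a form that lets the (Candidate) guard be checked, and the role of $p\neq\top$ is to guarantee the first (Candidate) actually fires (so that the chain of equalities gets started at all). Once that determinism is pinned down, the algebraic identity above closes the induction.
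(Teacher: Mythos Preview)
Your attempt hinges on the claim that, with the simple final heuristic, the algorithm cycles through (Unfold);(Candidate);(Conflict) and never takes (Decide). This is false: the second run in Example~\ref{ex:simple-ts} is exactly the simple final heuristic, and it performs $\tr{D}_{S_4}$ at index $(3,2)$ and again at $(4,2)$. Once (Decide) is allowed, the chain $\vec y$ can have length greater than one, and a single (Conflict) at level $k$ updates $x_0,\dots,x_k$ by meeting with $y_k=g^{n-1-k}(p)$ while leaving $x_{k+1},\dots,x_{n-1}$ untouched. Consequently, between two (Unfold)s the positive chain is \emph{not} uniformly of the form $x_j=(g\sqcap p)^{n-1-j}(\top)$: the low-index entries have already been refined one step further than the high-index ones. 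Your ``key identity'' computes the meet against the wrong starting values of the $x_j$.

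You correctly diagnose at the end that the induction hypothesis must be strengthened, but the strengthening you need is substantial and you do not supply it. The paper's proof introduces a two-mode invariant: at every reachable state, either all of $x_1,\dots,x_{n-1}$ satisfy $x_j=(g\sqcap p)^{n-1-j}(\top)$ (mode (c)), or there is a split at the current $k$ with $x_j=(g\sqcap p)^{n-j}(\top)$ for $j\in[1,k-1]$ and $x_j=(g\sqcap p)^{n-1-j}(\top)$ for $j\in[k,n-1]$ (mode (b)); moreover $k=1$ forces mode (c). One then checks, rule by rule, that (Candidate) and (Decide) can only fire in mode (c), while (Conflict) with $k>1$ can only fire in mode (b), each time using the guard of the rule together with the adjunction $f\dashv g$ to derive a contradiction with the other mode. (Unfold) takes mode (b) with the old $n$ to mode (c) with the new $n'=n+1$, which is precisely the proposition's conclusion. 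The bookkeeping of which mode holds when is the real content of the proof, and it is what your sketch is missing.
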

\begin{proof}
	To prove this property, first we prove by induction that the following invariants hold:
	\begin{itemize}
		\item[(a)] either $x_1 = (g \sqcap p)^{n-1} \top$ or $x_1 = (g \sqcap p)^{n-2} \top$,
		\item[(b)] if $x_1 = (g \sqcap p)^{n-1} \top$, for all $j \in [1, k-1]$, $x_j = (g \sqcap p)^{n-j} \top$ and for all $j \in [k, n-1]$, $x_j = (g \sqcap p)^{n-1-j} \top$,
		\item[(c)] if $x_1 = (g \sqcap p)^{n-2} \top$, for all $j \in [1, n-1]$, $x_j = (g \sqcap p)^{n-1-j} \top$,
		\item[(d)] if $k = 1$ then $x_1 = (g \sqcap p)^{n-2} \top$.
	\end{itemize}
	Note that by (a) exactly one of of the consequences of (b) and (c) hold, and when $k = 1$ (d) prescribes it must be (c). In the rest of the proof, we say that (b) or (c) hold meaning that $x_1$ is $(g \sqcap p)^{n-1} \top$ or $(g \sqcap p)^{n-2} \top$ respectively, so that the respective consequence holds, too.

	At initialization, $\vec{x} = \bot, \top$ and $n = 2$, so (a) $x_1 = (g \sqcap p)^{n-2} \top$, and (c) holds.

	After (Unfold), $\vec{x}' =  \vec{x}, \top$ and $n' = n+1$. Since we applied (Unfold), it must be the case that $\vec{y} = \varepsilon$, so $k = n$, and $x_{n-1} \sqsubseteq p$. By (a) either $x_{n-1} = (g \sqcap p) \top = p$, or $x_{n-1} = (g \sqcap p)^{0} \top = \top$. But since $x_{n-1} \sqsubseteq p$, it can't be the latter. Thus (b) holds before (Unfold).
	After the rule, (a) holds because $x'_1 = x_1 = (g \sqcap p)^{n-1} \top = (g \sqcap p)^{n'-2} \top$; (c) holds too because for all $j \in [1, n'-1] = [1, n]$, $x'_j = x_j = (g \sqcap p)^{n-j} \top = (g \sqcap p)^{n'-1-j} \top$ (where we used that (b) holds before the rule) and for $j = n'$, $x'_j = \top = (g \sqcap p)^0 \top$. (d) holds because $k' = n' > 1$.

	For (Candidate), since we applied it, it must be the case that $x_{n-1} \not\sqsubseteq p$. If (b) held before the rule, it would mean that $x_{n-1} = (g \sqcap p) \top = p \sqsubseteq p$, so (c) holds.
	After the rule, (a) and (c) still hold because $\vec{x}' = \vec{x}$ and $n' = n$. (d) holds because (c) holds.

	For (Decide), since we applied it, it must be the case that $f (x_{k-1}) \not\sqsubseteq y_k = g^{n-1-k} (p)$. This, by $f \dashv g$, is equivalent to $x_{k-1} \not\sqsubseteq g^{n-k} (p)$. If (b) held before the rule, it would mean that $x_{k-1} = (g \sqcap p)^{n-k+1} \top \sqsubseteq g^{n-k} (p)$, so (c) holds.
	After the rule, (a) and (c) still hold because $\vec{x}' = \vec{x}$ and $n' = n$. (d) holds because (c) holds.

	For (Conflict), let us distinguish two cases. 
	
	If $k = 1$, (c) holds before the rule because of (d). The choice in the rule is $z = y_1 = g^{n-2} (p)$, so after (Conflict) $\vec{x}' = \vec{x} \sqcap_1 g^{n-2} (p)$ and $k' = k+1$. (a) holds because $x'_1 = x_1 \sqcap g^{n-2} (p) = (g \sqcap p)^{n-2} \top \sqcap g^{n-2} (p) = (g \sqcap p)^{n-1} \top$. (b) holds because for $j \in [1, k' -1] = [1, 1]$, $x'_1 = (g \sqcap p)^{n-1} \top$ and for $j \in [k', n-1]$, $x'_j = x_j = (g \sqcap p)^{n-1-j} \top$. (d) holds because $k' = 2 > 1$.

	If $k > 1$, since we applied (Conflict), it must be the case that $f (x_{k-1}) \sqsubseteq y_k = g^{n-1-k} (p)$. This, by $f \dashv g$, is equivalent to $x_{k-1} \sqsubseteq g^{n-k} (p)$. If (c) held before the rule, it would mean that $x_{k-1} = (g \sqcap p)^{n-k} \top \not\sqsubseteq g^{n-k} (p)$, so (b) holds.
	The choice in the rule is $z = y_k = g^{n-1-k} (p)$. After (Conflict), (a) holds because $x'_1 = x_1 \sqcap g^{n-1-k} (p) = (g \sqcap p)^{n-1} \top \sqcap g^{n-1-k} (p) = (g \sqcap p)^{n-1} \top$. (b) holds because for $j \in [1, k'-2] = [1, k-1]$, $x'_j = x_j \sqcap g^{n-1-k} (p) = (g \sqcap p)^{n-j} \top \sqcap g^{n-1-k} (p) = (g \sqcap p)^{n-j} \top$; for $j = k'-1 = k$, $x'_j = x_j \sqcap g^{n-1-k} (p) = (g \sqcap p)^{n-1-k} \top \sqcap g^{n-1-k} (p) = (g \sqcap p)^{n-k} \top$; for $j \in [k', n-1]$, $x'_j = x_j = (g \sqcap p)^{n-1-j} \top$. (d) holds because $k' = k + 1 > 1$.

	This concludes the proof of the invariants. To prove the original statement, it is enough to observe that right after (Unfold) (c) holds.
	\qed
\end{proof}

In Proposition~\ref{prop:heuristic-initial-chain}, note that the condition holds for all $j$ but $n-1$, that is the last element of the sequence can be different. Actually, this is $\top$ when added with (Unfold), and is then lowered to $(f\sqcup i)^{n-1}\bot$ right before the next (Unfold).

\begin{proposition}\label{prop:heuristic-initial-chain}
Assume $p \neq \top$ and let $h\colon \states \to L$ be any simple initial heuristic.  For all $( \vec{x} \| \vec{y} )_{n,k} \in \states^h$, the first inequality in \eqref{eq:positiveinitialfinal} holds as an equality for all $j\in[0,n-2]$, namely
$x_j=(f\sqcup i)^{j}(\bot)$.
\end{proposition}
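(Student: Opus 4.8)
The plan is to prove Proposition~\ref{prop:heuristic-initial-chain} by the same inductive pattern used for the invariants in Appendix~\ref{app:invariants}: show that the property $Q$ --- namely "$x_j = (f\sqcup i)^j(\bot)$ for all $j\in[0,n-2]$" --- holds at the initial state $s_0$, and is preserved by every rule. The key point to keep in mind is that the claim is only about $x_j$ for $j\le n-2$; the last element $x_{n-1}$ may be $\top$ (right after (Unfold)) or anything produced by (Conflict), so we must never use $x_{n-1}$ in the inductive step except when (Unfold) "promotes" it to position $n-2$, at which moment we must already know it equals $(f\sqcup i)^{n-2}(\bot)$.

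First I would handle the base case: in $s_0 = (\bot,\top\,\|\,\varepsilon)_{2,2}$ we have $n=2$, so the only index is $j=0$, and indeed $x_0=\bot=(f\sqcup i)^0(\bot)$. Then, for the inductive step, I split on which of the four rules fires. (Candidate) and (Decide) leave $\vec x$ and $n$ untouched, so $Q$ is trivially preserved. For (Conflict), with a simple \emph{initial} heuristic the chosen element is $z=(f\sqcup i)(x_{k-1})$ by Proposition~\ref{prop:CanonicalChoice}.4, and the new chain is $\vec x\sqcap_k z$, i.e.\ $x'_j = x_j\sqcap z$ for $j\le k$ and $x'_j=x_j$ for $j>k$; since $n'=n$, I must check $x'_j=(f\sqcup i)^j(\bot)$ for $j\le n-2$. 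For $j>k$ this is immediate from the hypothesis. For $j\le k$, using the hypothesis $x_{k-1}=(f\sqcup i)^{k-1}(\bot)$ I get $z=(f\sqcup i)^k(\bot)$, and then for each $j\le k$ I have $x_j=(f\sqcup i)^j(\bot)\sqsubseteq (f\sqcup i)^k(\bot)=z$ (monotonicity of $(f\sqcup i)$ applied to the initial chain, which is increasing), so $x'_j = x_j\sqcap z = x_j = (f\sqcup i)^j(\bot)$. Note I also need $j\le k$ to actually satisfy $j\le n-2$: this is guaranteed because (Conflict) fires only when $\vec y\neq\varepsilon$, hence $k\le n-1$, but more precisely by Lemma~\ref{lemma:i=1} and invariant~\eqref{eq:invi} one checks $k\le n-1$, so the modified indices $j\le k$ all lie in $[0,n-2]$ except possibly $j=k=n-1$; when $k=n-1$, Proposition~\ref{prop:negativesequencefinalchain}-style bookkeeping or a direct check shows the touched index $n-1$ is outside the range we claim, so no obstruction arises. (I would double-check this boundary carefully in the write-up.)

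The one genuinely delicate case is (Unfold). Before the rule $\vec y=\varepsilon$, so $k=n$, and the guard $x_{n-1}\sqsubseteq p$ holds; after it $\vec x'=\vec x,\top$ and $n'=n+1$, so I must show $x'_j=(f\sqcup i)^j(\bot)$ for all $j\in[0,n'-2]=[0,n-1]$. For $j\le n-2$ this follows from the hypothesis since $x'_j=x_j$. The problem is $j=n-1$: I need $x_{n-1}=(f\sqcup i)^{n-1}(\bot)$, but $x_{n-1}$ was last set to $\top$ by the \emph{previous} (Unfold) and only modified by (Conflict) steps in between. So the real content is a lemma: between two consecutive (Unfold)s, with a simple initial heuristic, the algorithm performs exactly the sequence (Candidate) then enough (Decide)/(Conflict) steps to bring $k$ back up to $n$, and the net effect on $x_{n-1}$ is to lower it from $\top$ to $(f\sqcup i)^{n-1}(\bot)$. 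I expect this to be the main obstacle, and I would prove it by an auxiliary induction on the steps between Unfolds, tracking the invariant that right before (Unfold) fires we have $x_{n-1}=(f\sqcup i)^{n-1}(\bot)$ --- this is essentially the argument sketched in the remark after the proposition statement ("this is $\top$ when added with (Unfold), and is then lowered to $(f\sqcup i)^{n-1}\bot$ right before the next (Unfold)"). Concretely: after (Unfold)+(Candidate) we are in state $(\vec x,\top\,\|\,p)_{n+1,n}$ with the first $n$ entries of $\vec x$ already equal to the initial chain; the algorithm then alternates (Decide) and (Conflict), and one shows by the (Conflict) analysis above that each (Conflict) at index $k$ replaces $x_{k}$ (and below) by its meet with $(f\sqcup i)^{k}(\bot)$, which for $k<n$ leaves them unchanged, and for the final (Conflict) at the top index it sets $x_{n}$ (in the new numbering, i.e.\ the element at position $n$ in the length-$(n+1)$ chain, which is the old $x_{n-1}=\top$) to $(f\sqcup i)(x_{n-1})=(f\sqcup i)^n(\bot)$. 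Re-indexing back (so that after the next (Unfold) this element sits at position $n-1$ of a length-$n$ prefix... ) gives exactly what is needed. Once this auxiliary claim is in place, the (Unfold) case of the main induction closes, and the proof is complete. I would also use Proposition~\ref{prop:negativesequencefinalchain} freely to know the shape of $\vec y$ (it equals the final chain of $g$), which pins down when (Decide) vs.\ (Conflict) fires and hence that the "enough steps to restore $k=n$" story is the only possible trajectory.
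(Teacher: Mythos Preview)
Your handling of (Candidate), (Decide), and (Conflict) is correct: with the simple initial heuristic, the element chosen in (Conflict) at index $k$ is $z=(f\sqcup i)(x_{k-1})$, and since $k\le n-1$ whenever (Conflict) applies, the induction hypothesis gives $x_{k-1}=(f\sqcup i)^{k-1}(\bot)$, so $z=(f\sqcup i)^{k}(\bot)$ and the meet with $z$ leaves every $x_j$ with $j\le n-2$ unchanged.

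The gap is in your (Unfold) case, and it is not just a matter of bookkeeping. Your description of the dynamics between two (Unfold)s is wrong: the algorithm does \emph{not} ``alternate (Decide) and (Conflict)''. With the simple initial heuristic, as long as $f((f\sqcup i)^{n-2}(\bot))\sqsubseteq p$, the sequence between consecutive (Unfold)s is precisely one (Candidate) followed by one (Conflict) at $k=n-1$ --- no (Decide) at all. When that inequality first fails, the algorithm instead performs (Candidate) followed by a run of (Decide)s down to $k=1$ and returns false. The paper proves the proposition by making exactly this trajectory explicit: it introduces the least $m$ with $f^m(i)\not\sqsubseteq p$ and shows, by induction on the number of (Unfold)s, that the state right after each (Unfold) is $(\bot,(f\sqcup i)\bot,\ldots,(f\sqcup i)^{n-2}\bot,\top\|\varepsilon)_{n,n}$.

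That said, your invariant-based strategy can be rescued without tracing the full trajectory, and more simply than you sketch. Strengthen your invariant $Q$ to
\[
Q'\colon\quad \forall j\in[0,n-2],\ x_j=(f\sqcup i)^j(\bot)\ \text{ and }\ x_{n-1}\in\{\top,\ (f\sqcup i)^{n-1}(\bot)\}.
\]
The second clause is preserved because $x_{n-1}$ is set to $\top$ by (Unfold) and can only be modified by (Conflict) at $k=n-1$, which meets it with $z=(f\sqcup i)^{n-1}(\bot)$; either way it stays in the two-element set. Now the (Unfold) case closes immediately: the guard $x_{n-1}\sqsubseteq p$ together with the hypothesis $p\neq\top$ rules out $x_{n-1}=\top$, forcing $x_{n-1}=(f\sqcup i)^{n-1}(\bot)$, which is exactly what you need for the new index $n'-2=n-1$. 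This is shorter than both your auxiliary-lemma plan and the paper's trajectory computation; it also makes transparent where the hypothesis $p\neq\top$ is used.
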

\begin{proof}
	We prove this property by showing the exact sequence of steps the algorithm performs with this heuristics. To do so, assume $m$ is the smallest integer such that $f^m (i) \not \sqsubseteq p$. Note that, if $\mu (f \sqcup i) \sqsubseteq p$, there is no such $m$; if that is the case we say $m = +\infty$. Also note that, for all $n < m$ it holds $f^n (i) \sqsubseteq p$, hence also $f (f \sqcup i)^n \bot = f \bigsqcup\limits_{j < n} f^j (i) = \bigsqcup\limits_{j < n} f^{j+1} (i) \sqsubseteq p$, and for $n \sqsubseteq m$ we have $(f \sqcup i)^n \bot = \bigsqcup\limits_{j < n} f^j (i) \sqsubseteq p$.

	Intuitively, while $n < m + 2$ the algorithm performs a cycle of (Unfold), then (Candidate), then (Conflict) and then (Unfold) again. When it reaches $n = m + 2$, it enters a sequence of (Decide) that eventually lead to return false. Of course, if $m = + \infty$, this sequence of (Decide) never happens.

	To prove this formally, we prove by induction that after initialization and every (Unfold) applied in this sequence, for all $j \in [0,n-2]$, $x_j=(f\sqcup i)^{j}\bot$. We do so by showing this holds for initialization, and that if we assume this to be true after initialization or (Unfold), (a) if $n < m + 2$, the algorithm does exactly (Candidate), then (Conflict), then (Unfold) and the invariant holds again, and (b) if $n = m + 2$, the algorithm does (Candidate) then (Decide) until $k = 1$, then returns false. In doing so, we also show that the invariant holds after every rule, that is exactly the thesis.

	For initialization, as $k = n = 2$ and $x_0 = \bot$ the property holds.

	For (a), suppose the algorithm just did (Unfold) or initialization. Then, by the invariant, for all $j \in [0,n-2]$, $x_j=(f\sqcup i)^{j}\bot$, and both after initialization and (Unfold), $x_{n-1} = \top$.
	Applying the algorithm, the sequence of states is then
	\begin{align*}
		&( \bot, (f\sqcup i)\bot, \dots, (f\sqcup i)^{n-2}\bot, \top \| \varepsilon )_{n,n} \\
		\trz{\mathit{Ca}}{h} &( \bot, (f\sqcup i)\bot, \dots, (f\sqcup i)^{n-2}\bot, \top \| p )_{n,n-1} \\
		\trz{\mathit{Co}}{h} &( \bot, (f\sqcup i)\bot, \dots, (f\sqcup i)^{n-2}\bot, (f\sqcup i)^{n-1}\bot \| \varepsilon )_{n,n} \\
		\trz{U}{h} &( \bot, (f\sqcup i)\bot, \dots, (f\sqcup i)^{n-2}\bot, (f\sqcup i)^{n-1}\bot, \top \| \varepsilon )_{n+1,n+1}
	\end{align*}
	where the choice for (Candidate) is $p$ and the choice for (Conflict) is $(f\sqcup i) x_{n-2} = (f\sqcup i)^{n-1} \bot$.
	The condition to apply (Candidate) is $x_{n-1} = \top \not\sqsubseteq p$, which follows from $p \neq \top$.
	The condition to apply (Conflict) is $f (x_{n-2}) = f (f\sqcup i)^{n-2}\bot \sqsubseteq p$, which follows from $n - 2 < m$.
	The condition to apply (Unfold) is $x_{n-1} = (f\sqcup i)^{n-1}\bot \not\sqsubseteq p$, which follows from $n - 1 \le m$.
	The invariant clearly holds for all three states traversed.

	For (b), suppose again the algorithm just did (Unfold) or initialization, so $\vec{x} = \langle \bot, (f\sqcup i)\bot, \dots, (f\sqcup i)^{n-2}\bot, \top \rangle$. Recalling that $n = m + 2$, the sequence of states is
	\begin{align*}
		&( \bot, (f\sqcup i)\bot, \dots, (f\sqcup i)^{m}\bot, \top \| \varepsilon )_{m+2,m+2} \\
		\trz{\mathit{Ca}}{h} &( \bot, (f\sqcup i)\bot, \dots, (f\sqcup i)^{m}\bot, \top \| p )_{m+2,m+1} \\
		\trz{D}{h} &( \bot, (f\sqcup i)\bot, \dots, (f\sqcup i)^{m}\bot, \top \| g (p), p )_{m+2,m} \\
		\trz{D}{h} &\dots \\
		\trz{D}{h} &( \bot, (f\sqcup i)\bot, \dots, (f\sqcup i)^{m}\bot, \top \| g^{m} (p), \dots, g (p), p )_{m+2,1}
	\end{align*}
	where the choice for (Candidate) is $p$ and the choice for (Decide) is $g (y_{k})$.
	The condition to apply (Candidate) is again $x_{n-1} = \top \not\sqsubseteq p$.
	The condition to apply (Decide) for $k$ is $f (x_{k-1}) \not \sqsubseteq y_k$, that is $f (f\sqcup i)^{k-1}\bot \not \sqsubseteq g^{m+1-k} (p)$. This holds because $f (f\sqcup i)^{k-1}\bot \sqsubseteq f^{k-1} (i)$ and, by $f \dashv g$, $f^{k-1} (i) \not\subseteq g^{m+1-k} (p)$ if and only if $f^{m+1-k} f^{k-1} (i) = f^{m} (i) \not\sqsubseteq p$.
	Lastly, when $k = 1$, we have $i \not\sqsubseteq g^m (p)$ again by $f \dashv g$, so the algorithm returns false.
	The invariant clearly holds for all the $m$ states traversed.
	\qed
\end{proof}


\subsection{Proofs of Section~\ref{sec:termination}: Negative Termination}\label{app:neg}

The following lemma is the key to prove termination.

\begin{lemma}\label{lemma:differentz}
If $s_0 \ttr{} s \trz{D}{z}  \ttr{ } s' \trz{D}{z'} $ and $s$ and $s'$ carry the same index $(n,k)$, then $z' \neq z$.
Similarly, if $s_0 \ttr{} s \trz{\mathit{Ca}}{z}  \ttr{ } s' \trz{\mathit{Ca}}{z'} $ and $s$ and $s'$ carry the same index $(n,k)$, then $z' \neq z$.
\end{lemma}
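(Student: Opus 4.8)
The plan is to prove the claim for (Decide); the case of (Candidate) is entirely analogous, replacing the index $k$ by $n$ throughout, since (Candidate) applied at an index $(n,n)$ creates — exactly as (Decide) applied at $(n,k)$ does — a fresh element $z$ at position $n-1$ of the negative sequence, subject to $x_{n-1}\not\sqsubseteq z$, and moves to index $(n,n-1)$. So let $s\trz{D}{z}s_1$ with $s_1=(\vec{x}\,\|\,z,\vec{y})_{n,k-1}$: the state $s_1$ has index $(n,k-1)$, its negative sequence carries the fresh element $z$ at position $k-1$, and $x_{k-1}\not\sqsubseteq z$. Since $s'$ has index $(n,k)\neq(n,k-1)$, we have $s_1\neq s'$, so there is at least one step in $s_1\ttr{}s'$. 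Moreover, by Lemma~\ref{lem:decreasing} the first index component never decreases and equals $n$ at both $s$ and $s'$, hence it is constantly $n$ on the segment $s_1\ttr{}s'$, so (Unfold) is never used there.

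The next step is to locate a (Conflict) at index $k-1$ inside that segment. Every rule changes the second index component by $\pm1$: $+1$ for (Conflict), $-1$ for (Decide) and for (Candidate) (and (Unfold), excluded here, alters $n$). As $s_1$ has index $(n,k-1)$ with $k-1<k$ and $s'$ has index $(n,k)$, the second component returns to the value $k$; let $t^*$ be the first state strictly after $s_1$ (and weakly before $s'$) where this happens, and $t$ its predecessor. Since the first arrival at level $k$ cannot be reached by a $-1$ step, the transition $t\to t^*$ is an application of (Conflict) at index $k-1$; write $z_{co}$ for the element chosen there, so $z_{co}\sqsubseteq y^{t}_{k-1}$ and $x^{t^*}_{k-1}=x^{t}_{k-1}\sqcap z_{co}$. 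Throughout the segment from $s_1$ to $t$ the index stays $\le k-1$, and the only moves that could change position $k-1$ of the negative sequence — namely (Conflict) at index $k-1$, (Decide) at index $k$, (Candidate) at index $n$ — cannot occur there: the first would already raise the index to $k$ (impossible strictly before $t^*$), and the other two require the index to be $\ge k$. Hence position $k-1$ of $\vec{y}$ is frozen at its value $z$ from $s_1$ up to $t$, so $y^{t}_{k-1}=z$, therefore $z_{co}\sqsubseteq z$ and $x^{t^*}_{k-1}=x^{t}_{k-1}\sqcap z_{co}\sqsubseteq z$.

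To conclude, apply Lemma~\ref{lem:decreasing} to $t^*\ttr{}s'$ to get $x^{s'}_{k-1}\sqsubseteq x^{t^*}_{k-1}\sqsubseteq z$, whereas $s'\trz{D}{z'}$ forces, by the side condition of (Decide) at index $(n,k)$, that $x^{s'}_{k-1}\not\sqsubseteq z'$. Were $z=z'$, we would obtain $x^{s'}_{k-1}\sqsubseteq z'$, a contradiction; hence $z\neq z'$. The only delicate part is the stack-style bookkeeping of the middle paragraph — that the second index component moves in unit steps, that $t^*$ is the first re-visit of level $k$, and that position $k-1$ of $\vec{y}$ is not disturbed until the (Conflict) that leaves level $k-1$ — after which everything is immediate. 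Conceptually, the point is that the (Conflict) which "cancels" the (Decide) performed at $s$ necessarily pushes $x_{k-1}$ below the recorded witness $z$, and since $x_{k-1}$ can only shrink afterwards, a later (Decide) at the same index is no longer permitted to choose $z$.
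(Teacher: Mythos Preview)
Your proof is correct and follows essentially the same route as the paper's: locate the (Conflict) that removes the element $z$ sitting at position $k-1$, deduce $x_{k-1}\sqsubseteq z$ immediately afterwards, and propagate this via Lemma~\ref{lem:decreasing} to $s'$ to contradict $x^{s'}_{k-1}\not\sqsubseteq z'$. Your stack-style bookkeeping (first return to level $k$, freezing of position $k-1$) is in fact more explicit than the paper's own argument.
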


\begin{proof}
Since $s$ and $s'$ carry the same index $(n, k)$ and the algorithm only increases $n$, then $n$ is never increased in the steps between $s$ and $s'$. Thus (Unfold) is never executed. On the other hand, $k$ will be increased by (Conflict) and decreased in (Candidate) and (Decide).

We prove the proposition for (Decide), the case of (Candidate) is analogous. Let us fix $s=( \vec{x} \| \vec{y} )_{n,k}$. The state immediately after $s$ would be $( \vec{x} \| z, \vec{y} )_{n,k-1}$.
Observe that before arriving to the state $( \vec{x}' \| \vec{y}' )_{n,k}$,  the $z$ inserted by the (Decide) right after $s$ should be removed by some (Conflict), as that's the only rule that can remove elements from $\vec{y}$.
The state before such (Conflict) will be of the form $(\vec{x}'' \| z, \vec{y} )_{n,k+1}$ for some positive chain $\vec{x}''$. Now let $z''$ be the element chosen by such (Conflict). It holds that $z'' \sqsubseteq y_{k+1} = z$. The state after the (Conflict) will be $(\vec{x}''\sqcap_k z'' \| \vec{y} )_{n,k}$.  In this state and, by Corollary \ref{lem:decreasing} in any of the following states, the $(k-1)$-th element of the positive chain is below $z$. In particular, for $s'=( \vec{x}' \| \vec{y}' )_{n,k}$, we have that $x_{k-1}' \sqsubseteq z$. Since $( \vec{x}' \| \vec{y}' )_{n,k} \trz{D}{z'} $, $x_{k-1}'\not \sqsubseteq z'$. Thus $z' \neq z$.
\qed
\end{proof}

\begin{proof}[Proof of Theorem~\ref{thm:negativetermination}]
Let us fix an $n$. Since the possible choices of $z\in h(\mathit{CaD}^h_{n,k})$ are finitely many, by Lemma \ref{lemma:differentz}  we can apply (Candidate) or (Decide) only a finite amount of times for every $k$.
Since by invariant~\eqref{eq:invi}, $1 \le k \le n$, we only have a finite amount of different values of $k$, so (Candidate) and (Decide) occur only finitely many times with the same $n$.

Since both (Candidate) and (Decide) decrease $k$, (Conflict) increase $k$ and $1 \le k \le n$, then also (Conflict) occurs only finitely many times with the same $n$.
Therefore in any infinite computation of \texttt{A-PDR}$_h$ (Unfold), which is the only rule that increase $n$, should occur infinitely many times.

But when $\mu (f \sqcup i) \not\sqsubseteq p$, by \eqref{eq:Kleenefpthm}, there is some $j\in \Nat$ such that $(f \sqcup i)^{j} \bot \not\sqsubseteq p$. Since (Unfold) can be applied only when $(f \sqcup i)^{n-1} \bot \sqsubseteq x_{n-1} \sqsubseteq p$, then it can applied only a finite amount of time.
\qed
\end{proof}

\begin{proof}[Proof of Corollary~\ref{cor:negativetermiantion}]
By Proposition \ref{prop:negativesequencefinalchain}, $h$ maps any reachable state $s$ such that $s\tr{D}$ into $g^{n-1-k}(p)$ and any reachable state $s$ such that $s\tr{\mathit{Ca}}$ into $p$. Thus $h(\mathit{CaD}^h_{n,k})$ has cardinality 2. By Theorem \ref{thm:negativetermination}, if  $\mu(f\sqcup i) \not \sqsubseteq p$, then {\APDR}$_h$ terminates.
\qed
\end{proof}


\section{Proofs of Section~\ref{sec:downset}}
In this appendix, we illustrate the proofs for the various results in Section~\ref{sec:downset}. After illustrating the proofs of Proposition~\ref{prop:prob_down_up} in Appendix \ref{app:AAA}, we will show the proofs for the results in Sections \ref{ssec:ADPDR} and \ref{ssec:LTPDRvsADPDR} in Appendixes \ref{app:ADPDR} and \ref{app:LTPDRvsADPDR}.


\begin{figure}[t]
\centering
\underline{{\APDR} $(\bot^\downarrow, b^\downarrow,b_r^{\downarrow}, p^{\downarrow})$}
\begin{codeNT}
<INITIALISATION>
  $( \vec{X} \| \vec{Y} )_{n,k}$ := $(\emptyset,L\|\varepsilon)_{2,2}$
<ITERATION>
  case $( \vec{X} \| \vec{Y} )_{n,k}$ of
	   $\vec{Y}=\varepsilon$ And $X_{n-1} \subseteq p^{\downarrow}$     :                    
			$( \vec{X} \| \vec{Y} )_{n,k}$ := $( \vec{X}, L \| \varepsilon )_{n+1,n+1}$
	   $\vec{Y}=\varepsilon$ And $X_{n-1} \not \subseteq p^{\downarrow}$    :                     
			choose $Z\in L^{\downarrow}$ st  $X_{n-1} \not \subseteq Z$ And  $p^\downarrow \subseteq Z$;
			$( \vec{X} \| \vec{Y} )_{n,k}$ := $( \vec{X} \| Z )_{n,n-1}$
	   $\vec{Y} \neq \varepsilon$ And $b^\downarrow(X_{k-1}) \not \subseteq Y_k$ :                        
			choose $Z \in L^\downarrow$ st $X_{k-1} \not \subseteq Z$ And $b_r^{\downarrow}(Y_k) \subseteq Z$;
			$(\vec{X} \| \vec{Y} )_{n,k}$ := $(\vec{X} \| Z , \vec{Y} )_{n,k-1}$
	   $\vec{Y} \neq \varepsilon$ And $b^\downarrow(X_{k-1}) \subseteq Y_k$ :                        
			choose $Z \in L^\downarrow$ st $Z \subseteq Y_k$ And $(b^\downarrow \cup \bot^\downarrow)(X_{k-1} \cap Z) \subseteq Z$;
			$(\vec{X} \| \vec{Y} )_{n,k}$ := $(\vec{X} \cap_k Z \| \mathsf{tail}(\vec{Y}) )_{n,k+1}$
  endcase
<TERMINATION>
	if $\exists j\in [0,n-2]\,.\, X_{j+1} \subseteq X_j$ then return true		 
	if $\bot^\downarrow \not \subseteq Y_1$ then return false							
\end{codeNT}
\caption{{\APDR} algorithm checking $\mu (b^\downarrow \cup \bot^\downarrow) \subseteq p^\downarrow$.}
\label{fig:APDRonDownset}
\end{figure}

\begin{figure}[t]
\centering
\underline{{\APDR'} $(\bot^\downarrow, b^\downarrow,b_r^{\downarrow}, p^{\downarrow})$}
\begin{codeNT}
<INITIALISATION>
  $( \vec{X} \| \vec{Y} )_{n,k}$ := $(\emptyset,\top^\downarrow\|\varepsilon)_{2,2}$
<ITERATION>
  case $( \vec{X} \| \vec{Y} )_{n,k}$ of								
	   $\vec{Y}=\varepsilon$ And $x_{n-1} \sqsubseteq p$     :                    
			$( \vec{X} \| \vec{Y} )_{n,k}$ := $( \vec{X}, \top^\downarrow \| \varepsilon )_{n+1,n+1}$
	   $\vec{Y}=\varepsilon$ And $x_{n-1} \not \sqsubseteq p$    :                     
			choose $Z\in L^{\downarrow}$ st  $x_{n-1} \not \in Z$ And  $p\in Z$;
			$( \vec{X} \| \vec{Y} )_{n,k}$ := $( \vec{X} \| Z )_{n,n-1}$
	   $\vec{Y} \neq \varepsilon$ And $b^\downarrow(x^\downarrow_{k-1}) \not \subseteq Y_k$ :                        
			choose $Z \in L^\downarrow$ st $x_{k-1} \not \in Z$ And $b_r^{\downarrow}(Y_k) \subseteq Z$;
			$(\vec{X} \| \vec{Y} )_{n,k}$ := $(\vec{X} \| Z , \vec{Y} )_{n,k-1}$
	   $\vec{Y} \neq \varepsilon$ And $b^\downarrow(x^\downarrow_{k-1}) \subseteq Y_k$ :                        
			choose $z \in L$ st $z \in Y_k$ And $(b^\downarrow \cup \bot^\downarrow)(x^\downarrow_{k-1} \cap z^\downarrow) \subseteq z^\downarrow$;
			$(\vec{X} \| \vec{Y} )_{n,k}$ := $(\vec{X} \cap_k z^\downarrow \| \mathsf{tail}(\vec{Y}) )_{n,k+1}$
  endcase
<TERMINATION>
	if $\exists j\in [0,n-2]\,.\, x^\downarrow_{j+1} \subseteq x^\downarrow_j$ then return true		 
	if $Y_1=\emptyset$ then return false							
\end{codeNT}
\caption{{\APDR} algorithm checking $\mu (b^\downarrow \cup \bot^\downarrow) \subseteq p^\downarrow$, where we restrict the elements of the positive chain to be principals. Note that: in (Unfold) the condition $x_{n-1} \sqsubseteq p$ is equivalent to $x^\downarrow_{n-1} \subseteq p^{\downarrow}$; and similarly for their negation in (Candidate), where moreover the condition $x^\downarrow_{n-1} \not \subseteq Z$ is equivalent to $x\not\in Z$; same for (Decide); finally in (Conflict) the condition $z \in L$ is equivalent to $z^\downarrow \in L^\downarrow$ and the condition $z \in Y_k$ is equivalent to $z^\downarrow \subseteq Y_k$.}
\label{fig:APDRonDownset2}
\end{figure}

\subsection{Proof of Proposition~\ref{prop:prob_down_up}}\label{app:AAA}

\begin{proof}[Proof of Proposition~\ref{prop:prob_down_up}]
A simple inductive argument using \eqref{eq:EMlaw} confirms that
\begin{equation}\label{eq:hdownn}
(b^n x)^\downarrow = (b^\downarrow)^n x^\downarrow
\end{equation}
for all $x\in L$. The following sequence of logical equivalences
  \begin{align*}
    \mu b \sqsubseteq p &\Leftrightarrow \forall n \in \mathbb{N}.~b^n \bot \sqsubseteq p \tag{by \eqref{eq:Kleenefpthm}} \\
    &\Leftrightarrow \forall n \in \mathbb{N}.~(b^n \bot)^\downarrow \subseteq p^\downarrow \tag{mon. of $(-)^\downarrow$, $\bigsqcup$ and $\bigsqcup (-)^{\downarrow}=id$} \\
    &\Leftrightarrow \bigcup_{n \in \mathbb{N}} (b^n \bot)^\downarrow  \subseteq p^\downarrow \tag{def. of $\bigcup$} \\
    &\Leftrightarrow \bigcup_{n \in \mathbb{N}} (b^\downarrow)^n \bot^\downarrow \subseteq p^\downarrow \tag{by \eqref{eq:hdownn}} \\
    &\Leftrightarrow \mu (b^\downarrow \cup \bot^\downarrow) \subseteq p^\downarrow \tag{by \eqref{eq:kleeneadjoint}}.
  \end{align*}
concludes the proof of the main statement.
\qed
\end{proof}

\subsection{Proofs of Section \ref{ssec:ADPDR}: From {\APDR} to {\ADPDR}}\label{app:ADPDR}

\begin{proof}[Proof of Theorem~\ref{th:ADPDR}]
The algorithm {\ADPDR} differs from the instance of {\APDR} on $(\bot^\downarrow, b^\downarrow,b_r^{\downarrow}, p^{\downarrow})$ for two main reasons: first we restrict the elements of the positive chain to be principals, second we optimize the initial state of the algorithm.

To prove that the properties of {\APDR} can be extended to {\ADPDR}, we first show the instance of {\APDR} on $(\bot^\downarrow, b^\downarrow,b_r^{\downarrow}, p^{\downarrow})$ for the lower set domain $(L^\downarrow,\subseteq)$ in Fig.~\ref{fig:APDRonDownset}. Clearly, as instance of {\APDR}, the algorithm in Fig.~\ref{fig:APDRonDownset} inherits all its properties  about soundness, progression and negative termination.
 Note that, in (Candidate), the condition $p^\downarrow \subseteq Z$ is equivalent to $p\in Z$, because $Z\in L^\downarrow$.
Moreover, we note that the negative termination condition $\bot^\downarrow \not \subseteq Y_1$ amounts to $Y_1 = \emptyset$.

To restrict the elements of the positive chain to be principals we need to add the condition $\exists z \in L.~Z=z^\downarrow$ in rule (Conflict), which is thus modified as follows w.r.t. Fig.~\ref{fig:APDRonDownset}: 

\begin{codeNT}
$\vec{Y} \neq \varepsilon$ And $b^\downarrow(x^\downarrow_{k-1}) \subseteq Y_k$ :                        
	choose $Z \in L^\downarrow$ st $Z \subseteq Y_k$ And $(b^\downarrow \cup \bot^\downarrow)(X_{k-1} \cap Z) \subseteq Z$
		 								   And $\exists z\in L. Z=z^\downarrow$;
	$(\vec{X} \| \vec{Y} )_{n,k}$ := $(\vec{X} \cap_k Z \| \mathsf{tail}(\vec{Y}) )_{n,k+1}$
\end{codeNT}

Let us call {\APDR'} such algorithm. All the executions of {\APDR'} are also possible in {\APDR}, thus all the invariants of {\APDR} holds for {\APDR'}. 
The invariants suffice to prove Theorem~\ref{th:soundness}, Proposition~\ref{prop:progres} and Theorem~\ref{thm:negativetermination}.

The elements of the positive chain are introduced by (Unfold) and modified by (Conflict).
By choosing $Z=z^\downarrow$ in (Conflict) it follows that all the elements of the positive chain are also principals, with the only exception of $X_0=\emptyset$. Indeed, every new element of the positive chain has that form (in (Unfold) we take $\top_{L^\downarrow}=\top^\downarrow$) and the meet of two principals $x_j^\downarrow$ and $z^\downarrow$ in (Conflict) is itself the principal $(x_j\sqcap z)^\downarrow$ generated by the meet of $x_j$ and $z$.

Regarding the canonical choices of Proposition~\ref{prop:CanonicalChoice}, 

\medskip

\noindent \begin{minipage}{.5\linewidth}
\begin{enumerate}\setcounter{enumi}{0}
\item in (Candidate) $Z=p^\downarrow$;
\item in (Decide) $Z= b_r^\downarrow(Y_k)$;
\end{enumerate}
\end{minipage}
\begin{minipage}{.5\linewidth}
\begin{enumerate}\setcounter{enumi}{2}
\item in (Conflict) $Z = Y_k$;
\item in (Conflict) $Z = (b^\downarrow \cup \bot^\downarrow)(X_{k-1})$.
\end{enumerate}
\end{minipage}

\medskip

\noindent
we have that choice 3 is not necessarily possible, because we cannot assume that $Y_k=y_k^\downarrow$ for some $y_k\in L$, but 1, 2 and 4 remain valid choices: in fact,
1 and 2 deal with the negative sequence for which we have no restriction; for 4, if $X_{k-1} = x_{k-1}^\downarrow$ for some $x_{k-1}\in L$, then
$Z=(b^\downarrow \cup \bot^\downarrow)(X_{k-1})
=(b^\downarrow \cup \bot^\downarrow)(x_{k-1}^\downarrow)
=b^\downarrow(x_{k-1}^\downarrow) \cup \bot^\downarrow
\stackrel{\eqref{eq:EMlaw}}{=}b(x_{k-1})^\downarrow \cup \bot^\downarrow
=b(x_{k-1})^\downarrow$
is still a principal.
Since choices 1, 2 and 4 guarantees the existence of a simple heuristic (i.e., the initial one) we also have that Corollary~\ref{cor:negativetermiantion} about negative termination is valid for {\APDR'}.

We now take advantage of the  shape of the positive chain to present the code of {\APDR'} as reported in Fig.~\ref{fig:APDRonDownset2}: we exploit the fact that $\vec{X}=\emptyset, x_1^\downarrow,\dots,x_{n-1}^\downarrow$ to make some simple code transformations described in the caption. Now we note that $b^\downarrow(x^\downarrow_{k-1})=b(x_{k-1})^\downarrow$, so that the conditions $b^\downarrow(x^\downarrow_{k-1}) \not \subseteq Y_k$ in (Decide) and $b^\downarrow(x^\downarrow_{k-1}) \subseteq Y_k$ in (Conflict) are equivalent to $b(x_{k-1})\not \in Y_k$ and to $b(x_{k-1}) \in Y_k$, respectively.
Moreover,  we have
$(b^\downarrow \cup \bot^\downarrow)(x^\downarrow_{k-1} \cap z^\downarrow)
= (b^\downarrow \cup \bot^\downarrow)((x_{k-1} \sqcap z)^\downarrow)
= (b \sqcup \bot)(x_{k-1} \sqcap z)^\downarrow
= (b (x_{k-1} \sqcap z)\sqcup \bot)^\downarrow
= b (x_{k-1} \sqcap z)^\downarrow$,
so that $(b^\downarrow \cup \bot^\downarrow)(x^\downarrow_{k-1} \cap z^\downarrow) \subseteq z^\downarrow$
is equivalent to 
$b (x_{k-1} \sqcap z)^\downarrow\subseteq z^\downarrow$
and therefore also to
$b(x_{k-1} \sqcap z) \sqsubseteq z$.

Then, the only difference between {\APDR'} and {\ADPDR} is the initialization condition:  {\ADPDR} starts from the state reached after the following three steps of  {\APDR'}:
\[
(\emptyset,\top^\downarrow\|\varepsilon)_{2,2}
\tr{\mathit{Ca}} (\emptyset,\top^\downarrow\|p^\downarrow)_{2,1}
\tr{\mathit{Co}} (\emptyset,\bot^\downarrow\|\varepsilon)_{2,2}
\tr{U} (\emptyset,\bot^\downarrow,\top^\downarrow\|\varepsilon)_{3,3}\textrm{.}
\]

Since the three steps apply the canonical choices for (Candidate) and (Conflict), we conclude that {\ADPDR} is sound and that it enjoys progression and negative termination.
\qed
\end{proof}

\subsection{Proofs of Section~\ref{ssec:LTPDRvsADPDR}: LT-PDR vs {\ADPDR}}\label{app:LTPDRvsADPDR}

\begin{figure}[t]
\centering
\underline{LT-PDR $(b,p)$}
\begin{codeNT}
<INITIALISATION>
  $( \vec{x} \| \vec{c} )_{n,k}$ := $(\bot,b(\bot)\|\varepsilon)_{2,2}$
<ITERATION>
  case $( \vec{x} \| \vec{c} )_{n,k}$ of
	   $\vec{c}=\varepsilon$ And $x_{n-1} \sqsubseteq p$     :                    
			$( \vec{x} \| \vec{c} )_{n,k}$ := $( \vec{x}, \top \| \varepsilon )_{n+1,n+1}$
	   $\vec{c}=\varepsilon$ And $x_{n-1} \not \sqsubseteq p$    :                     
			choose $z\in L$ st  $z \sqsubseteq x_{n-1}$ And  $z \not\sqsubseteq p$;
			$( \vec{x} \| \vec{c} )_{n,k}$ := $( \vec{x} \| z )_{n,n-1}$
	   $\vec{c} \neq \varepsilon$ And $c_k \sqsubseteq b(x_{k-1})$ :                        
			choose $z \in L$ st $z \sqsubseteq x_{k-1}$ And $c_k \sqsubseteq b(z)$;
			$(\vec{x} \| \vec{c} )_{n,k}$ := $(\vec{x} \| z , \vec{c} )_{n,k-1}$
	   $\vec{c} \neq \varepsilon$ And $c_k \not \sqsubseteq b(x_{k-1})$ :                        
			choose $z \in L$ st $c_k \not \sqsubseteq z$ And $b(x_{k-1} \sqcap z) \sqsubseteq z$;
			$(\vec{x} \| \vec{c} )_{n,k}$ := $(\vec{x} \sqcap_k z \| \mathsf{tail}(\vec{c}) )_{n,k+1}$
  endcase
<TERMINATION>
	if $\exists j\in [0,n-2]\,.\, x_{j+1} \sqsubseteq x_j$ then return true 		
	if $k = 1$ then return false							
\end{codeNT}
\caption{LT-PDR algorithm checking $\mu b \sqsubseteq p$, adapted from \cite{KoriCAV22}.}
\label{fig:LTPDR2}
\end{figure}

\begin{proof}[Proof of Theorem~\ref{th:LT-PDR-instance-ADPDR}]
For the scope of this proof, we call $\states = \{( \vec{x} \| \vec{Y} )_{n,k} \}$ the set of states of {\ADPDR}, and $\states' = \{ (\vec{x}'\| \vec{c}' )_{n',k'} \}$ that of LT-PDR, where we use non-primed variables for the former and primed for the latter. 
%
The function $\mathcal{R} \colon \states' \rightarrow \states$ is defined for all states $s' = ( \vec{x}' \| \vec{c}' )_{n',k'} \in \states'$, as \[\mathcal{R}(s') = ( \vec{x} \| \vec{Y})_{n,k} \in \states\] 
where
\[ n = n' + 1, \qquad k = k' + 1, \qquad \vec{x} = \emptyset, \vec{x}'\qquad \text{ and } \qquad \vec{Y} = \negation{x'}\text{.}\]
We prove that $\mathcal{R}$  is a simulation \cite{Mil89}, that is for all $s',t' \in \states'$, if $s' \rightarrow t'$ then $\mathcal{R}(s') \rightarrow \mathcal{R}(t')$. The pseudo-code of LT-PDR is reported in Fig.~\ref{fig:LTPDR2}.

\medskip

First, we remark that, for any $z, x \in L$, $x \not\in \lnot (\{ z \}^{\uparrow})$ if and only if $z \sqsubseteq x$. Moreover, note that indices $\vec{x}'$ in $s'$ and $\vec{x}$ in $\mathcal{R}(s')$ are off-setted by one: $x_j = x'_{j-1}$. However, as $n = n' + 1$ we have, for instance, $x_{n-1} = x'_{n'-1}$ (and analogously for $k'$).

Consider now a state $s' = ( \vec{x}' \| \vec{c}' )_{n',k'} \in \states'$ and $\mathcal{R}(s') = ( \vec{x} \| \vec{Y} )_{n,k} \in \states$. Suppose that LT-PDR can perform a transition from $s'$. This must be determined by one of the four rules of LT-PDR, possibly performing some choice of $z \in L$. We show that {\ADPDR} is able to simulate this transition, using the same rule and performing a corresponding choice. We do so by cases on the rule used by LT-PDR.
	
	\begin{itemize}
		\item If LT-PDR applies rule (Unfold), we have $\vec{c}' = \varepsilon$ and $x'_{n'-1} \sqsubseteq p$ so that
		$$s'=( \vec{x}' \| \varepsilon )_{n',n'} \tr{U} ( \vec{x}',\top \| \varepsilon)_{n'+1,n'+1}=t'.$$
		Then, for $( \vec{x} \| \vec{Y} )_{n,k} = \mathcal{R}(s') = ( \emptyset,\vec{x}' \| \varepsilon )_{n'+1,n'+1}$ it holds $\vec{Y} = \varepsilon$ and $x_{n-1} = x'_{n'-1} \sqsubseteq p$, so {\ADPDR} can apply (Unfold) too and
		$$\mathcal{R}(s')=( \emptyset,\vec{x}' \| \varepsilon )_{n'+1,n'+1} \tr{U} ( \emptyset,\vec{x}',\top \| \varepsilon)_{n'+2,n'+2}=\mathcal{R}(t').$$
		
		\item If LT-PDR applies rule (Candidate), we have $\vec{c}' = \varepsilon$ and $x'_{n'-1} \not \sqsubseteq p$, so that $z\in L$ is chosen such that $z\sqsubseteq x'_{n'-1}$ and $z\not\sqsubseteq p$ to derive
		$$s'=( \vec{x}' \| \varepsilon )_{n',n'} \tr{\mathit{Ca}}_z ( \vec{x}' \| z)_{n',n'-1}=t'.$$
		Then, for $( \vec{x} \| \vec{Y} )_{n,k} = \mathcal{R}(s') = ( \emptyset,\vec{x}' \| \varepsilon )_{n'+1,n'+1}$ it holds $\vec{Y} = \varepsilon$ and $x_{n-1}=x'_{n'-1} \not \sqsubseteq p$, so that {\ADPDR} can apply (Candidate) too. Moreover we can choose $Z \defeq \lnot (\{ z \}^{\uparrow})$, because $z \sqsubseteq x'_{n'-1}$ implies $x_{n-1}=x'_{n'-1} \not \in Z$, and $z \not \sqsubseteq p$ implies $p \in Z$. By doing so we derive
		$$\mathcal{R}(s')=( \emptyset,\vec{x}' \| \varepsilon )_{n'+1,n'+1} \tr{\mathit{Ca}}_z ( \emptyset,\vec{x}' \| Z)_{n'+1,n'}=\mathcal{R}(t').$$
		
		\item If LT-PDR applies rule (Decide), we have $\vec{c}'\neq \varepsilon$ and $c'_{k'} \sqsubseteq b(x'_{k'-1})$,  so that $z\in L$ is chosen such that $z \sqsubseteq x'_{k'-1}$ and $c'_{k'} \sqsubseteq b(z)$ to derive
		$$s'=( \vec{x}' \| \vec{c}' )_{n',k'} \tr{D}_z ( \vec{x}' \| z,\vec{c}')_{n',k'-1}=t'.$$
		Then, for $( \vec{x} \| \vec{Y} )_{n,k} = \mathcal{R}(s') = ( \emptyset,\vec{x}' \| \negation{c'} )_{n'+1,k'+1}$ it holds $\negation{c'} \neq \varepsilon$ and $b(x_{k-1})=b(x'_{k'-1})\not\in \negation{c'}_k=\lnot (\{ c'_{k-1} \}^{\uparrow})$, because the latter is implied by $c'_{k'} \sqsubseteq b(x'_{k'-1})$. Thus {\ADPDR} can apply (Decide) too. Moreover we can choose $Z \defeq \lnot (\{ z \}^{\uparrow})$. In fact $z \sqsubseteq x'_{k'-1}$ implies $x_{k-1}=x'_{n'-1} \not \in \lnot (\{ z \}^{\uparrow})=Z$. Moreover $b^{\downarrow}_r(Y_{k}) \subseteq Z$ if and only if $x \not \in Z$ implies $b(x) \not \in Y_{k}$. Because $Z = \lnot (\{ z \}^{\uparrow})$ and $\negation{c'}_{k} = \lnot (\{ c'_{k'} \}^{\uparrow})$, this implication is equivalent to requiring that $z \sqsubseteq x$ implies $c'_{k'} \sqsubseteq b(x)$, which is true as $c'_{k'} \sqsubseteq b(z)$ and $b$ is monotone. With this choice of $Z$ we derive
		$$\mathcal{R}(s')=( \emptyset,\vec{x}' \| \vec{Y} )_{n'+1,k'+1} \tr{D}_Z ( \emptyset,\vec{x}' \| Z,\vec{Y})_{n'+1,k'}=\mathcal{R}(t').$$
		
		\item If LT-PDR applies rule (Conflict), we have $\vec{c}'\neq \varepsilon$ and $c'_{k'} \not\sqsubseteq b(x'_{k'-1})$,  so that $z\in L$ is chosen such that $c'_{k'} \not \sqsubseteq z$ and $b(x'_{k'-1}\sqcap z)\sqsubseteq z$ to derive
		$$s'=( \vec{x}' \| \vec{c}' )_{n',k'} \tr{\mathit{Co}}_z ( \vec{x}'\sqcap_{k'} z \| \mathsf{tail}(\vec{c}'))_{n',k'+1}=t'.$$
		Then, for $( \vec{x} \| \vec{Y} )_{n,k} = \mathcal{R}(s') = ( \emptyset,\vec{x}' \| \negation{c'} )_{n'+1,k'+1}$ it holds $\negation{c'} \neq \varepsilon$ and $b(x_{k-1})=b(x'_{k'-1})\in Y_k=\lnot (\{ c'_{k-1} \}^{\uparrow})$, because the latter is implied by $c'_{k'} \not\sqsubseteq b(x'_{k'-1})$. Thus {\ADPDR} can apply (Conflict) too. Moreover we can choose the same $z$ as LT-PDR. In fact $c'_{k'} \not \sqsubseteq z$ implies $z \in \lnot (\{ c'_{k'} \}^{\uparrow})=\lnot (\{ c'_{k-1} \}^{\uparrow})=\negation{c'}_{k}$, and $b(x_{k-1} \sqcap z)=b(x'_{k'-1}\sqcap z) \sqsubseteq z$ is also an hypothesis in LT-PDR. With this choice of $z$ we derive
		$$\mathcal{R}(s')=( \emptyset,\vec{x}' \| \vec{Y} )_{n'+1,k'+1} \tr{\mathit{Co}}_z ( \emptyset,\vec{x}' \sqcap_{k} z\| \mathsf{tail}(\vec{Y}))_{n'+1,k'+2}=\mathcal{R}(t').$$
	\end{itemize}
	This concludes the proof that $\mathcal{R}$ is a simulation. However, to complete the proof that {\ADPDR} simulates LT-PDR, we have to take care of initial and final states.

	We observe that the initial states $s'_0 = (\bot,b(\bot)\|\varepsilon)_{2,2} \in \states'$ of LT-PDR and $s_0 = (\emptyset,\bot,\top\|\varepsilon)_{3,3} \in \states$ of {\ADPDR} are not related: $\mathcal{R}(s'_0) \neq s_0$. To solve this issue, first observe that if $b(\bot) \not \sqsubseteq p$, both algorithms return false in a few steps. If instead $b(\bot) \sqsubseteq p$, {\ADPDR} can reach the state $\mathcal{R}(s'_0)$ from $s_0$ in just two steps:
	\[
		s_0 = (\emptyset,\bot,\top\|\varepsilon)_{3,3}
		\tr{\mathit{Ca}} ( \emptyset, \bot, \top \| p^{\downarrow} )_{3,2}
		\tr{\mathit{Co}}_{ b(\bot) } ( \emptyset, \bot, b(\bot) \| \varepsilon )_{3,3} = \mathcal{R}(s'_0)
	\]

	Lastly, we discuss the termination conditions of the two algorithms. 
	
	When LT-PDR terminates from a state $s'$ returning true, $s'$ satisfies $x'_{j+1} \sqsubseteq x'_j$ for some $j$, so also {\ADPDR} terminates from $\mathcal{R}(s')$ returning true. 
	
	Instead, when LT-PDR terminates from $s'$ returning false, the condition $k' = 1$ does not imply that {\ADPDR} terminates from $s = \mathcal{R}(s')$. However, the latter algorithm can always apply (Decide) from $s$: as proved in \cite{KoriCAV22}, the termination condition $k' = 1$ of LT-PDR implies $c'_1 \sqsubseteq b(\bot)$, which in turn means $b(x_1) = b(x'_0) = b(\bot) \not \in Y_2 = \lnot (\{ c'_1 \}^{\uparrow})$. Moreover, we can choose $Z \defeq \emptyset$: for all $x \in L$ we have $c'_1 \sqsubseteq b(\bot) \sqsubseteq b(x)$, so $b(x) \not \in Y_2$. After this step, we get that $Y_1 = \emptyset$, so {\ADPDR} returns false, too.
	\qed
\end{proof}

\begin{proposition}\label{prop:negLTPDR}
Let $\vec{c}$ be a Kleene sequence. Then $\negation{c}$ is a negative sequence. 
\end{proposition}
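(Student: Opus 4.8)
The plan is to unfold the definitions of $\negation{c}$ and of ``negative sequence'' (Definition~\ref{def:neg_seq}) and verify the two defining conditions \eqref{eq:Pepsilon} and \eqref{eq:negativeG} directly, working in the lattice $(L^\downarrow,\subseteq)$ with the adjoint pair $b^\downarrow \dashv b^\downarrow_r$ in the role of $f \dashv g$ and with $p^\downarrow$ in the role of $p$. Recall that $\negation{c}$ is the sequence whose $j$-th component is $\lnot(\{c_j\}^\uparrow)$, the complement of the principal upper set generated by $c_j$. The key elementary fact I would isolate first is that for any $z,x\in L$, one has $x \notin \lnot(\{z\}^\uparrow)$ if{}f $z \sqsubseteq x$; equivalently $x \in \lnot(\{z\}^\uparrow)$ if{}f $z \not\sqsubseteq x$. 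This is the bridge between the order-theoretic statements about the Kleene sequence $\vec c$ and the membership statements about $\negation{c}$.

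First I would check \eqref{eq:Pepsilon}: if $\negation{c}\neq\varepsilon$ we must show $p^\downarrow \subseteq (\negation{c})_{n-1} = \lnot(\{c_{n-1}\}^\uparrow)$. By condition (C1) of the Kleene sequence, $c_{n-1}\not\sqsubseteq p$. Now $p^\downarrow \subseteq \lnot(\{c_{n-1}\}^\uparrow)$ means: for every $x\sqsubseteq p$ we have $c_{n-1}\not\sqsubseteq x$; and indeed if $c_{n-1}\sqsubseteq x\sqsubseteq p$ then $c_{n-1}\sqsubseteq p$, contradicting (C1). So \eqref{eq:Pepsilon} holds. Next, \eqref{eq:negativeG}: for $j\in[k,n-2]$ we need $b^\downarrow_r\bigl((\negation{c})_{j+1}\bigr) \subseteq (\negation{c})_j$, i.e. $b^\downarrow_r\bigl(\lnot(\{c_{j+1}\}^\uparrow)\bigr)\subseteq \lnot(\{c_j\}^\uparrow)$. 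Unfolding $b^\downarrow_r(X)=\{x\mid b(x)\in X\}$, the left side is $\{x \mid b(x)\notin \{c_{j+1}\}^\uparrow\} = \{x \mid c_{j+1}\not\sqsubseteq b(x)\}$. So the inclusion says: $c_{j+1}\not\sqsubseteq b(x)$ implies $c_j \not\sqsubseteq x$. Contrapositively, $c_j \sqsubseteq x$ implies $c_{j+1}\sqsubseteq b(x)$, which follows by monotonicity of $b$ from condition (C2), $c_{j+1}\sqsubseteq b(c_j)$, since $c_j\sqsubseteq x$ gives $b(c_j)\sqsubseteq b(x)$ and hence $c_{j+1}\sqsubseteq b(c_j)\sqsubseteq b(x)$. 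That establishes \eqref{eq:negativeG}, and together with \eqref{eq:Pepsilon} this shows $\negation{c}$ is a negative sequence for {\ADPDR}.

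I would also note, in passing, that the index bookkeeping matches: $\vec c$ ranges over $c_k,\dots,c_{n-1}$ with $1\le k\le n$, and $\negation{c}$ inherits exactly these indices, so the quantifier ranges in Definition~\ref{def:neg_seq} are satisfied verbatim. I do not expect any genuine obstacle here: the whole proof is a routine transport of (C1) and (C2) across the complement/upper-closure isomorphism $\neg\colon (L^\uparrow,\supseteq)\cong(L^\downarrow,\subseteq)$ combined with monotonicity of $b$; the only thing one must be careful about is getting the direction of the implications right when passing to complements (the inclusion of lower sets corresponds to a reverse implication about non-membership), which is precisely what the bridging fact in the first paragraph is there to handle. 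If a cleaner presentation is desired, the two conditions can be phrased once and for all through the isomorphism $\neg$ and the observation that $b^\downarrow_r$ conjugates, under $\neg$, to the ``preimage under $b$ of upper sets'' operator, but the direct verification above is shortest.
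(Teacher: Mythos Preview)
Your proof is correct and follows essentially the same approach as the paper: both verify \eqref{eq:Pepsilon} and \eqref{eq:negativeG} directly by unfolding $\negation{c}_j=\lnot(\{c_j\}^\uparrow)$ and $b^\downarrow_r$, then appealing to (C1) for the first condition and to (C2) plus monotonicity of $b$ (via the contrapositive) for the second. The only cosmetic difference is that the paper writes the argument for \eqref{eq:negativeG} as a chain of set inclusions while you phrase it as an implication and its contrapositive.
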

\begin{proof}[Proof of Proposition \ref{prop:negLTPDR}]
First, we show that $p \in \negation{c}_{n-1}$. Since $c_{n-1}\not \sqsubseteq p$, by (C1), then $p \not\in \{c_{n-1}\}^\uparrow$. Thus $p \in \neg ( \{c_{n-1}\}^\uparrow)$, that is $p \in \negation{c}_{n-1}$.

Then, we show that $b_r^\downarrow(\negation{c}_{j+1}) \subseteq \negation{c}_j$. 
\begin{align*}
b_r^\downarrow(\negation{c}_{j+1})  &= b_r^\downarrow(\neg(\{c_{j+1}\}^\uparrow)) & \tag{def.}\\
&  = \{x \mid b(x) \notin (\{c_{j+1}\}^\uparrow \} & \tag{def.}\\
&  =  \{x \mid c_{j+1} \not \sqsubseteq b(x) \}  & \tag{def.}\\
&  \subseteq  \{x \mid b(c_j) \not \sqsubseteq b(x) \}  & \tag{(C2)}\\
&  \subseteq  \{x \mid c_j \not \sqsubseteq x \}  & \tag{mon. of $b$}\\
&  =  \neg(\{c_j\}^\uparrow)  & \tag{def.}\\
& =  \negation{c}_j &\tag{def.}
\end{align*}
\qed
\end{proof}

\begin{proof}[Proof of Proposition \ref{prop:multipleLTPDR}]
Since each of the $\vec{c^m}$ is a Kleene sequence, then for all $m\in M$, then $\negation{c^m}$ is, by Proposition \ref{prop:negLTPDR}, a negative sequence.
Thus, by Lemma \ref{prop:meetneg} (in Appendix~\ref{app:semilattices}), their intersection is also a negative sequence.
\qed
\end{proof}

%
%
%

\begin{proof}[Proof of Proposition \ref{prop:LTPDRfinal}]
Since $\vec{c} = c_0, \dots, c_{n-1}$ is a Kleene sequence, then $\negation{c}=\lnot (\{ c_k \}^{\uparrow}), \dots, \lnot (\{ c_{n-1} \}^{\uparrow})$ is, by Proposition \ref{prop:negLTPDR}, a negative sequence.
 Thus, by \eqref{negativefinal}, for all $j\in [k,n-1]$,$(b_r^\downarrow)^{n-1-j}(p^\downarrow) \subseteq  \neg(\{c_j\}^\uparrow)$. Thus  $\neg (b_r^\downarrow)^{n-1-j}(p^\downarrow) \supseteq \{c_j\}^\uparrow$ and thus $c_j \in \neg (b_r^\downarrow)^{n-1-j}(p^\downarrow) = \neg (Y_j)$.
\qed
\end{proof}

\section{Proofs of Section \ref{sec:MDP}}

In this appendix we illustrate the proofs about the heuristics introduced in Section \ref{sec:MDP} to deal with the max reachability problem for MDPs.


\begin{proof}[Proof of Corollary \ref{cor:ADPDRtermination}]
Using Theorem \ref{thm:negativetermination}, it is enough to prove that, for all indexes $(n,k)$, the set of all possible choices for (Candidate) and (Decide), denoted by $h(\mathit{CaD}^h_{n,k})$, is finite. For simplicity, in this proof let $\mathit{CaD}^h_{n,k} = \mathit{Ca}^h_{n,k} \uplus D^h_{n,k}$, where 
$\mathit{Ca}^h_{n,k} \defeq \{ s\in \mathit{CaD}^h_{n,k} \mid s \tr{\mathit{Ca}}\}$ is the set of reachable $(n,k)$-indexed states that trigger (Candidate) and 
$D^h_{n,k} \defeq \{ s\in \mathit{CaD}^h_{n,k} \mid s\tr{D}\}$  is the set of reachable $(n,k)$-indexed states that trigger (Decide).
We prove that both images $h(\mathit{Ca}^h_{n,k})$ and $h(D^h_{n,k})$ are finite so that also $h(\mathit{CaD}^h_{n,k})$ is such.
\begin{itemize}
\item In (Candidate), any heuristics $h$ defined as in \eqref{eq:secondterminatingheuristics} always choose $z=p^\downarrow$. 
Therefore the image $h(\mathit{Ca}^h_{n,k})$ has cardinality $1$ for all $(n,k)$.
\item In (Decide), when $k=n-1$, any heuristic $h$ defined as in  \eqref{eq:secondterminatingheuristics} may select any lower set $\{d \mid b_\alpha(d)\in p^\downarrow\}$ for some function $\alpha\colon S \to A$. Since, there are $|S|^{|A|}$ of such functions $\alpha$, then there are at most $|S|^{|A|}$ of such lower set. Thus the set $h(D^h_{n,n-1})$ has at most cardinality $|S|^{|A|}$. 

When $k=n-2$, the heuristic $h$ may select any lower set $\{d \mid b_{\alpha_2}(b_{\alpha_2}(d))\in p^\downarrow\}$ for any two functions $\alpha_1,\alpha_2\colon S \to A$. Thus, the set $h(D^h_{n,n-2})$ has at most cardinality $|(|S^{|A|})^2$.

One can easily generalise these cases to arbitrary $j\in [1,n]$, and prove with a simple inductive argument that the cardinality of $h(D^h_{n,n-j})$ is at most $(|S|^{|A|})^j$. Since both $S$ and $A$ are finite, then the set $h(D^h_{n,k})$ is finite for all indices $(n,k)$. 
\qed
\end{itemize}
\end{proof}

\begin{proof}[Proof of Proposition \ref{prop:genlegit}]
We need to prove that the choices of \verb|hCoB| and \verb|hCo01| for (Candidate), (Decide) and (Conflict) respect the constraints imposed by {\ADPDR}.

\begin{itemize}
\item For (Candidate), both \verb|hCoB| and \verb|hCo01| take $Z = p^\downarrow$. We need to prove that $x_{n-1} \not \in Z$ and $p\in Z$. 
\begin{itemize}
\item For the former, recall that the guard of (Candidate) is $x_{n-1} \not \sqsubseteq p$. Thus $x_{n-1}\not \in p^\downarrow = Z$.
\item The second is trivial: $p\in p^\downarrow =Z$. 
\end{itemize}

%
\item For (Decide), both \verb|hCoB| and \verb|hCo01|  take $Z=\{d\mid b_\alpha(d) \in Y_k\}$ for an $\alpha$ such that $b_\alpha(x_{k-1})\notin Y_k$. We need to prove $x_{k-1}\notin Z$ and $b_r^\downarrow(Y_K)\subseteq Z$.
\begin{itemize}
\item Since $b_\alpha(x_{k-1})\notin Y_k$, then $x_{k-1}\notin \{d\mid b_\alpha(d) \in Y_k\}=Z$.
\item To see that $b_r^\downarrow(Y_k)\subseteq Z$, it is enough to observe that $b^\downarrow_r (Y_k) = \{d \mid b(d) \in Y_k\}
= \bigcap_{\alpha}\{d \mid b_{\alpha} (d)\in Y_k\} \subseteq \{d \mid b_{\alpha} (d)\in Y_k\} = Z$.
\end{itemize}

\item For (Conflict), we start with \verb|hCoB| and show later \verb|hCo01|. 

Let us consider first the case  $\mathcal{Z}_k=\emptyset$, then \verb|hCoB| chooses $z_B=b(x_{k-1})$. The proof is the same as for Proposition \ref{prop:CanonicalChoice}.4.

Let us consider now the case $\mathcal{Z}_k \neq \emptyset$, and let $z_k = \bigwedge \mathcal{Z}_k$, so that
\begin{equation*}
z_{B} \defeq
\begin{cases*}
		z_k(s)  & if $r_s \neq 0$ \\
		b(x_{k-1})(s) & if $r_s = 0$
\end{cases*}
\end{equation*}
 We first prove  $z_k\in Y_k$ and $b(x_{k-1} \wedge z_k) \leq z_k$.
\begin{itemize}
\item Since $\mathcal{Z}_k \neq \emptyset$, then there should be at least a $d \in \mathcal{Z}_k$. By definition, $d$ is a (convex) generator of $Y_k$ and thus $d\in Y_k$. Since $z_k=\bigwedge \mathcal{Z}_k \leq d$ and since $Y_k$ is, by definition, a lower set, then $z_k\in Y_k$.
\item By definition of $\mathcal{Z}_k$, $b(x_{k-1})\leq d$, for all $d\in \mathcal{Z}_k$. Thus $b(x_{k-1})\leq \bigwedge \mathcal{Z}_k =z_k$. Therefore $b(x_{k-1} \wedge z_k) \leq b(x_{k-1}) \leq z_k$.
\end{itemize}
Now let us show that $z_B\in Y_k$ and $b(x_{k-1} \wedge z_B) \leq z_B$.
\begin{itemize}
\item Since $Y_k=\{d\in [0,1]^S \mid \sum_{s\in S}(r_s \cdot d(s)) \leq r \}$ and $z_B$ differs from $z_k$ only when $r_s=0$, we have $\sum_{s\in S}(r_s \cdot z_B(s)) = \sum_{s\in S}(r_s \cdot z_k(s)) \leq r$, because we already proved that $z_k\in Y_k$.
\item We know that $b(x_{k-1})\leq z_k$. Then, for any $s\in S$ it follows that $b(x_{k-1})(s)\leq z_k(s) = z_B(s)$ if $r_s \neq 0$, and $b(x_{k-1})(s) = z_B(s)$  if $r_s = 0$. Therefore $b(x_{k-1})\leq z_B$, from which we get $b(x_{k-1} \wedge z_B) \leq b(x_{k-1}) \leq z_B$.
\end{itemize}

Finally, we focus on \verb|hCo01|: we need to prove $z_{01}\in Y_k$ and $b(x_{k-1} \wedge z_{01}) \leq z_{01}$.
\begin{itemize}
\item Since $Y_k=\{d\in [0,1]^S \mid \sum_{s\in S}(r_s \cdot d(s)) \leq r \}$ and $z_{01}$ differs from $z_B$ only when $r_s=0$, we have $\sum_{s\in S}(r_s \cdot z_{01}(s)) = \sum_{s\in S}(r_s \cdot z_B(s)) \leq r$, because we already proved that $z_B\in Y_k$.
\item We know that $b(x_{k-1})\leq z_B$. 
Then, for any $s\in S$ it follows that $b(x_{k-1})(s) \leq \lceil z_B(s)\rceil = z_{01}(s)$  if $r_s = 0$ and $\mathcal{Z}_k\neq \emptyset$, and that $b(x_{k-1})(s)\leq z_B(s) = z_{01}(s)$ otherwise. Therefore $b(x_{k-1})\leq z_{01}$, from which it readily follows $b(x_{k-1} \wedge z_{01}) \leq b(x_{k-1}) \leq z_{01}$.
\qed
\end{itemize}
\end{itemize}
\end{proof}

\end{document}